\documentclass[12pt]{article}
\usepackage{authblk}
\usepackage{geometry}
\usepackage{amsmath}
\usepackage{amssymb}
\usepackage{amsthm}
\usepackage{mathrsfs}
\usepackage{customcommands}
\usepackage{bm}
\usepackage{Nettastyle}
\usepackage{dsfont}
\usepackage{comment}
\usepackage[utf8]{inputenc}
\usepackage{calc}
\usepackage{accents}
\usepackage{amsfonts}
\newtheorem{assump}{Assumption}

\usepackage{braket}
\usepackage[normalem]{ulem}
\usepackage{color}

\usepackage{ulem}
\usepackage{mathtools}
\usepackage{tensor} 
\usepackage{tikz}
\usetikzlibrary{arrows, positioning, quotes, intersections}
\usepackage{subcaption}
\usepackage{graphicx}  

\usepackage{thmtools, thm-restate}

\newtheorem{conjecture}{Conjecture}

\newcommand{\cyan}[1]{\iffalse\textcolor{cyan}{#1}\fi}

\newcommand{\mK}{\mathcal{K}}

\newcommand{\mH}{\mathcal{H}}

\DeclareMathOperator{\tr}{tr}

\title{Cryptographic Censorship}

\author[1]{Netta Engelhardt,}
\author[1]{{\AA}smund Folkestad,}
\author[1]{Adam Levine,}
\author[1,2]{Evita Verheijden,}
\author[1,3]{\\and Lisa Yang \vspace{-0.1cm}}

\affiliation[1]{Center for Theoretical Physics, Massachusetts Institute of Technology, \\Cambridge, MA 02139, USA}
\affiliation[2]{Black Hole Initiative at Harvard University, \\ 20 Garden Street, Cambridge, MA 02138, USA}
\affiliation[3]{Computer Science and Artificial Intelligence Laboratory,\\
Massachusetts Institute of Technology, Cambridge, MA 02139, USA}
\emailAdd{engeln@mit.edu}
\emailAdd{afolkest@mit.edu}
\emailAdd{arlevine@mit.edu}
\emailAdd{evitamhv@mit.edu}
\emailAdd{lisayang@mit.edu}

\abstract{We formulate  and take two large strides towards proving a quantum version of the weak cosmic censorship conjecture. We first prove ``Cryptographic Censorship'': a theorem showing that when the time evolution operator of a holographic CFT is approximately pseudorandom (or Haar random) on some code subspace, then there must be an event horizon in the corresponding bulk dual. This result provides a general condition that guarantees (in finite time) event horizon formation, with minimal assumptions about the global spacetime structure. Our theorem relies on an extension of a recent quantum learning no-go theorem and is proved using new techniques of pseudorandom measure concentration. To apply this result to cosmic censorship, we separate singularities into classical, semi-Planckian, and Planckian types. We illustrate that classical and semi-Planckian singularities are compatible with approximately pseudorandom CFT time evolution; thus, if such singularities are indeed approximately pseudorandom, by Cryptographic Censorship, they cannot exist in the absence of event horizons. This result provides a sufficient condition guaranteeing that seminal holographic results on quantum chaos and thermalization, whose general applicability relies on typicality of horizons, will not be invalidated by the formation of naked singularities in AdS/CFT.}

\begin{document}


\newcommand{\beq}{\begin{equation}} 
\newcommand{\eeq}{\end{equation}}
\renewcommand{\[}{\left[}
\renewcommand{\]}{\right]}
\renewcommand{\(}{\left(}
\renewcommand{\)}{\right)}
\renewcommand{\gets}{\leftarrow} 

\newcommand{\Nat}{\mathbb{N}} 
\newcommand{\poly}{\mathrm{poly}}
\newcommand{\Hn}{n} 
\newcommand{\Hd}{d} 
\newcommand{\secp}{\kappa} 
\newcommand{\negl}{\mathrm{negl}} 
\newcommand{\avg}{\mathrm{avg}}
\newcommand{\avgover}[1]{\underset{#1}{\mathrm{avg}}} 
\newcommand{\const}{\alpha} 
\newcommand{\POVM}{E} 
\newcommand{\operator}{Q} 
\newcommand{\swap}{\mathsf{S}} 
\newcommand{\SWAP}{\mathrm{SWAP}} 
\newcommand{\Dist}{\mathsf{Dist}} 
\newcommand{\HCP}{\mathrm{HCP}} 
\newcommand{\deltaMC}{\exp{\left(- \epsilon^2 \unidim/48\right)}}
\newcommand{\deltaMChigh}{e^{- \epsilon^2 \unidim/48}}
\newcommand{\deltaMCerror}{e^{-(\te - \hat{\epsilon})^2 \unidim/48}}
\newcommand{\deltaMCp}{e^{- (\te + \hat{\epsilon})^2 \unidim/48}}
\newcommand{\subdist}{k}
\newcommand{\deltaMCU}{e^{-\epsilon^2 d/48(1+K)^2}}
\newcommand{\deltaMCS}{e^{-\epsilon^2 d/192}}

\newcommand{\Hil}{\mathcal{H}} 
\newcommand{\code}{\mathrm{code}} 
\newcommand{\unidim}{d}
\newcommand{\key}{k} 
\newcommand{\PRU}{U} 
\newcommand{\PRUens}{\mathcal{U}} 
\newcommand{\PRS}{\psi} 
\newcommand{\PRSens}{\Psi} 
\newcommand{\ketPRS}{\ket{\psi}} 
\newcommand{\Alg}{\mathcal{A}} 
\newcommand{\avgf}[1]{{\langle F \rangle^{#1}}} 
\newcommand{\te}{\tilde{\epsilon}}

\newcommand{\HN}{N} 
\newcommand{\CWop}{\mathcal{O}} 
\newcommand{\HKLLop}{\mathcal{O}} 
\maketitle

\section{Introduction}\label{sec:intro} 
Curvature singularities are ubiquitous in General Relativity: the evolution of smooth initial data generically results in such singularities~\cite{Pen64,Haw66,HawPen70}. Observationally, however, evidence for singularity formation from generic gravitational processes (such as matter collapse) remains at best indirect. This apparent absence of directly-observable singularities is often explained by the general expectation that the formation of horizons is likewise generic: so singularities are usually cloaked behind event horizons. This hypothesis --- that singularities resulting from initial data evolution should be generically hidden behind horizons --- is known as the Weak Cosmic Censorship Conjecture (WCCC)~\cite{Pen65}. It is a cornerstone assumption of almost all results in modern gravitational physics, from classical theorems such as the Hawking area law~\cite{Haw71}, to key results from the general ``it from qubit'' program. In particular, the expectation that CFT thermalization corresponds to black hole equilibration, and more generally that black hole statistics are reflected in CFT thermal states, crucially depends on the assumption that states with horizons are typical in some measure.\footnote{This is not identical to the firewall sense of typicality of~\cite{MarPol13}. See footnote~\ref{fn:typical} on page \pageref{fn:typical}.} If the WCCC were false, naked singularities could be as typical as black holes; our understanding of quantum chaos, eigenstate thermalization, and black hole information in holography would be called into question.

Surprisingly, it turns out that naked singularities do exist in solutions to the Einstein equation satisfying various energy conditions and regular initial data, and these states do \textit{not} constitute a measure zero set.  There is by now a large repertoire of counterexamples to the WCCC: in more than four spacetime dimensions, the studies of Gregory--Laflamme type instabilities of~\cite{GreLaf94} have illustrated the existence of ``pinch-off'' singularities as the end-stage of certain black holes (e.g., the black string~\cite{LehPre10}) ---  see~\cite{Gre11} for a review; more recently,~\cite{HorSan16, CriSan16, CriHor17, HorSan19} constructed extended naked singularities in asymptotically AdS$_{4}$ spacetimes with various matter profiles; see also~\cite{Fol22}. A similar mechanism was used by~\cite{EpeBog20} to construct a likely counterexample in the four-dimensional asymptotically flat setting. 

It thus appears that horizons and singularities are not related in any trivial or obvious way: curvature singularities may form outside (or in the absence) of any horizons, and horizons may form without singularities behind them~\cite{Bar68}. There is no extant formulation of WCCC that is (1) not violated and (2) sufficient to guarantee the consistency of seminal results in AdS/CFT such as the universality of the saturation of the chaos bound and the bulk manifestation of the CFT thermal partition function. 

Some progress towards understanding cosmic censorship in quantum gravity was made in ~\cite{HorSan16, CriSan16,CriHor17}, where it was noted that many (though not quite all) of the counterexamples also violate the Weak Gravity Conjecture~\cite{ArkMot06}, suggesting the possibility that the WCCC in four dimensions is valid in the landscape but violated in the swampland. Further evidence in favor of this conclusion was presented in~\cite{Fol22}, which constructed large sets of AdS$_{4}$ initial data satisfying the gravitational constraints and the null energy condition whose future development must include naked singularities, but these theories violated properties required of theories admitting a UV completion via holography~\cite{EngHor19}. This hypothesis would suggest that while classical General Relativity may violate the WCCC, the semiclassical limit of quantum gravity does not. 

However, this na\"ive ``swampland censorship'' scenario is clearly incomplete. In higher dimensions, the Gregory--Laflamme instability is not excluded by any known swampland constraints; moreover the inclusion of quantum corrections is expected to lead to black hole evaporation, which also violates the WCCC. However, the absence of naked singularities in the observable universe together with the top down arguments from quantum gravity suggest that \textit{some} version of WCCC remains valid: intuitively, a commonality between the Gregory--Laflamme instability and the evaporating black hole is that there is a heuristic sense, which in this article we will make precise, that the naked part of the singularities is ``small''. See Fig.~\ref{fig:endpoint-evap}.
\begin{figure}
    \centering
    \includegraphics[scale=0.75]{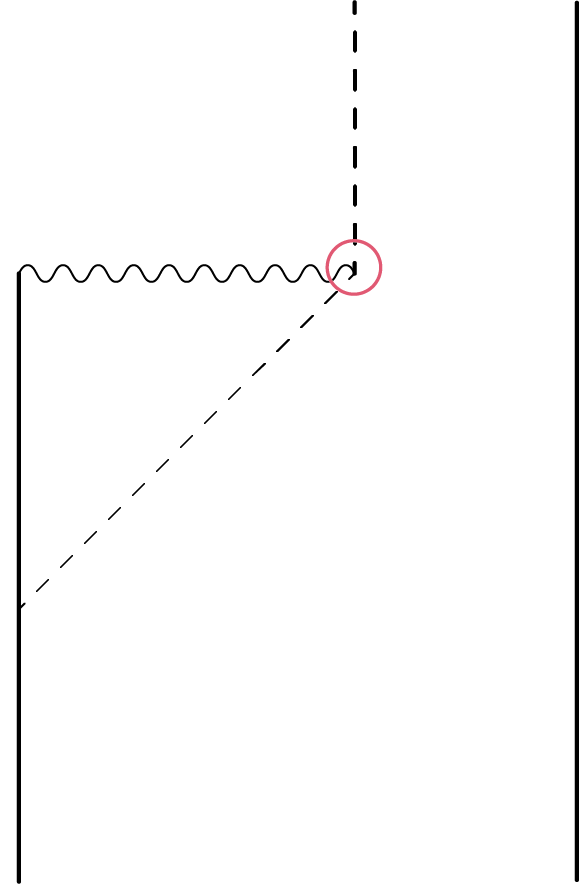}
    \caption{The evaporation point of a black hole is, in some sense that we make precise in Sec.~\ref{sec:size-of-sings}, ``small''.}
    \label{fig:endpoint-evap}
\end{figure}
That is, naked singularities can exist, but they must only be visible to asymptotic observers for a ``short" period of time. Furthermore, they can only exist in spacetimes that have horizons --- so that the bulk of the singularity can remain hidden.

A formulation of such a ``quantum gravitational'' version of cosmic censorship -- which we propose later in this article -- and its subsequent proof would require as \emph{sine qua non} both (1) a condition that can guarantee the formation of horizons in broad generality, and (2) a definition of a ``small'' naked singularity. 

Point (1) is a notoriously difficult open problem. Event horizons are by definition teleological: an event horizon is defined with reference to future infinity. Consequently, very few extant results can guarantee the existence of such a global construct even under very restrictive assumptions (and none guarantee that a given event horizon will successfully hide the full extent of a curvature singularity). See~\cite{ Tho72, ShaTeu91, ChoPre09} for a series of failed attempts to construct criteria to this effect. Addressing point (1) would also have the potential to shed light on a number of other research programs, such as thermalization in AdS/CFT (see e.g. \cite{CheWil15} for a review) and the firewall problem~\cite{AMPS, AMPSS, MarPol13}. Point (2) has seen various approaches (see~\cite{Emp20}), but none of those have successfully made contact with the generic existence of horizons, since (1) has thus far remained unresolved.

Here we accomplish both (1) and (2) in the context of AdS/CFT: we give a general guarantee for the existence of event horizons at finite time, and criteria for the size of a singularity that make contact with the guarantee for event horizons. In particular, we classify singularities into three categories: classical, semi-Planckian and Planck-sized, and we motivate an assumption on the behavior of singularities that would guarantee the existence of horizons at finite time --- via point (1) --- in spacetimes containing such classical or semi-Planckian singularities. In the rest of this section, we give a rough outline of the central ingredients of (1) and (2).

\subsubsection*{A General Guarantee for Horizon Formation}

Under the standard assumption of global hyperbolicity,\footnote{We really mean AdS hyperbolicity here, which is essentially global hyperbolicity when the asymptotic boundary is included. See~\cite{Wal12} for the precise definition.}  which guarantees various consistency results such as unitary invariance of the von Neumann entropy and entanglement wedge nesting, there are various sufficient conditions that may guarantee existence of an event horizon, such as exponential reconstruction complexity or chaos. However, global hyperbolicity \textit{implies} the WCCC as a consequence: almost all results ensuring consistency of the AdS/CFT dictionary assume cosmic censorship --- even though it is known that cosmic censorship can be violated in perfectly reasonable AdS/CFT setups (e.g., the evaporating AdS black hole). This strongly suggests that while the WCCC is sufficient for consistency, it is not in fact necessary. We will argue for a more minimalistic form of censorship, which \textit{is} necessary;\footnote{We hope that this sentence is not taken out of context.} it remains to be seen whether this version is also sufficient.

Our goal is to find a cosmic censorship conjecture that follows directly from consistency of the AdS/CFT dictionary. Clearly, an essential part of a holographic formulation of cosmic censorship is the holographic dual of an event horizon: we are looking for the boundary dual to the process of bulk horizon formation. This specifically means that we are interested in the existence of a region outside of the causal wedge of the entire asymptotic boundary. The separation between the causal (or simple~\cite{EngWal17a, EngWal18, EngPen21b}) wedge and the entanglement wedge has been previously linked in~\cite{EngPen21a, EngPen21b, AkeEng22} to the separation between polynomial and exponential quantum complexity classes in the dual theory (see also \cite{BouFef19} and footnote~\ref{fn:BFV}). Our approach to finding a guarantee for the formation of event horizons will thus focus on reconstruction restrictions by complexity class.

More concretely, we ask when there is a guarantee for the existence of some operator $Q$ in the large-$N$ limit that cannot be reconstructed from knowledge of the causal wedge alone, given that the latter can be reconstructed efficiently, e.g., by jumping into the bulk, making a measurement, and sailing back towards the boundary.\footnote{By an efficient reconstruction, we mean one that can be implemented by a quantum circuit whose size is polynomially bounded in the logarithm of the dimension of the code subspace under consideration.} 

In~\cite{YanEng23}, two of us provided such a guarantee at large but finite $N$ under certain assumptions. The full statement of the relevant theorem will be reviewed at length in Sec.~\ref{sec:PR-and-learning-result}; here we provide a heuristic statement only, starting with the most significant assumption: that the action of the fundamental time evolution operator $U$ of the system is well modeled on our code subspace by a \textit{pseudorandom} unitary~\cite{JiLiu18} (we use ``fundamental'' to refer to the full quantum description of the gravitational system, as in~\cite{AkeEng22}). In AdS/CFT, this assumption is valid whenever the time evolution of the CFT on a particular choice of code subspace is well approximated by a pseudorandom unitary at any finite value of $N$;\footnote{We will eventually work at strictly infinite $N$, but then we will work only with time-evolved states rather than the unitary operator that generates that time evolution, since the aforementioned unitary operator is not well-defined in the strict infinite $N$ limit.} this is not dissimilar from the assumption of e.g.~\cite{HayPre07} that the fundamental time evolution of a black hole is well modeled by a two-design or as in~\cite{AkeEng22, KimPre22} by a (pseudo)random unitary. Pseudorandom unitaries are operators drawn from an ensemble $\{U_{k}\}$ that are indistinguishable from Haar random unitaries by any efficient quantum algorithm. Our results below also hold for Haar random unitaries, but the advantage of pseudorandom unitaries over Haar random unitaries is that by definition they can be constructed by an efficient process. While existence of pseudorandom unitaries has not been explicitly demonstrated yet, candidate constructions have appeared in, e.g., \cite{JiLiu18}. We will not make use of any such explicit constructions.  

Under this first assumption of pseudorandom unitary (PRU) dynamics, the theorem of~\cite{YanEng23} considers any efficient quantum algorithm ${\cal A}$ that:
\begin{enumerate}
    \item[(a)] Prepares polynomially complex states $\ket{\varphi}$ and has access to the action of the time evolution operator $U$ on the expectation values of polynomially complex observables in such states; 
    \item[(b)] Uses the information in (a) to attempt to learn $U$ (or a circuit representation of $U$), the pseudorandom time evolution operator, thereby producing some operator $\widehat{U}$; 
    \item[(c)] After completing (b), is handed the actual state of the system $\ket{\psi}$ and is asked to predict $U$ acting on $\ket{\psi}$ using the outcome of (b).
\end{enumerate}

The theorem states that on average  --- and as we will show in Sec.~\ref{sec:complearnholog}, for typical pseudorandom states and unitaries\footnote{\label{fn:typical}Caution: this is not the typical use of `typical'. Here we will use the qualifier typical specifically for states in the code subspace to which measure concentration applies; this depends on the choice of distribution of states under consideration. For example, we may consider ``typical Haar random states'' or ``typical pseudorandom states''. We will always note the distribution under discussion. This nomenclature is not to be confused with the previously used `generic', which refers to an open subset of the space of solutions to the Einstein equation.} ---  any efficient algorithm will fail to produce an accurate guess for $U$ acting on $\ket{\psi}$.\footnote{Note that \emph{prima facie}, the fact that we extend Theorem 2 of \cite{YanEng23} to any fixed, typical $U$ and $\psi$ appears to contradict some of the statements in the introduction of~\cite{YanEng23}, i.e., that randomness is a \textit{sine qua non} for a quantum learning no-go theorem. This appears to be the case because if $\PRU=\PRU_0$ is fixed, there exists a trivial $\Alg^U$ that just exactly outputs $\PRU_0$. The point is that $U_0$ is here `atypical', and the algorithm will fail for all other $U\neq U_0$.} That is, an efficient algorithm's best guess $\widehat{U}$ for $U$ will satisfy:
\be
| \bra{\psi}\widehat{U}^{\dagger}U\ket{\psi}|^2 \leq1-\alpha~, 
\ee
where $\alpha$ is an $O(1)$ constant, independent of the size of the Hilbert space in which $\ket{\psi}$ lives. This result in particular guarantees the existence of a distinguishing operator $Q$ whose expectation value differs  between $\widehat{U}\ket{\psi}$ and $U\ket{\psi}$ by at least $2\alpha$.  

We now apply this result to our gravitational setup using the fact that reconstruction is a particular type of learning. We will consider code subspaces ${\cal H}_{\rm code}$ of dimension $e^{O(G_{N}^{a})}$, where $-1\leq a<0$, which do not necessarily have horizons in typical states. For example, a microcanonical window of width $O(N)$; another example is a set of states where the leading order contribution to the geometry is the same for all of the states at least for a time of $O(G_{N}^{0})$. We will show that it is easy to prepare candidates for such code subspaces without obviously incurring horizons. We next assume that the time evolution is \textit{gravitationally} pseudorandom: we will define this term more precisely in Sec.~\ref{sec:complearnholog}, but roughly speaking it denotes an $O(G_{N}^{a})$ amount of pseudorandomness; we find that this amount of pseudorandomness is generally compatible with an ${\cal H}_{\rm code}$ of size $e^{O(G_{N}^{a})}$, which are the code subspaces that we consider. Next, we argue for the existence of an efficient algorithm that, as part of the learning phase, can reconstruct the causal wedge of any efficiently-preparable state in this code subspace. Under certain technical assumptions, there is an algorithm that, given access to some CFT data, is able to output a guess $\widehat{U}$ for the time evolution of the initial state that accurately reproduces the expectation values of all operators in the causal wedge. However, the theorem of~\cite{YanEng23} guarantees that the fidelity between the predicted time evolution and the actual time evolved state is low, which implies that there exists a distinguishing operator $Q$ whose expectation values differ at $O(1)$ between the actual time evolved state and the predicted time evolution of the state. 

Since there exists a learning algorithm that can accurately reconstruct the time evolution of all operators in the causal wedge but cannot accurately reconstruct the expectation value of some operator $Q$, which we show exists in the large-$N$ limit, we immediately find that the dual to $Q$ lies outside of the causal wedge. Thus, the boundary of the causal wedge is nontrivial and an event horizon must exist.

We call this result \textit{Cryptographic Censorship}: 

\begin{center}
    \textit{(Pseudo)random dynamics guarantee event horizon formation in typical (pseudo)random states.} 
\end{center}

This is the first general condition that guarantees the formation of horizons without requiring asymptotic time evolution, thus addressing the first task above: the prescription of a broadly applicable quantum complexity theoretic diagnostic of event horizon formation.\footnote{Here we are defining an event horizon as $\partial J^{-}[\mathscr{I}]$, so this includes holographic cosmologies in AdS.}  This result can be thought of as a quantum instability of generic asymptotically AdS spacetimes, assuming that typical pseudorandom dynamics are a good model for (sufficiently long) time evolution of generic spacetimes. We may apply this result to our two examples of code subspaces: microcanonical windows of width $O(N)$ and states with an identical leading order geometry at some fixed time. In the case of the former, we find that pseudorandom (respectively, random) states in a given microcanonical window are  exponentially (respectively, doubly exponentially) likely to have horizons. In the case of the latter, we find that whenever it is possible to squeeze a gravitational amount of pseudorandomness into an $O(1)$ time band on the boundary, an event horizon must form.  

Before we address the second task mentioned above --- classifying the size of a singularity --- let us motivate why pseudorandom time evolution is natural in strongly gravitating systems. There is a general expectation that some amount of (approximate) randomness is crucial for understanding black hole dynamics: due to scrambling, many aspects of black holes appear to be well modeled by Haar random unitary dynamics or sometimes by the weaker notion of $k$-designs~\cite{HayPre07}. The chaotic nature of black hole physics~\cite{SheSta14,RobSta14, MalShe15} and the connection between randomness and chaos~\cite{CotGur16} have strengthened this expectation. However, we also expect that time evolution in quantum gravity can be efficiently implemented. Haar random unitaries and (exact) $k$-designs do not have this property. We thus find pseudorandom unitaries to be a well-motivated model of black hole dynamics. On the boundary side of the AdS/CFT duality, many holographic CFTs are known to exhibit chaotic dynamics in typical (high energy) states. Here, we will show that sufficiently pseudorandom unitary (boundary) time evolution in fact \textit{guarantees} the existence of an event horizon in typical states (and thus, by definition, of a black hole).\footnote{\label{fn:BFV}At this point it is worth noting that \cite{BouFef19} also discussed pseudorandomness in the context of reconstruction complexity: they showed that an easily prepared set of black hole microstates --- specifically black holes formed by collapse together with the application of some randomly chosen shocks --- should form a pseudorandom ensemble, and that as a consequence, boundary reconstruction of the volume of the black hole interior (for a code subspace containing all these states) must be exponentially complex. 
Our aim here is notably different: crucially, we do not start with an ensemble of black hole microstates, but instead show that gravitational pseudorandom boundary dynamics \textit{implies} the existence of an event horizon. }

Thus far we have argued that pseudorandomness is a valid model for time evolution of strongly gravitating systems using established facts about black holes. We shall postpone motivating pseudorandomness as a model for time evolution in the context of naked singularities until we have defined what we mean by ``large'' and  ``small'' naked singularities. We therefore proceed to give an overview of this second task. 

\subsubsection*{Size of a Naked Singularity}

Intuitively, the naked singularities that we would like quantum gravity to exclude are classically extended with divergent curvatures, and emitting large amounts of observable, causally propagating radiation as prescribed by some fundamental quantum gravity evolution. 

\begin{figure}
    \centering
    \includegraphics[scale=0.75]{{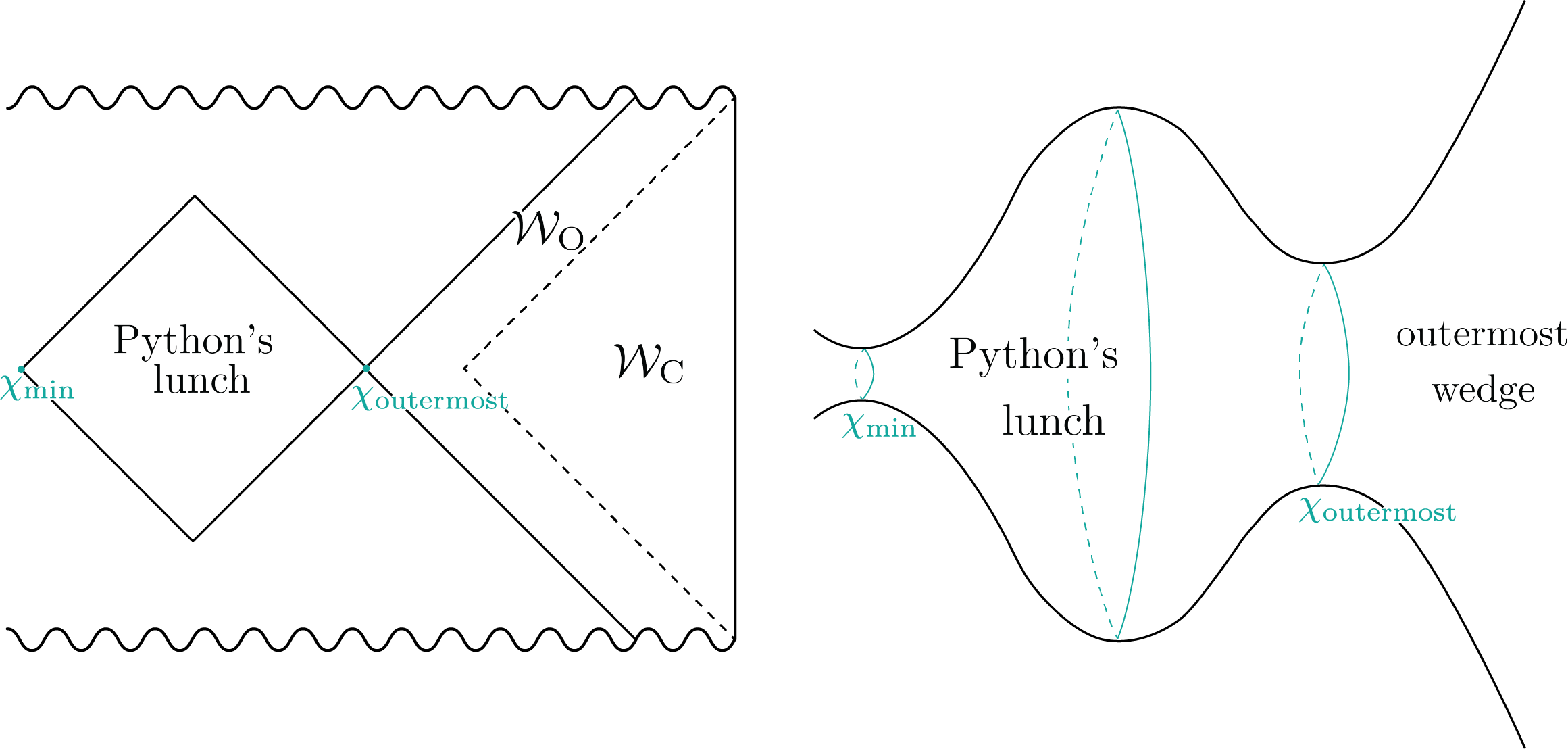}}
    \caption{A spacetime with a Python's Lunch. In the Penrose diagram on the left, the lunch, causal wedge ($\mathcal{W}_{\rm C}$) and the wedge of the outermost QES ($\mathcal{W}_{\rm O}$) are indicated. The left and right green dots are the minimal and outermost QES, respectively. The right panel depicts a timeslice.}
    \label{fig:PL}
\end{figure}
In many cases, such large curvatures create (quantum) trapped surfaces and (quantum) extremal surfaces. In~\cite{EngFol20}, unitary invariance of $S_{\rm vN}$ in the dual CFT was used to prove that classical extremal (and trapped) surfaces resulting from classical singularities that do not ``evaporate'' must lie behind horizons in AdS/CFT. As a warmup, we will prove that even upon inclusion of quantum corrections and under no restrictions on the behavior of the singularity, quantum extremal surfaces (QESs) generically lie behind horizons. The argument makes essential use of the geometrization of regions of high- and low-reconstruction complexity: the causal wedge, and more generally the wedge of the outermost QES, admits an efficient reconstruction --- i.e., subexponentially complex in $\log \dim {\cal H}_{\rm code}$, where ${\cal H}_{\rm code}$ is as before the code subspace on which reconstruction is taking place~\cite{EngPen21a}, while the complementary wedge --- the so-called Python's Lunch --- does not~\cite{BroGha19}. See Fig.~\ref{fig:PL}. In particular, acting on the Python's Lunch is an exponentially difficult (in $\log \dim {\cal H}_{\rm code}$) task in the boundary theory. If any QES lies in causal contact of the boundary, it is possible to act on the lunch by throwing in some timelike probe, as illustrated in Fig.~\ref{fig:QES-observer}. This is a simple task, and thus cannot happen: QESs --- minimal or otherwise --- must therefore lie outside of the causal wedge\footnote{For those readers who have previously seen the Quantum Focusing Conjecture (QFC) used to justify why QESs lie behind horizons~\cite{EngWal14,BouFis15a}, note that all such proofs to date have assumed cosmic censorship (in the form of global hyperbolicity). We prove this result using the holographic dictionary instead of using the QFC.} of the relevant region.\footnote{Any readers confused about the islands outside the horizon phenomenon~\cite{AlmMah19b,BouPen23} should recall that the horizons of~\cite{AlmMah19b,BouPen23} are not the horizons of the system for which the island is quantum extremal but are rather the horizons of a smaller system.}  

\begin{figure}
    \centering
    \includegraphics[scale=0.75]{{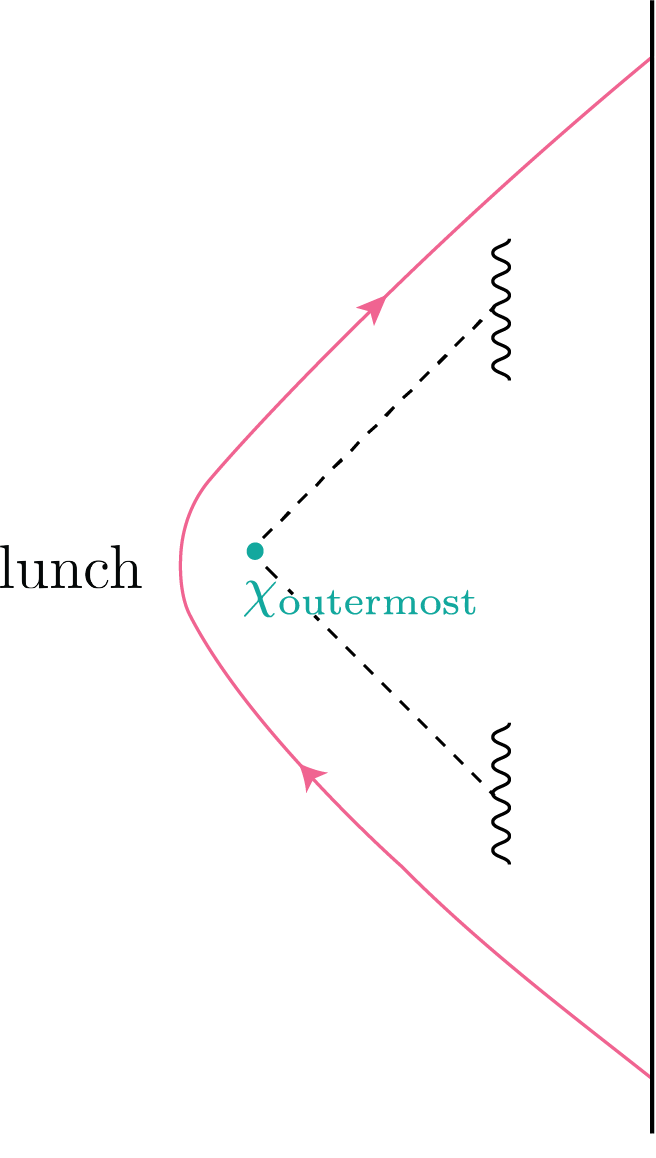}}
    \caption{A forbidden situation: a QES in causal contact with the boundary.}
    \label{fig:QES-observer}
\end{figure}

However, singularities are not always associated with (quantum) trapped  or (quantum) extremal surfaces. In these cases, the aforementioned result showing that extremal surfaces imply horizons will not be useful. However, if a singularity could be shown to be associated to a gravitational amount of pseudorandomness, then Cryptographic Censorship ensures that these singularities induce horizons. But how can we know which singularities are ``severe enough'' to accommodate gravitational pseudorandomness? As a first step towards this question, we give a precise definition of the size of a naked singularity $\Gamma$, denoted  $|\Gamma|$. This quantity measures the minimal amount of CFT time evolution required to prescribe (or equivalently, obtain) all information about the singularity. For a timelike singularity, we furthermore show that this quantity also measures how many modes within the semiclassical EFT can interact with the singularity. Of course, the dynamics of the singularity itself would not be semiclassical, but we argue that in the $\ell_{\rm Pl}\rightarrow 0$ limit, from the perspective of weakly backreacting probe fields, a timelike singularity might reasonably be conjectured to act primarily as a complicated UV-sensitive boundary condition for their evolution. Then we show that if $|\Gamma|$ does not go to zero as $\ell_{\rm Pl}\rightarrow 0$, or if it goes to zero sufficiently slowly, then the singularity is compatible with inducing gravitational pseudorandomness in the dynamics. With this in mind, we classify singularities into three groups: classical, semi-Planckian, and Planckian. We speculate that classical naked singularities, and possibly semi-Planckian ones, are forbidden by Cryptographic Censorship. Compatibility with gravitational pseudorandomness is of course necessary but not sufficient to guarantee actual gravitational pseudorandomness in the dynamics. We of course cannot give a top-down derivation that the unitary dynamics of a holographic CFT are approximately pseudorandom on such code subspaces, but we shall motivate this as a reasonable assumption for generic classical timelike singularities. Cryptographic Censorship, together with this expectation, is the basis of our subsequent conjecture of Quantum Cosmic Censorship: that classical (and perhaps semi-Planckian) singularities must be hidden behind horizons; only Planckian singularities may be visible to the asymptotic observer.

Let us now motivate the assumption of pseudorandomness as a model for singularity dynamics (which we have previously motivated for black holes; see also~\cite{KimPre22, YanEng23}). A natural source of pseudorandom dynamics associated to a singularity is due to the Planckian physics at the singularity itself --- what we described above as a UV-sensitive boundary condition. We expect such a boundary condition cannot be predicted by effective field theory. We will argue in more detail in Sec.~\ref{sec:counting} that this UV-sensitive boundary condition at the singularity could plausibly be the effect of pseudorandomness in the fundamental dynamics. 
We expect that in many cases, pseudorandomness is an accurate model for the boundary dynamics describing a singularity. When this criterion is satisfied, the theorem of~\cite{YanEng23} together with the argument outlined above guarantees that the spacetime has an event horizon.\footnote{Note that our argument does not guarantee that the singularity lies behind the event horizon, nor should it: we expect that small singularities can exist outside of horizons. Our techniques do suggest that it may be possible to limit the extent to which a singularity can lie outside of a horizon, but we leave a detailed investigation of this question to future work.} It immediately follows that \textit{whenever the fundamental dynamics of a singularity are well approximated by a pseudorandom unitary, the singularity incurs an event horizon.} 

It is worth mentioning another potential source of pseudorandom dynamics associated to a singularity: spacetime dynamics in the vicinity of the singularity might induce chaos in the probe fields. The paradigmatic example of chaos in classical gravity --- other than event horizons via the butterfly effect ---  is the BKL phenomenon~\cite{BelKha70a}. This phenomenon, which involves chaotically oscillating metric components, is expected to be generic in the descent to a spacelike singularity~\cite{BelKha70b}. BKL behavior is however \textit{only} established for spacelike singularities; it is possible that certain timelike singularities could induce similar chaos upon approach (see, e.g.,~\cite{ShaWan16}), though we remain agnostic on the plausibility of this potential source of pseudorandomness. 

We emphasize here that throughout this paper we will be considering pseudorandomness in the \textit{boundary} dynamics. Readers interested in understanding what happens in the situation of bulk pseudorandomness should refer to Sec.~\ref{sec:discussion}.

We now briefly summarize the paper. We begin in the next subsection by setting up notations and conventions. In Sec.~\ref{sec:QESsHorizons}, we prove that nonminimal QESs must lie behind a horizon. In Sec.~\ref{sec:cryptoCensor} we prove that when the time evolution in the fundamental description is sufficiently pseudorandom, the bulk spacetime contains an event horizon. We then apply this to the microcanonical window and a code subspace representing states with the same (leading order) geometry for times of $O(G_N^0)$. In Sec.~\ref{sec:singularities-pseudorandom}, we provide a definition that quantifies the size of (the naked part of) a singularity, and we show that classical (and semi-Planckian) singularities can accommodate gravitational pseudorandomness. By Cryptographic Censorship, this would then imply that such singularities must incur horizons. In Sec.~\ref{sec:singularities-pseudorandom}, we also speculate on how Cryptographic Censorship can be used to prove a version of cosmic censorship. We end with a conjecture for a new, inherently quantum gravitational form of cosmic censorship that may be consistent with known counterexamples to the standard WCCC, including the evaporating black hole. In Sec.~\ref{sec:discussion}, we end with a short discussion of our aspirations and a few possible future directions. In Apps.~\ref{app:proof-hol-learning} and \ref{app:proof-large-N}, we provide proof details for the statements of Sec.~\ref{sec:cryptoCensor}.

\subsection{Preliminaries}\label{sec:notation} 

\paragraph{Assumptions.} Unlike much of the AdS/CFT literature, we will not be imposing the assumption of global hyperbolicity on the bulk spacetimes under consideration, as this is tantamount to assuming weak (and strong) cosmic censorship.  
This means that, even in the $G_N\rightarrow 0$ limit, QFT in curved spacetime (including linearized gravitons) cannot be utilized for time evolution. However, we will still assume that QFT in curved spacetime is valid within causal diamonds that contain no singularities. We will further assume that any spacetime under consideration admits an initial data slice $\Sigma$ with an everywhere defined and nowhere vanishing continuous timelike unit normal; in particular, we require that $D[\Sigma]$ contains an open set containing $\Sigma$. Any singularity in the spacetime will be assumed to act as a UV-sensitive boundary condition that determines the relation between the algebras of different causal diamonds.\footnote{The singularity only has an effect on the relation between two diamonds $D_1$ and $D_2$ when the set $[J^+(D_1)\cup J^-(D_1)] \cap [J^+(D_2)\cup J^-(D_2)]$ contains part of the singularity. If this set does not contain a part of the singularity, the EFT can be used to relate the algebra of observables in $D_1$ and $D_2$.} Finally, we will take a conformal frame which is taken to be fixed at any value of $N$, including in the large-$N$ limit.

\paragraph{Assumptions on the semiclassical limit.}
We will also often implicitly use the framework of \emph{asymptotically isometric codes} for discussing the notion of ``geometric" states in the boundary theory. This framework was inspired by the work of \cite{FauLi22, LeuLiu22}. By ``a geometric state," we really have in mind that our bulk state is described by a sequence of boundary states, $\ket{\psi_N}\in \Hil_N$, labeled by an index $N$, which controls $1/G_N$ in the bulk. The Hilbert space $\Hil_N$ is that of the boundary theory at finite $N$. In the case where the boundary theory is $\mathcal{N}=4$ SYM, we have in mind that $N$ is the rank of the gauge group, but more generally it is just some parameter in the boundary theory.

We now state what we mean by an asymptotically isometric code for our work. These assumptions are only a slight modification of those listed in \cite{EngFol24b}.
\begin{assump}\label{ass:codespace}
    The bulk Hilbert space is given by a direct sum $\mathcal{H} = \mathcal{H}_{g_1} \oplus \mathcal{H}_{g_2} \oplus ...$, where $\mathcal{H}_{g_i}$ is the Hilbert space of QFT states defined perturbatively about a fixed background manifold $M_{g_i}$. Furthermore, the algebra of bounded operators on this Hilbert space, $\mathcal{B}(\mathcal{H})$, breaks up into a direct sum over bounded operators on each Hilbert space individually, $\mathcal{B}(\mathcal{H}) = \bigoplus_i \mathcal{B}(\mathcal{H}_{g_i})$.
\end{assump}

\begin{assump}\label{ass:isometry} 
There exists a sequence of bounded linear maps $V_N: \mathcal{H} \to \Hil_N$ such that for all $\ket{\psi}, \ket{\phi} \in \mathcal{H}$
\begin{align}
    \lim_{N \to \infty} \braket{\psi|V_N^{\dagger} V_N|\phi} = \braket{\psi|\phi}\,,
\end{align}
where $\Hil_N$ is the full boundary Hilbert space. From the results of \cite{FauLi22}, we will just need that for every bulk state $\rho$ on $\mathcal{B}(\mathcal{H})$ and every operator $\mathcal{O} \in \mathcal{B}(\mathcal{H})$, there is a sequence of states $\rho_N$ and boundary operators $\mathcal{O}_N \in \mathcal{B}(\Hil_N)$ such that
\begin{align}
    & \lim_{N \to \infty} \rho_N(\mathcal{O}_N) = \rho(\mathcal{O}),
\end{align}
and where $\rho_N$ and $\mathcal{O}_N$ are related by conjugation by $V_N$ to their bulk duals, $\rho = V_N^{\dagger} \rho_N V_N$ and $\mathcal{O}_N = V_N \mathcal{O} V_N^{\dagger}$. 
\end{assump}

Given these assumptions, we can more carefully  define what we mean by a boundary ``geometric" state in the context of an asymptotically isometric code. Namely, we say that a sequence of states $\lbrace \psi_N: \psi_N \in \Hil_N\rbrace$ is \emph{geometric} if there exists $\ket{\psi} \in \mathcal{H}$ such that $|| V_N \ket{\psi} - \ket{\psi_N}||_1 = 1/\exp(N)$. In such a situation, we will sometimes say that the states $\ket{\psi_N}$ labeled by $N$ are ``represented" by the asymptotically isometric code. 

In what follows, we will be concerned with whether or not bulk observables lie in the causal wedge. We will denote the algebra associated to the causal wedge as $\mathcal{N}_{\mathrm{cw}}$. This algebra is a subalgebra of $\mathcal{B}(\mathcal{H}) \supseteq \mathcal{N}_{\mathrm{cw}}$. Whether this inclusion is strict or not is the essential point of this work. If $\mathcal{B}(\mathcal{H}) \supsetneq \mathcal{N}_{\mathrm{cw}}$, then there exists a horizon by definition. 

In what follows, we will make various assumptions about the time evolution of the boundary theory. To conform with the standard expectations of AdS/CFT, we will assume that whenever we have an asymptotically isometric code, then bulk time evolution is boundary time evolution on the code subspace. We can encapsulate this in the following assumption. 
\begin{assump}\label{ass:bulkbndytimeevol} 
Let $H_{\rm bulk}$ be the bulk Hamiltonian acting on $\mathcal{H}$. Furthermore, let $H_N$ be the boundary Hamiltonian at finite $N$ acting on $\Hil_N$. Then
\begin{align}
    V_N^{\dagger} e^{-iH_N t} V_N = e^{-i H_{\rm bulk}t}.
\end{align}
\end{assump}

\paragraph{Notation and conventions.} We now proceed to introduce and review some language and notation from the quantum computing literature, which will be important for the formulation of Cryptographic Censorship. The rest of this section contains notation that will mostly be used in Sec.~\ref{sec:cryptoCensor}. We will use $\mathcal{H}$ to denote Hilbert spaces and for a pure state $\ket{\psi}$, we will use $\psi$ to denote the density matrix $\ket{\psi}\bra{\psi}$. When we write $x \gets \mathcal{X}$, we mean a variable $x$ is drawn from a distribution $\mathcal{X}$. For example, by $U \leftarrow \mu$, we mean to draw a unitary $U$ (from the unitary group $\mathbf{U}(\mathcal{H})$) using the Haar measure $\mu$; we will also use $\mu$ to denote the distribution of Haar random states, e.g., when we write $\ket{\psi}\gets\mu$. We use $\underset{x \gets \mathcal{X}}{\mathrm{Pr}}$ to denote taking the probability of some outcome occurring for $x$ drawn from $\mathcal{X}$. We will often work with a parameter $k$, which in the cryptographic literature is called a ``key''. This should be thought of as some small amount of random classical information that can be used to encode or decode data. Concretely, it is a bit string that is sampled uniformly from a \emph{key space} $\mK$, a set of bit strings. The dimension of $\mathcal{K}$ scales with a parameter $\kappa$, which is frequently referred to in the literature as the ``security parameter". We will always have in mind that we are scaling $\log \dim \mathcal{H} = \text{poly}(\secp)$ and will eventually take the $\secp \to \infty$ limit. A function $f(x)$ is said to be \emph{negligible} in its parameter $x$ if for any constant $c>0$, $|f(x)| \leq x^{-c}$ for any sufficiently large $x$. We will say that a quantity $f(x)$ is $O(g(x))$, denoted by $f(x)\sim O(g(x))$, if for some positive constants $c_1,c_2$ and for any sufficiently large $x$, $c_1 g(x)\leq |f(x)|\leq c_2 g(x)$.\footnote{In computer science literature this is denoted by $f = \Theta(g(x))$.}  Finally, we will call an operator $V$ \textit{approximately norm-preserving} if it approximately preserves state norms, i.e., for any state $\ket{\psi}$,  $\bra{\psi}V^{\dagger}V\ket{\psi}$ is exponentially close to $1$ for sufficiently large $\secp$. This of course includes any unitary operators. 

Our results will make heavy use of quantum algorithms, which we now define. A quantum algorithm ${\cal A}$ is a map from an input Hilbert space ${\cal H}_{\rm in}$ to some output set ${\cal C}$. Note that the set ${\cal C}$ can be a Hilbert space, a classical space, or some other set altogether. For example, a quantum algorithm could take as input a quantum state $\ket{\psi}\in {\cal H}_{\rm in}$ and output a classical bit representation of an operator ${\cal O}$ (e.g., the coefficients of the representation of ${\cal O}$ in some fixed choice of basis). A quantum algorithm ${\cal A}$ acting on some $ {\cal H}_{\rm in}$ is said to be \textit{efficient}, \textit{polynomially complex}, or equivalently \textit{computationally bounded} if it can be implemented by a quantum circuit of size no greater than polynomial in $\log \dim {\cal H}_{\rm in}$.\footnote{This involves some canonical choice of gates; however, this choice does not change the scaling in $\log \dim {\cal H}_{\rm in}$.} Here we will typically take ${\cal H}_{\rm in}$ to be the code subspace under consideration. We will often drop the reference to $\log \dim {\cal H}_{\rm in}$ when describing complexity or efficiency of such algorithms.

Consider now an efficient algorithm designed to take in a state $\ket{\psi}$ and output $0$ if it determines that $\ket{\psi}$ was sampled from distribution $\mu_1$ and $1$ if it determines $\ket{\psi}$ came from $\mu_2$. In this case, we say the two ensembles are computationally indistinguishable --- which we define precisely for general algorithms below --- if any efficient algorithm succeeds at this task with probability upper bounded by $1/2$ up to negligible corrections. It should be clear that this is equivalent to saying that the quantum algorithm outputs $1$ (or equivalently $0$) with almost the same probability when acting on both $\mu_{1}$ and $\mu_{2}$. 

Note that the authors of \cite{JiLiu18}, whose definitions of pseudorandom unitaries and states we will use, demand that pseudorandom ensembles are indistinguishable from the Haar ensemble even if the algorithm is allowed a reasonable (polynomial) number of copies of the state. Having multiple copies of the state is essential to allow the algorithm to probe coherent information about the quantum state.

With this intuition, we present the exact definition of computational indistinguishability of ensembles of states for general quantum algorithms:
\begin{defn}[Computational Indistinguishability]\label{defn:compindist}
Two ensembles of states, $\mathcal{S}_1$ and $\mathcal{S}_2$, in some Hilbert space $\mathcal{H}$, with size parameterized by a  parameter $\secp$, are \emph{computationally indistinguishable} if for any efficient quantum algorithm $\mathcal{A}$ and for all $m = \emph{poly}(\secp)$, 
\begin{align} 
    \left| \underset{\ket{\psi_1} \leftarrow \mathcal{S}_1}{\emph{Pr}} \left(\mathcal{A}(\ket{\psi_1}^{\otimes m}) =1 \right)-  \underset{\ket{\psi_2} \leftarrow \mathcal{S}_2}{\emph{Pr}} \left(\mathcal{A}(\ket{\psi_2}^{\otimes m}) =1 \right)\right| = \emph{negl}(\secp)~.
\end{align}
\end{defn} 

When we write $\Alg=1$, we are considering that the algorithm is given some binary task (e.g., deciding whether a state is pseudorandom or not), and outputs either $0$ or $1$; we write $\Alg=1$ to indicate that it outputs $1$.
 
The specific quantum algorithms that we will consider are learning algorithms; these are efficient algorithms that take certain data as input, implement some processing with the goal of learning some outcome, and finally produce their best guess for the outcome. In our particular context, the learning algorithms will aim to learn a circuit (or algebraic) representation of a unitary operator $U$. During the initial learning phase, the algorithms are given ``oracle access'' to some unknown unitary $U$, denoted by $\Alg^U$: this means that the algorithms have the ability to query $U$ a polynomial number of times on a polynomial number of (efficiently preparable) states $\ket{\varphi_i} \in \Hil$ to obtain $U\ket{\varphi_{i}}$ during its learning phase. During this phase, the algorithms can perform any polynomially complex computation between queries, on both the $U\ket{\varphi_{i}}$ from its queries thus far, and on its own registers. At the end of this phase, the algorithms' oracle access to $U$ is terminated, and they are tasked with producing a maximally accurate representation of $U$ or a particular aspect or feature of it: e.g., $U$ acting on some initial state $\ket{\psi_{\rm initial}}$ which would now be given to the algorithm. 

Let us briefly make a note on the different parameters and dimensions used throughout this paper. In the context of AdS/CFT, we will be working with code subspaces of dimension $\dim \Hil_{\rm code} = 2^\kappa$, where the security parameter $\kappa = \poly (N)$. For concreteness, we will consider $G_N \sim 1/N^2$, but the relation between $G_N$ and $N$ can be changed to the reader's liking. We will often use $N$ instead of $\kappa$ to parameterize functions, e.g., for negligible functions we will write $\eta(N)$.

Note that theorems and proofs in this paper are well above the physics level of rigor but fall short of a mathematician's level of rigor in a few places. We identify these gaps where possible and leave a fully rigorous formulation thereof to the curious mathematician.

\section{Warmup: QESs Lie Behind Horizons} \label{sec:QESsHorizons}
We will show in this section that if $\chi$ is a nonempty QES homologous to $\mathscr{I}$ in a spacetime $(M,g)$, then it lies behind both past and future event horizons. The quantum focusing conjecture (QFC)~\cite{BouZac16} with reflecting boundary conditions at $\mathscr{I}$ guarantees by the quantum singularity theorem~\cite{Wal10} that at least one null geodesic fired from $\chi$ is geodesically incomplete.\footnote{Strictly speaking, the restricted QFC \cite{Sha22} is sufficient to guarantee this.} However, in the absence of a stringent causality assumption such as AdS hyperbolicity, null geodesic incompleteness does not guarantee the existence of horizons. See Fig.~\ref{fig:excised-points} for an illustration. 
\begin{figure}
    \centering
    \includegraphics[scale=0.9]{{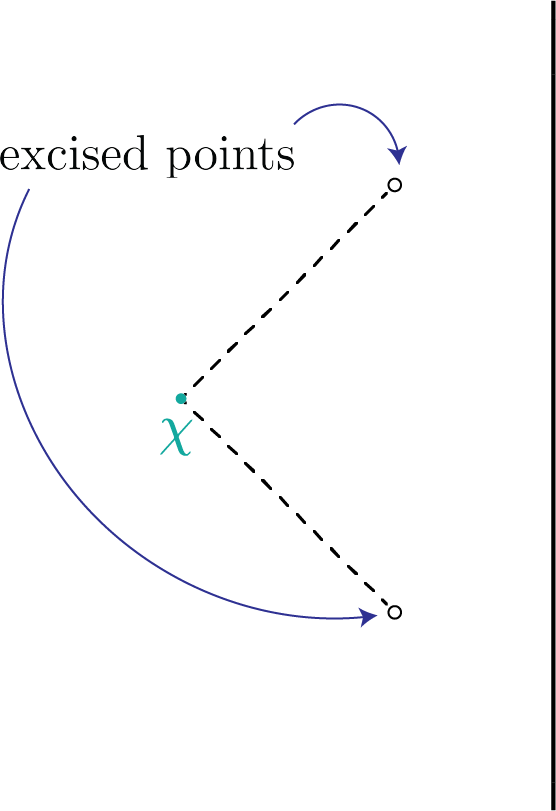}}
    \caption{An example where null geodesics fired from $\chi$ are geodesically incomplete, but there is no horizon.}
    \label{fig:excised-points}
\end{figure}

When $\chi$ is the minimal (and nonempty) QES for $\mathscr{I}$ in $M$, it immediately follows from the QES formula relating $S_{\rm gen}[\chi]$ to $S_{\rm vN}[\rho_{\mathscr{I}}]$ that bulk perturbations locally propagating from $\mathscr{I}$ (future- or past-directed) cannot modify $S_{\rm gen}[\chi]$ and thus cannot reach $\chi$: in this case, $\chi\cap I^{\pm}[\mathscr{I}]=\varnothing$.\footnote{Note that we have excluded the future horismos $\partial I^{\pm}[\mathscr{I}]$ from this conclusion, as $i^{\pm}$ are not precluded from null communication with $\chi$.} See  Fig.~\ref{fig:Sinv} for a review of the argument, originally presented in~\cite{EngWal14}. 

However, from the perspective of cosmic censorship, global minimality of a QES is of little relevance. We  expect that a similar result should apply to any QES independently of its global properties. In~\cite{EngFol20} some strides were made towards this end by CPT-conjugating the spacetime around the nonminimal QES $\chi$ to generate a new spacetime in which $\chi$ was minimal. This allowed~\cite{EngFol20} to prove that a large class of extremal surfaces and trapped surfaces must lie behind horizons. These results, however, relied on an assumption that we aim to relax in this section: that there can be no ``evaporating singularities'' (see~\cite{EngFol20} for the precise definition).

\begin{figure}
    \centering
    \includegraphics{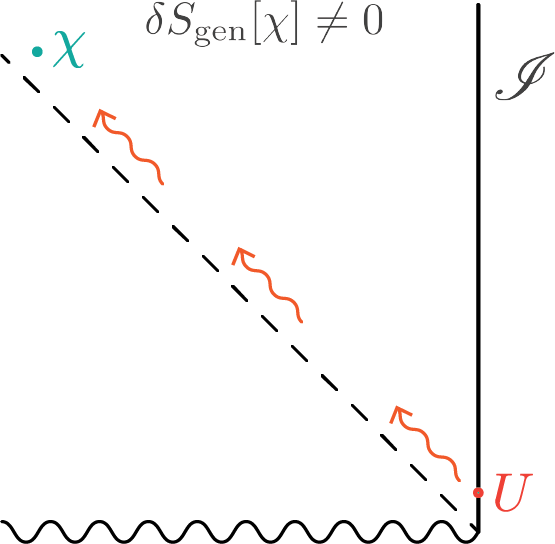}
    \caption{A QES outside the horizon, leading to a contradiction with the unitary invariance of von Neumann entropy: $S_{\rm vN}[\rho_{\mathscr{I}}] = S_{\rm vN}[U^{\dag}\rho_{\mathscr{I}} U]$.}
    \label{fig:Sinv}
\end{figure}

To relax this assumption, we will instead assume the Python's Lunch proposal~\cite{BroGha19, EngPen21b, EngPen23} in addition to the QES formula to argue that (quantum) extremal surfaces must generically be hidden behind both past and future horizons.\footnote{Note that the assumption of the Python's Lunch proposal applies to this section only; it will not be used in the proof of Cryptographic Censorship in Sec.~\ref{sec:proof-pseudorand-horizons}.} Or more precisely, if they lie outside a horizon, this distance must shrink to zero in the $G_N \rightarrow 0$ limit. Let us briefly review some facts about the Python's
Lunch proposal. A Python's Lunch $L$ with respect to a conformal boundary $\mathscr{I}$
is a region of spacetime defined by (at least) three mutually spacelike separated QESs homologous to $\mathscr{I}$:
$\chi_{\rm min}$, $\chi_{\rm bulge}$, and $\chi_{\rm aptz}$.
The first is the globally minimal QES, while $\chi_{\rm aptz}$ is a candidate
QES, but having larger generalized entropy than $\chi_{\rm min}$. The surfaces $\chi_ {\rm min}$ and $\chi_{\rm aptz}$ are 
so-called throats: QESs whose generalized entropy is roughly locally minimal in space
and locally maximal in time (see \cite{EngPen23} for details).
The bulge surface $\chi_{\rm bulge}$ is also a QES, but it is locally maximal in
time and space (see again \cite{EngPen23} for details).  The region $L$ is defined as the domain of dependence of any
spacelike slice bounded by $\chi_{\rm min}$ and $\chi_{\rm aptz}$, and the
Python's Lunch proposal conjectures that the CFT computational complexity required to
implement any operator in $L$ scales as 
\begin{equation}
\begin{aligned}
    \exp\left[\frac{ S_{\rm gen}[\chi_{\rm bulge}] - S_{\rm
    gen}[\chi_{\rm aptz}] }{ 2 } \right].
\end{aligned}
\end{equation}
We will only identify a Python's Lunch in a portion of spacetime that is
globally hyperbolic, and we will make the reasonable assumption that the presence of
potential spacetime singularities away from $L$ does not make operator reconstruction any easier.

Besides the Python's Lunch proposal, our assumptions in this section (and this section only) are as follows. 
 
\begin{paragraph}{Assumptions and Definitions.} We assume the following in addition to our general assumptions stated at the beginning of Sec.~\ref{sec:notation}:
\begin{itemize}
    \item $(M,g)$ has reflecting boundary conditions at each connected boundary component $\mathscr{I}$. 
    \item Smooth initial data: $(M, g)$ has a smooth inextendible timeslice $\Sigma_0$ containing
        two QESs $\chi_{\rm min}$ and $\chi$ homologous to $\mathscr{I}$, and
        where $\chi_{\rm min}$ is the QES of minimal
        generalized entropy that is homologous to $\mathscr{I}$.
    \item A generic condition: if $\chi'$ is a bulge QES and $\chi''$ a
        throat QES, then their associated generalized entropies satisfy  $S_{\rm gen}[\chi'] - S_{\rm gen}[\chi''] \sim
        O(G_N^{-1})$.
    \item In any globally hyperbolic region of spacetime that we consider, the (restricted) QFC holds, and the quantum maximin \cite{Wal12, AkeEng19b} and maximinimax \cite{BroGha19} prescriptions converge.
	\item We denote by $W_{O}[\chi]$ the outer wedge of a QES $\chi$ \cite{EngWal17b, EngWal18}.
    \item We define $t=0$ at $\mathscr{I}$ as $\Sigma_{0}\cap \mathscr{I}$.
    \item 
    If a point $p$ in the bulk is such that any future (past) directed causal
        curve from $p$ to $\mathscr{I}$ arrives at $\mathscr{I}$ at a boundary time that diverges to $+\infty$ ($-\infty$) in the $G_N \rightarrow 0$ limit, or if no curve to $\mathscr{I}$ exists, we say that $p$ is behind an ``effective future (past) horizon''. 
\end{itemize}
\end{paragraph}

\begin{restatable}{theorem}{QESbehindHors} Let $(M,g)$ be a spacetime as above. 
Assuming the Python's Lunch conjecture for complexity of operator
    reconstruction, the QES $\chi$ lies behind past and future effective horizons. \label{thm:warmup}
\end{restatable}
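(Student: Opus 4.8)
The plan is to argue by contradiction: suppose $\chi$ lies outside the effective future horizon (the past case is CPT-symmetric, so I treat only the future). The strategy is to show that if $\chi$ is causally accessible from $\mathscr{I}$ in the strong sense that curves to $\mathscr{I}$ reach finite boundary time as $G_N\to 0$, then one can build a Python's Lunch in which $\chi$ plays the role of the appetizer (outermost) QES $\chi_{\rm aptz}$, and one can act on the interior of that lunch --- the region $L$ --- by a simple, efficient bulk operation (throwing in a timelike probe from the boundary in finite time, as in Fig.~\ref{fig:QES-observer}). But the Python's Lunch proposal asserts that operators in $L$ have CFT complexity $\exp\left[(S_{\rm gen}[\chi_{\rm bulge}] - S_{\rm gen}[\chi_{\rm aptz}])/2\right]$, which by the generic condition is $\exp[O(G_N^{-1})]$, i.e., exponentially complex. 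An efficient bulk-infalling operation cannot implement an exponentially complex CFT operator, giving the contradiction.

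\textbf{Key steps, in order.} First, set up the contradiction hypothesis precisely: assume $\chi \cap I^-[\mathscr{I}]$ is nonempty (in the relevant, $G_N$-uniform sense), so there is a timelike curve from $\mathscr{I}$ at finite boundary time that enters the outer wedge structure near $\chi$. Second, I would take the slice $\Sigma_0$ (guaranteed by the smooth-initial-data assumption to contain both $\chi_{\rm min}$ and $\chi$) and run the maximin/maximinimax machinery on the globally hyperbolic region it anchors: since $\chi_{\rm min}$ is the global minimal QES and $\chi$ is a distinct QES homologous to $\mathscr{I}$ with larger generalized entropy, standard results on the structure of extremal surfaces (together with the convergence of quantum maximin and maximinimax that we assumed) produce a bulge QES $\chi_{\rm bulge}$ mutually spacelike to both, with $S_{\rm gen}[\chi_{\rm bulge}]$ strictly larger --- so that $(\chi_{\rm min}, \chi_{\rm bulge}, \chi)$ form a Python's Lunch with $\chi_{\rm aptz}=\chi$. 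Third, invoke the generic condition to conclude $S_{\rm gen}[\chi_{\rm bulge}] - S_{\rm gen}[\chi] \sim O(G_N^{-1})$, so the lunch is ``deep'' and the reconstruction complexity of $L$ is $\exp[O(G_N^{-1})]$. Fourth, observe that because $\chi$ is causally accessible from $\mathscr{I}$ at finite boundary time, the wedge $W_O[\chi]$ (hence part of $L$) is reachable by an infalling timelike probe launched from the boundary; acting with a simple operator carried by that probe is an efficient CFT operation (polynomially complex in $\log\dim\mathcal{H}_{\rm code}$, as such infalling-observer reconstructions are), contradicting the exponential Python's Lunch lower bound. Fifth, conclude $\chi$ must instead be behind the effective future horizon, and repeat under CPT for the past horizon.

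\textbf{Main obstacle.} The delicate step is the second one: manufacturing the bulge QES and verifying that $(\chi_{\rm min},\chi_{\rm bulge},\chi)$ genuinely satisfies all three defining conditions of a Python's Lunch --- in particular that $\chi_{\rm bulge}$ is mutually spacelike separated from both throats and that its generalized entropy strictly exceeds that of $\chi$ (so the generic condition can be applied with the correct sign). One must also be careful that the globally hyperbolic region in which the lunch is identified is free of singularities, using the assumption that singularities away from $L$ do not ease reconstruction; and one must ensure that the ``effective horizon'' notion (finite vs.\ divergent boundary arrival time as $G_N\to 0$) is exactly what controls whether the infalling-probe operation is efficient --- this is why the theorem only concludes that, if $\chi$ lies outside a horizon, its distance to the horizon shrinks as $G_N\to 0$, rather than a sharp ``$\chi$ is behind the horizon'' statement. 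A secondary subtlety is quantifying ``throw in a probe and act on $L$'' as a bona fide efficient quantum algorithm in the sense of Sec.~\ref{sec:notation}; I would lean on the cited efficient-reconstruction results for the causal/outermost-QES wedge and the standard heuristic that a localized infalling excitation is prepared by a low-complexity circuit.
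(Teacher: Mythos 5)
Your overall strategy --- contradiction between the exponential Python's Lunch complexity of $L$ and the polynomial complexity of acting on $\chi$ with an infalling probe (HKLL injection plus $O(G_N^0)$ time evolution) --- is exactly the paper's, and your fourth step matches the paper's construction of the simple operator $\Phi$. But your second step contains a genuine gap that you correctly flag as the ``main obstacle'' and then do not resolve: you propose to build the lunch as $(\chi_{\rm min},\chi_{\rm bulge},\chi)$ with $\chi$ itself playing the role of the appetizer $\chi_{\rm aptz}$. This fails in general, because $\chi$ is an \emph{arbitrary} nonminimal QES homologous to $\mathscr{I}$ and need not be a throat: it could itself be of bulge type (locally maximal in both space and time), in which case it cannot serve as $\chi_{\rm aptz}$, and the generic condition $S_{\rm gen}[\chi_{\rm bulge}]-S_{\rm gen}[\chi]\sim O(G_N^{-1})$ would be applied with the wrong sign or to the wrong pair of surfaces. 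The maximinimax machinery produces a bulge \emph{between two throats}; it does not certify that a given QES is a throat.

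The paper's fix is to decouple the roles: by Proposition 1 of \cite{EngPen21a} there exists an outermost minimal QES $\chi_{\rm aptz}$ lying in $W_O[\chi]\cap W_O[\chi_{\rm min}]$ (possibly but not necessarily equal to $\chi$), which is automatically a throat. One then takes $H$ with $\partial H=\chi_{\rm min}\cup\chi_{\rm aptz}$, sets $L=D[H]$, and obtains the bulge from the maximinimax construction of \cite{BroGha19} applied to the two genuine throats $\chi_{\rm min}$ and $\chi_{\rm aptz}$. The key remaining step --- which replaces your attempted identification $\chi=\chi_{\rm aptz}$ --- is to show that $H$ can be \emph{chosen to contain} $\chi$ (glue the portion of $\Sigma_0$ between $\chi_{\rm min}$ and $\chi$ to a spacelike slice between $\chi$ and $\chi_{\rm aptz}$, the latter existing because $\chi_{\rm aptz}$ lies strictly in $W_O[\chi]$). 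Then $\chi\subset L$ regardless of whether $\chi$ is the appetizer, the bulge, or neither, and the infalling probe reaching $\chi$ acts inside $L$, yielding the contradiction. Also note a small orientation point: the paper derives the \emph{past} effective horizon from a \emph{future}-directed probe fired from $\mathscr{I}$ at finite time, and then runs the argument \emph{mutatis mutandis} for the future horizon, rather than a single CPT reflection of a future-horizon argument; this matters only for keeping the bookkeeping of $I^{\pm}$ straight.
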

\begin{proof}
    Consider first working strictly within the globally hyperbolic spacetime
    given by the domain of dependence of $\Sigma_0$ within $M$, denoted as
    $D[\Sigma_0]$.
    By Proposition 1 of \cite{EngPen21a}, there always exists an outermost
    minimal QES $\chi_{\rm
    aptz}$ that lies in $W_{O}[\chi] \cap W_{O}[\chi_{\rm min}]$.\footnote{Outermost minimal means that there is no surface in the outer wedge $W_O[\chi_{\rm min}]$ homologous to $\chi_{\rm min}$ with less $S_{\rm gen}$.} This surface may or may not coincide with $\chi$ itself. Let now $H$ be a spatial slice with $\partial
    H= \chi_{\rm aptz} \cup \chi_{\rm min}$, which exists since $\chi_{\rm aptz} \subset W_{O}[\chi_{\rm min}]$,
    and let $L = D[H]$. See Fig.~\ref{fig:DoD} for an illustration.
    By the maximinimax construction in the appendix of
    \cite{BroGha19}, there is another QES $\chi_{\rm bulge}$ in $L$ that is of the
    bulge type, and so $L$ must be a Python's Lunch ($\chi_{\rm
    bulge}=\chi$ is allowed). $L$ must
    contain $\chi$, since there exists a choice of $H$ that contains
    $\chi_{\rm min}, \chi$, and $\chi_{\rm aptz}$. If $\chi$ is equal to $\chi_{\rm
    aptz}$ or $\chi_{\rm bulge}$, this is
    trivial. If $\chi$ is neither of these surfaces, then we construct $H$ by gluing together two
    pieces. The first piece is the subset of $\Sigma_0$ lying between
    $\chi_{\rm min}$ and $\chi$. The second is any spacelike slice between
    $\chi$ and $\chi_{\rm aptz}$ which exist by global hyperbolicity of
    $D[\Sigma_0]$ and the fact that $\chi_{\rm aptz}$ lies
    strictly in the outer wedge of $\chi$. 

        Let us now return to the full spacetime $M$. Define a code subspace
        $\mathcal{H}_{\rm code}$ consisting of all low energy perturbative quantum states that backreact on $\Sigma_0$ at subleading order in $G_N$. 
        By the genericity assumption, we have 
        $S_{\rm gen}[\chi_{\rm bulge}] -  S_{\rm gen}[\chi_{\rm aptz}]\sim
        O(G_N^{-1})$
        so by the Python's Lunch proposal,
        the complexity of reconstructing operators in $\mathcal{H}_{\rm code}$ 
        with support in $L$ scales as $\exp(cG_N^{-1})$ for some constant $c>0$. Assume now for
        contradiction that $\chi$ lies outside the effective past horizon. Then
        there exists a future-directed causal curve $\gamma$ from $\mathscr{I}$ to $\chi$ that is fired from $\mathscr{I}$ at a time $t\sim O(G_N^0)$.
        Then we can find a simple operator $\Phi$ that modifies the quantum fields at $\chi$. Just let $\Phi$ be an operator that injects a perturbative
        amount of matter
        with rocket boosters near $\mathscr{I}$, so that the matter travels along the
        curve $\gamma$. Because this matter reaches $\chi$ in a time that does
        not scale with $G_N$, this matter backreacts on $\Sigma_0$ only at
        subleading order, and so it acts within the code subspace. $\Phi$ can
        be constructed in two steps: an injection of matter near the boundary
        using HKLL \cite{HamKab05, HamKab06, HamKab06b} followed by an $O(G_N^0)$ amount of CFT time evolution.
        Neither is exponentially hard, and so we reach a contradiction with reconstruction complexity of operators with support in $L$. The
        argument excluding $\chi$ from intersecting the exterior of the effective future horizon
        proceeds \emph{mutatis mutandis}.
\end{proof}

\begin{figure}
    \centering
    \includegraphics[scale=0.7]{{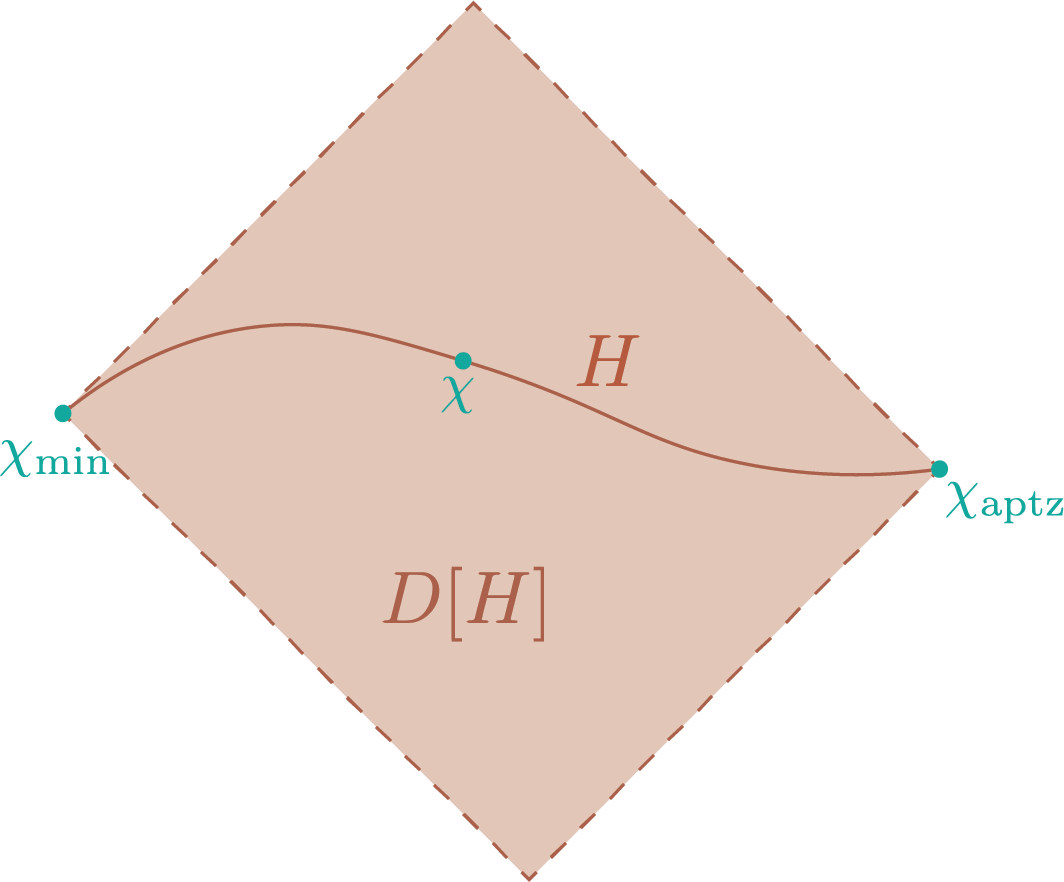}}
    \caption{The setup in the proof of Theorem~\ref{thm:warmup} where $\chi$ is nonminimal, and where $H$ is chosen to contain $\chi$.}
    \label{fig:DoD}
\end{figure}

Note that there are setups where islands \cite{AlmMah19b, BouPen23}, and thus QESs $\chi$, lie outside of
event horizons.  A universal aspect of these constructions is that the QES in question is a QES of a different subsystem than the one whose horizon is under consideration. 
We may also comment more specifically on each example: first, Ref.~\cite{AlmMah19b} found an island outside of the black hole horizon when 
the CFT at $\mathscr{I}$ was coupled to a reservoir. In this case boundary
conditions were not reflecting, which was assumed in our result. Furthermore, by virtue of the coupling, time evolution of the black hole system alone is no longer unitary. Thus, while a signal sent from $\mathscr{I}$ initially can be
turned near the boundary with a unitary, propagating it to $\chi$ requires a non-unitary operation
on the CFT, and so the entropy is expected to change. Additionally, 
decoupling the CFT from the auxiliary system produces a shockwave that alters the spacetime at leading order, 
which does push the island behind the horizon.
Next, in \cite{BouPen23} an island was found lying a distance of
$O(\sqrt{\ell_{\rm Pl}r_{\rm hor}})$ outside the horizon. In this case
the island was associated to angular momentum modes in the bulk, but qualitatively this construction is the same as Ref.~\cite{AlmMah19b} as far as it relates to the above theorem. 

\section{Crytographic Censorship} \label{sec:cryptoCensor}

In this section we introduce and prove our main result: that on any sufficiently large code subspace, (pseudo)random dynamics guarantee event horizon formation in typical (pseudo)random states with a semiclassical bulk dual. The precise statement (Theorem \ref{thm:pseudorandom-horizons}: Cryptographic Censorship) is given in Sec.~\ref{sec:proof-pseudorand-horizons}. In presenting and proving this theorem, we will be agnostic about the types of code subspaces that in a gravitational theory could correspond to this degree of pseudorandomness; in Sec.~\ref{sec:possible-subspaces} we will give two examples of code subspaces in gravity that are compatible with sufficiently pseudorandom dynamics. 

Intuitively, pseudorandom unitary ensembles are ensembles of unitaries that are \emph{computationally indistinguishable} from the Haar ensemble. As an example, upon evolution for some time longer than a scrambling time, a chaotic system is generally expected to become approximately pseudorandom. Our argument for the existence of a horizon hinges upon the assumption that the fundamental time evolution is well modeled by a unitary that is sufficiently pseudorandom on the code subspace. In a sense described below, we will require the unitary to be pseudorandom by an amount that scales inversely with $G_N$, which we will call \emph{gravitationally} pseudorandom.

How is this assumption of gravitational pseudorandomness sufficient to guarantee the existence of event horizons? We will outline the logic here while omitting many of the technicalities (which we leave to Apps.~\ref{app:proof-hol-learning} and~\ref{app:proof-large-N}); the rest of this section is devoted to the technical points necessary for the proof of Cryptographic Censorship and its application to gravitational code subspaces.

Recall that observables in the causal --- and more generally, simple \cite{EngWal17b, EngWal18, EngPen21a} --- wedge of the boundary are computationally simple to reconstruct. This is a non-trivial aspect of bulk reconstruction in AdS/CFT \cite{EngWal18, EngLiu23}.\footnote{For the likely limited number of readers who believe in the Quantum Extended Church-Turing (QECT) thesis outside of horizons but are skeptical of arguments relying on AdS/CFT, we note that the QECT thesis by itself also implies that causal observables must be simple.} We argue that, given some reasonable technical assumptions outlined in Sec.~\ref{sec:proof-pseudorand-horizons}, there exists a learning algorithm $\mathcal{A}_{CW}$ that can reconstruct the causal wedge of any geometric state in a code subspace of size $e^{O(G_N^{a})}$ for some $-1\leq a < 0$. However, a theorem of~\cite{YanEng23} (see Theorem~\ref{thm:learning} below for its precise statement) combined with results from measure concentration guarantees (roughly) that all efficient quantum algorithms must fail at predicting the expectation value of some operator $Q$, which we show exists in the large-$N$ limit, after pseudorandom time evolution. It then immediately follows that for the algorithm $\mathcal{A}_{CW}$, there exists an operator $Q$ that $\mathcal{A}_{CW}$ cannot reconstruct and therefore must lie outside of the causal wedge; in other words, an exterior to the causal wedge exists --- by extension, a boundary to the causal wedge exists. Since  the boundary of the causal wedge of $\mathscr{I}$ coincides with the event horizon, an event horizon must exist. In Sec.~\ref{sec:proof-pseudorand-horizons}, we will formalize this rough argument into a proof.

We start by briefly reviewing pseudorandom unitaries, as well as the appropriate learning result in \cite{YanEng23} that we will use, in Sec.~\ref{sec:PR-and-learning-result}. We then explain the required alterations and extensions necessary to apply the learning result to the holographic, $N \to \infty$ setting in Sec.~\ref{sec:complearnholog}. In Sec.~\ref{sec:proof-pseudorand-horizons} we state and prove our theorem of Cryptographic Censorship. Finally, in Sec.~\ref{sec:possible-subspaces} we comment on two code subspaces to which our theorem applies.

\subsection{Review: pseudorandomness and complexity of learning} \label{sec:PR-and-learning-result}  
Pseudorandom states and unitaries will play an important role in what follows, so we will briefly review these concepts. Within the context of quantum cryptography, pseudorandom states were first defined in \cite{JiLiu18}. Informally, an ensemble of quantum states is pseudorandom if the states are efficiently constructible, but indistinguishable from Haar random by any polynomially bounded quantum algorithm --- even if that algorithm is given polynomially many copies of the state. A pseudorandom state generator $G$ is a map from some space $\mathcal{K}$ into a Hilbert space $\mathcal{H}$. For each element $k \in \mK$, it outputs $G(k) = \ket{\psi_k} \in \mathcal{H}$.  The intuitive goal of the definition is to say that the action of feeding a random sample from $\mK$ to $G$ to get a state $\ket{\psi_k}$ is indistinguishable from sampling the Haar random distribution of states (in the computational sense defined in Sec.~\ref{sec:notation}). The technical definition is as follows: 

\begin{defn}[Pseudorandom State Ensemble (PRS); Definition 2 in \cite{JiLiu18}]\label{defn:PRS}

Consider an ensemble of states $\Psi=\{\ket{\psi_k} \in \Hil\}_{k \in \mathcal{K}}$, labeled by a parameter $k$ drawn from a key space $\mK$, where both $\Hil$ and $\mathcal{K}$ are parameterized by a security parameter $\kappa$. Then the ensemble $\Psi$ is a pseudorandom state ensemble if 
\begin{itemize}
    \item The states can be efficiently generated: there is an efficient quantum algorithm $G$ that generates the state $\ket{\psi_k}$ given $k$ as input. That is, for all $k \in \mathcal{K}$, $G(k) = \ket{\psi_k}$. 
    \item The states are computationally indistinguishable from Haar random:  for any polynomially many copies of $\ket{\psi_k}$ with the same $k \in \mathcal{K}$ and any efficient quantum algorithm $\mathcal{A}$, there exists a negligible function $\negl (\kappa)$ such that
    \begin{equation}
        \left|\Pr_{k\gets\mathcal{K}}\[\Alg\(\ket{\psi_k}^{\otimes \poly (\kappa)}\)=1\]-\Pr_{\ket{\psi}\gets\mu}\[\Alg\(\ket{\psi}^{\otimes \poly (\kappa)}\)=1\]\right|=\negl(\secp)~,
    \end{equation}
    where $\mu$ is the Haar measure on $\Hil$. 
\end{itemize}

\end{defn} 

We can define pseudorandom unitaries as in~\cite{JiLiu18} in a similar fashion:

\begin{defn}[Pseudorandom Unitary Ensemble (PRU); Definition 5 in~\cite{JiLiu18}]\label{defn:PRU}
    Consider an ensemble of unitaries,$\ \mathcal{U} = \lbrace U_k \in \mathbf{U}(\mH) \rbrace_{k \in \mK}$, labeled by a parameter $k$ drawn from a key space $\mK$, where both $\mK$ and $\mH$ are parameterized by a security parameter $\secp$. We say $\,\mathcal{U}$ is a pseudorandom unitary ensemble if 
    \begin{itemize}
        \item The unitaries are efficiently representable: there is an efficient quantum algorithm $Q$ such that for every $k \in \mK$ and any $\ket{\psi} \in \mathcal{H}$, $Q(k,\ket{\psi}) = U_k \ket{\psi}$.
        \item The unitaries are computationally indistinguishable from Haar random: for any efficient quantum algorithm $\mathcal{A}^U$ given oracle access to some $U$, there exists a negligible function $\negl (\kappa)$ such that
        \begin{align}
        \left | \underset{k \leftarrow \mK}{\mathrm{Pr}} \left[ \mathcal{A}^{U_k} =1 \right] -\underset{U \leftarrow \mu}{\mathrm{Pr}} \left[ \mathcal{A}^{U} =1 \right]  \right| = \mathrm{negl}(\secp)~.
        \end{align} 
    \end{itemize}
\end{defn} 
We note that technically, PRSs and PRUs are actually sequences of ensembles, $\PRSens = \{\PRSens_\kappa\}_\kappa$ and $\PRUens = \{\PRUens_\kappa\}_\kappa$, respectively, each parametrized by $\kappa \in \Nat$, and the guarantees in the above definitions are for large $\secp$. Since we will take $\kappa = \poly(N)$ in Sec.~\ref{sec:complearnholog} and later sections, this aligns well with the framework of asymptotically isometric codes outlined in Sec.~\ref{sec:notation}. For ease of notation, we will instead simply write $\PRSens$ and $\PRUens$, as in the above definitions, where the key $k$ should be understood to be drawn from a key space $\mathcal{K}_\kappa$ where $\kappa$ is sufficiently large. We will make an exception to this rule in some parts of the proof in App.~\ref{app:proof-large-N}, where it will be convenient to make the dependence on the security parameter explicit to take the $\kappa \to \infty$ limit. 

Finally, we remark that while explicit constructions of PRSs exist (and were given first in \cite{JiLiu18}), there are currently no known constructions of PRUs, although candidate constructions were proposed in \cite{JiLiu18}. We will not require any explicit constructions of either PRSs or PRUs in this work.

\paragraph{Pseudorandom states vs.\ $k$-designs.} 
For the reader familiar with the related notion of a $k$-design, we briefly comment on the two main differences with PRSs. First, the associated notion of indistinguishability is much stronger for $k$-designs: the states need to be indistinguishable from Haar random even to computationally \emph{unbounded} observers, e.g., algorithms of exponential complexity. However, $k$-designs only require that the moments of these ensembles are indistinguishable up to a certain value ($k$), whereas PRS ensembles are computationally indistinguishable for any $\poly (\kappa)$ number of moments. In this sense, PRSs are comparable to a high-order $k$-design but with a parametrically weaker notion of indistinguishability. Note that $k$-designs are less relevant for our purposes as they apply to unbounded observers, whereas our focus here is on bounded computations. 

~\\
We now state the learning result of \cite{YanEng23}, which was briefly reviewed at a lower level of detail in Sec.~\ref{sec:intro}. This result (and our Cryptographic Censorship result) also holds for Haar random $U$ and $\ket{\PRS}$, but we will focus on pseudorandom states and unitaries as we believe they present a more realistic characterization of the time evolution of physical systems. 

\begin{restatable}[Complexity of Learning Pseudorandom Unitaries, Theorem 2 of~\cite{YanEng23}]{theorem}{learning}\label{thm:learning}  
Let $\secp\in\Nat$ be the security parameter, and let $\PRUens$ be any pseudorandom unitary ensemble and $\PRSens$ be any pseudorandom state ensemble in a Hilbert space $\Hil$ where $|\Hil|=2^\secp$.
For any quantum algorithm $\Alg$ of $\poly(\secp)$ complexity that, given oracle access to the pseudorandom unitary $\PRU \gets\PRUens $, produces (a circuit representation of) an approximately norm-preserving operator\footnote{We note that~\cite{YanEng23} showed this result for a wider class of algorithms --- those producing any operator $\widehat{U}$ with column norms bounded above by $\approx 1$ --- which includes the algorithms here. We will not need this more general result in this paper.} $\widehat{\PRU}$ as its best guess for $\PRU$, there exists a constant $\const > 0$ independent of $\secp$ such that for any $\secp > \secp_0$,  the following quantity is bounded:
\begin{equation*}
\avgover{\substack{\PRU \gets\PRUens \\ \ket{\PRS}\gets\PRSens \\ \widehat{U} \gets \Alg^U}}\[F\(\widehat{\PRU}\ket{\PRS},\PRU\ket{\PRS}\)\]\leq 1-\const~.
\end{equation*} 
Here, $F(\widehat{U}\ket{\psi},U\ket{\psi}) = |\braket{\psi|\widehat{U}^\dagger U|\psi}|^2$ which is the fidelity when $\widehat{U}$ is unitary. In this context, it is approximately the fidelity since we are working with approximately norm-preserving operators. 
\end{restatable}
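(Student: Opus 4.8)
## Proof proposal for Theorem~\ref{thm:learning} (Complexity of Learning Pseudorandom Unitaries)

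The plan is to reduce the statement to a counting/information-theoretic bound for Haar-random unitaries, and then transfer that bound to the pseudorandom setting using computational indistinguishability. First I would prove the analogue of the claim when $\PRU\gets\mu$ is Haar random and $\ket{\PRS}\gets\mu$ is an independent Haar-random state. The key observation is that the learning algorithm $\Alg^U$ interacts with $U$ only through polynomially many oracle queries $U\ket{\varphi_i}$ on efficiently preparable probe states, performing poly-size processing in between; so the classical/quantum description $\widehat U$ it outputs is a function of at most $\poly(\secp)$ bits (or a poly-size circuit) extracted from these queries. Since a Haar-random $U$ on a $2^\secp$-dimensional space has, heuristically, $\sim 2^\secp$ ``independent directions'' of information, and the algorithm has only learned the action of $U$ on a poly-dimensional subspace, the image $U\ket{\PRS}$ of a \emph{fresh} Haar-random $\ket{\PRS}$ (independent of the probe states, by measure concentration with overwhelming probability) is essentially a Haar-random unit vector about which the algorithm has negligible information. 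Concretely I would bound $\avg_{U,\psi,\widehat U}|\braket{\psi|\widehat U^\dagger U|\psi}|^2$ by first averaging over the Haar-random $U$ conditioned on the query transcript: for the state $\ket{\psi}$ that avoids the poly-dimensional span of the probes (which it does with probability $1-\negl(\secp)$), $U\ket\psi$ conditioned on the transcript is still Haar-random on a subspace of dimension $\geq 2^\secp - \poly(\secp)$, so $\mathbb{E}|\braket{\phi|U|\psi}|^2 \leq \frac{1}{2^\secp-\poly(\secp)}$ for any fixed $\ket\phi = \widehat U\ket\psi/\|\cdot\|$, which is exponentially small; accounting for the approximately-norm-preserving (rather than exactly unitary) nature of $\widehat U$ only changes this by exponentially small corrections. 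Hence in the Haar case the average fidelity is bounded by $2^{-\secp+o(\secp)} \ll 1-\const$ for any $\const < 1$.

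Next I would upgrade from Haar to pseudorandom using Definitions~\ref{defn:PRS} and~\ref{defn:PRU}. The idea is the standard hybrid/contrapositive argument: suppose for contradiction that for every $\const>0$ there were infinitely many $\secp$ with $\avg_{\PRU\gets\PRUens,\,\psi\gets\PRSens,\,\widehat U\gets\Alg^U} F(\widehat U\ket\psi,\PRU\ket\psi) > 1-\const$. Then one builds an efficient distinguisher: given either oracle access to $U\gets\PRUens$ together with copies of $\ket\psi\gets\PRSens$, or oracle access to $U\gets\mu$ together with copies of $\ket\psi\gets\mu$, the distinguisher runs $\Alg^U$ to produce $\widehat U$, then uses the fresh copies of $\ket\psi$ (it needs $\poly(\secp)$ of them to estimate a fidelity, which is a polynomial number of moments, and to actually apply $\widehat U$ via its circuit and run a SWAP test against $U\ket\psi$) to estimate $F(\widehat U\ket\psi,\PRU\ket\psi)$, outputting $1$ if the estimate exceeds, say, $1-\const/2$. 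By the Haar-case bound this outputs $1$ with negligible probability in the Haar world, but with probability bounded away from $0$ by an $O(1)$ amount in the pseudorandom world (by Markov applied to the assumed average), contradicting the computational indistinguishability of $\PRUens$ from $\mu$ (and of $\PRSens$ from $\mu$); here one must be slightly careful to swap the unitary ensemble and the state ensemble one at a time in a two-step hybrid. This yields the existence of a single $\const>0$ valid for all $\secp>\secp_0$.

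The main obstacle I expect is making the ``the algorithm has only learned a poly-dimensional slice of $U$'' step fully rigorous in the \emph{quantum} oracle model. The subtlety is that the probe states $\ket{\varphi_i}$ can be chosen \emph{adaptively} depending on the outcomes of earlier queries, and the algorithm holds coherent quantum registers, so ``the span of the probes'' is not a fixed subspace but an algorithm-and-randomness-dependent one; one must argue that regardless of adaptivity, after $t=\poly(\secp)$ queries the joint state of algorithm-plus-environment can be purified so that the residual Haar-randomness of $U$ on the orthogonal complement of an (at most) $t$-dimensional subspace is untouched, and then invoke measure concentration to say a freshly sampled $\ket\psi$ misses this complement-complement. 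This is essentially a careful application of the type of argument underlying Theorem~2 of~\cite{YanEng23}, so I would lean on that result directly rather than reprove it: indeed, the cleanest route is to simply observe that the present statement is Theorem~2 of~\cite{YanEng23} specialized to the subclass of algorithms outputting approximately norm-preserving $\widehat U$ (a subclass of the ``bounded column norm'' operators treated there), so the proof reduces to quoting~\cite{YanEng23} and checking that approximately-norm-preserving operators satisfy its hypotheses, with the Haar-case estimate above serving as the conceptual engine. The remaining routine work is tracking the $\negl(\secp)$ and exponentially-small error terms through the hybrid so that a clean $O(1)$ constant $\const$ survives.
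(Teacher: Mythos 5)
The paper offers no proof of this statement: it is imported verbatim as Theorem~2 of~\cite{YanEng23}, with the only check being the footnoted observation that approximately norm-preserving operators fall within the ``bounded column norm'' class treated there --- exactly the route you settle on in your final paragraph. Your preceding Haar-counting-plus-hybrid sketch is a plausible reconstruction of the cited proof's strategy (and consistent with the pieces of~\cite{YanEng23} that this paper does invoke elsewhere, e.g.\ its Lemma~7 for the Haar-to-pseudorandom transfer), but it is not needed here, and as you yourself flag, the adaptive-query step would require substantially more work before the sketch could stand on its own.
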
  

This theorem shows that it is impossible for any efficient quantum learning  algorithm to use oracle access to a pseudorandom $U$ to produce a polynomially complex circuit or (succinct) algebraic expression that accurately implements $U$ on a pseudorandom state $\ket{\psi}$. As we will explain in more detail in the next section, we will be interested in applying this result to a sequence of code subspaces $\{\Hil_\code^{(\secp)}\}_\secp$ each of dimension $|\Hil_\code^{(\secp)}| = 2^\secp$, where we will take $\secp = \poly(N)$.

We now comment on two aspects of this theorem that will appear extensively in the subsequent section.  
First, Theorem~\ref{thm:learning} implies that, \emph{on average}, there exists a distinguishing operator that the learning algorithm fails to predict (see Appendix~B in~\cite{YanEng23} for an explanation).\footnote{As a difference of language,~\cite{YanEng23} writes this as the distinguishing ``POVM'' and lower bounds the distance between measurement outcomes obtained via this POVM.} In particular, when the fidelity between $\widehat{U}\ket{\psi}$ and $U\ket{\psi}$ is $1-O(1)$, then there exists an operator $\operator$ such that\footnote{Note that the quantum algorithm $\Alg^U$, as a result of, e.g., intrinsic randomness in the outcome of measurements during the learning process, generally outputs a $\widehat{U}$ that is not a deterministic function of $U$. Therefore, we average over the algorithm's output $\widehat{U}\gets\Alg^U$.}
\begin{equation}
    \avgover{\widehat{U}\gets \Alg^{U} }  \left|\tr(\operator\widehat{\PRU}\PRS (\widehat{\PRU})^\dagger)-\tr(\operator\PRU\PRS(\PRU)^\dagger)\right|\geq O(1)~.
\end{equation}
In the next section, we will use measure concentration to show that for \emph{nearly all} of the unitaries in the ensemble $\PRUens$, such a fidelity bound holds and thus there is a distinguishing operator.\footnote{The result on average in~\cite{YanEng23} only implies that a constant fraction of the unitaries in $\PRUens$ have such an operator, whereas we will show this holds for a fraction of the unitaries that goes to $1$ as $\secp\to\infty$.} This operator will play a crucial role in the proof of Cryptographic Censorship. 

Second, it immediately follows from Theorem~\ref{thm:learning} that
\begin{equation}\label{eqn:avg-fidelity-supremum} 
\sup_{\secp\to\infty} \avgover{\substack{\PRU\gets\PRUens\\ \ket{\PRS}\gets\PRSens\\ \widehat{U}\gets\Alg^U}}\[F\(\widehat{\PRU}\ket{\PRS},\PRU\ket{\PRS}\)\] \leq 1-\const~,
\end{equation}
which crucially depends on the upper bound of $1-\const$ being independent of $\secp$.   
This will be important in extending and applying this result to holography, where we will take the large $\secp$ (or equivalently, large $N$) limit, and want the guarantee that the large $N$ limits of these states also have fidelity at most $1-\alpha$, and thus a distinguishing operator exists.

\subsection{Complexity of learning in holography}\label{sec:complearnholog}

We now turn to the application of this learning result to holography. Since our goal is to constrain semiclassical gravitational phenomena, our focus will be on the large-$N$ limit in general and on quantities that scale with $N$ in this limit in particular. For concreteness, and without loss of generality, we use $G_N \sim 1/N^2$, and consider code subspaces $\Hil_\code$ of dimension $e^{O(G_N^{a})}$, where $-1\leq a < 0$. To this end, we will take the security parameter $\secp = \poly (N)$ such that our code subspace $\mathcal{H}_{\rm code}$ is of dimension $2^\kappa$.  Unitary operators that are pseudorandom on such code subspaces at any large but finite $N$ will be called gravitationally pseudorandom (defined precisely below); we will also resolve various technical challenges that follow from taking the large-$N$ limit (e.g., with regards to the existence of certain quantities). We begin with a formal definition of gravitational pseudorandomness.

\begin{defn}[Gravitationally Pseudorandom Unitary Ensemble (GPRU)]\label{defn:gravpsr} 
    Let $\mathcal{U} = \lbrace U_k \in \mathbf{U}(\mH) \rbrace_{k \in \mK}$ be a pseudorandom unitary ensemble, with key space $\mK$ parameterized by a security parameter $\secp$, where $\mH$ is the Hilbert space of a large $N$, holographic quantum system where  $N \sim G_N^{-1/2}$, with $G_N$ the bulk Newton's constant. We say that \,$\mathcal{U}$ is a \emph{gravitationally pseudorandom unitary ensemble (GPRU)} if $\kappa = \poly(N)$. 
\end{defn}   
As noted previously, at strictly infinite $N$ there is no notion of pseudorandom unitaries. When discussing operators we shall be primarily interested in asymptotic scaling at large $N$ rather than the actual operators at infinite $N$. When working in the actual infinite $N$ regime, we will only discuss limits of pseudorandom \textit{states}, using the asymptotically isometric codes framework outlined in Sec.~\ref{sec:notation}.

Recall that we aim to apply the results of \cite{YanEng23} to the holographic setup. The careful reader may have by now noticed two potential points of concern: first, Theorem~\ref{thm:learning} was proved on average for a PRU ensemble rather than for some typical instance; second, it is not obvious that in the large-$N$ limit, for a typical PRU, a distinguishing operator exists. Let us briefly elaborate on each concern before proceeding to state and prove our generalization of Theorem~\ref{thm:learning} that addresses both. 

On the first concern, the learning result of~\cite{YanEng23} is a statement about the \textit{average} fidelity between the evolution of a pseudorandom state (PRS) evolved with the actual PRU and the same state evolved with the algorithm's best guess. The averaging is over the unitary drawn from any PRU ensemble $\PRUens$, the state drawn from any PRS ensemble $\Psi$, and over the output of any quantum learning algorithm $\mathcal{A}$.

While much of the literature (see, e.g., for a few instances~\cite{HayPre07,HarHay13,KimTan20}) does model gravitational systems via randomly sampled states or dynamics, this approach is in some tension with the fact that there is just \textit{one} (fixed) unitary operator that describes the time evolution of ${\cal N}=4$ SYM. We therefore generalize the bound from an average to a \emph{typical} pseudorandom unitary and state by bounding the fluctuations of typical members of the ensemble, first for Haar random unitaries and states and then for pseudorandom unitaries and states. We emphasize that this means a learning bound applies to any \emph{single fixed} unitary and state that are both typical, as we state below in Theorem~\ref{thm:learning-holography}.

The second concern, involving the existence of a ``non-simple observable'' --- i.e., the distinguishing observable that the algorithm fails to predict --- for typical instances of the ensemble at large-$N$, is a bit more subtle. As a corollary to our theorem for learning typical pseudorandom unitaries at large but finite $N$, we prove, under a reasonable assumption on the limit of states as $N\to \infty$, that our theorem also bounds the fidelity at $N\to\infty$. Consequently, our theorem guarantees that for any simple boundary algorithm, there exists a \emph{semiclassical} observable that cannot be reconstructed by the algorithm.

~\\ 
\noindent We now state the holographic version of Theorem \ref{thm:learning}, incorporating the generalizations advertised above. We will subsequently use this version in Sec.~\ref{sec:proof-pseudorand-horizons} to prove Cryptographic Censorship. 

Consider a CFT with a large but finite UV energy cutoff, so that when $\HN$ is finite the dimension of the Hilbert space is finite.\footnote{We are free to consider other ways of regulating the dimension of the Hilbert space if needed.} We prove the theorem below for any large but finite value of $\HN$, under the assumption that the algorithm's output as a function of $U$ is (randomized) $K$-Lipschitz for some fixed constant $K$. We will describe this assumption in greater length in App.~\ref{app:proof-hol-learning}, but it boils down to a certain notion of continuity controlled by a parameter $K$ of the algorithm as a function of $U$.\footnote{It is possible that this assumption actually follows directly from our other assumptions; we leave determination of this question to future work.} We subsequently state and prove a corollary of this theorem for infinite-dimensional Hilbert spaces, under a class of assumptions. To increase readability, we postpone both of the proofs to Apps.~\ref{app:proof-hol-learning} and~\ref{app:proof-large-N}.

\begin{restatable}[Complexity of Learning for Typical Pseudorandom Unitaries]{theorem}{learningholography}\label{thm:learning-holography}   
Let $\secp=\poly(\HN )$ and consider a code subspace of dimension $2^{\kappa}$. Consider any (gravitationally) pseudorandom unitary ensemble $\PRUens$, any pseudorandom state ensemble $\PRSens$, and any constant $\epsilon>0$. For any (randomized) $K$-Lipschitz quantum algorithm $\Alg$ of $\poly(\HN)$ complexity that has oracle access to $U \gets \PRUens$ and produces an approximately norm-preserving operator $\widehat{U}$ as its best guess for $U$, and for any sufficiently large $\HN$, the following holds: for a fraction $\geq 1-e^{-\epsilon^2 2^\kappa /\left(48(1+K)^2\right)} - e^{-\epsilon^2 2^\kappa /192} - \negl'(N)$ of $\PRU\in\PRUens$ and $\ket{\PRS}\in\PRSens$, we have that:

\begin{enumerate} 
\item the following quantity is bounded 
    \begin{equation}\label{eqn:theorem-typical-fidelity} 
        \avgover{\widehat{U}\gets \Alg^{U} } \[ F\(\PRU\ket{\PRS}, \widehat{\PRU}\ket{\PRS}\) \] \leq 1-\const+2\epsilon + \negl (N)~, 
    \end{equation} 
where $\const$ is a constant independent of $\HN$, and
\item there exists an operator $\operator$ (which may depend on $U$, $\ket{\psi}$, and $\widehat{U}$) such that
    \begin{equation}\label{eqn:theorem-operator} 
       \avgover{\widehat{U}\gets \Alg^{U} }  \left|\tr(\operator\widehat{\PRU}\PRS (\widehat{\PRU})^\dagger)-\tr(\operator\PRU\PRS(\PRU)^\dagger)\right|\geq 2\const-4\epsilon - \negl(N)~.
    \end{equation}
\end{enumerate} 
\end{restatable}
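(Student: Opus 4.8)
The plan is to derive Theorem~\ref{thm:learning-holography} from Theorem~\ref{thm:learning} in two moves: (I) upgrade the \emph{average} fidelity bound of Theorem~\ref{thm:learning} to a \emph{typical} one by measure concentration, working first with Haar random $\PRU$ and $\ket{\PRS}$; and (II) transport the typical statement from the Haar ensembles to the (gravitationally) pseudorandom ensembles $\PRUens, \PRSens$ using computational indistinguishability (Definition~\ref{defn:compindist}), at the cost of an additional $\negl'(\HN)$. Part~2 of the theorem, the distinguishing operator $\operator$, will then be extracted from Part~1 by the elementary fidelity--trace-distance inequality, so the real work is in Parts~(I) and (II).

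For move (I), set $f(\PRU,\PRS) := \avgover{\widehat{\PRU}\gets\Alg^{\PRU}}\big[\,|\braket{\PRS|\PRU^\dagger\widehat{\PRU}|\PRS}|^2\,\big]$. Since Theorem~\ref{thm:learning} also holds for Haar $\PRU$ and $\ket\PRS$, we have $\avgover{\PRU\gets\mu,\ \ket\PRS\gets\mu}[f]\le 1-\const$. I would first hold $\ket\PRS$ fixed and view $f$ as a function of $\PRU$: it is Lipschitz with constant $O(1+K)$, where the ``$1$'' comes from the explicit appearance of $\PRU$ in the overlap and the ``$K$'' from the hypothesis that $\PRU\mapsto\widehat{\PRU}=\Alg^\PRU$ is (randomized) $K$-Lipschitz (this is the only place that hypothesis enters). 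Then hold $\PRU$ fixed and view $f$ as a function of the unit vector $\ket\PRS\in\Hil_\code$, where $\dim\Hil_\code=\unidim=2^\secp$; it is $O(1)$-Lipschitz because pure-state fidelity is. Applying Levy's lemma on the unitary group and on the unit sphere of $\Hil_\code$ with these Lipschitz constants, and chaining the two estimates with a union bound (first pass to a typical $\PRU$ with $\avgover{\ket\PRS\gets\mu,\widehat\PRU}[f]\le 1-\const+\epsilon$, then to a typical $\ket\PRS$ for that $\PRU$), yields that for a fraction $\ge 1 - e^{-\epsilon^2\unidim/(48(1+K)^2)} - e^{-\epsilon^2\unidim/192}$ of Haar pairs $(\PRU,\ket\PRS)$ one has $f(\PRU,\PRS)\le 1-\const+2\epsilon$.

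For move (II), the key observation is that the ``bad'' event $\{f(\PRU,\PRS)>1-\const+2\epsilon\}$ is efficiently (approximately) testable given oracle access to $\PRU$ and $\poly(\secp)$ copies of $\ket\PRS$: run $\Alg^\PRU$ polynomially many times to sample circuit descriptions $\widehat\PRU$, prepare $\PRU\ket\PRS$ with the oracle and $\widehat\PRU\ket\PRS$ with its circuit, estimate each fidelity with a SWAP test, average, and compare to the threshold. One then runs a hybrid argument: transport $\ket\PRS$ from $\PRSens$ to the Haar ensemble while keeping $\PRU$ pseudorandom --- here the distinguisher receives the state copies as input and internally samples a key $k\gets\mK$ and implements the efficiently representable $\PRU_k$, so Definition~\ref{defn:compindist} applies and costs only $\negl(\secp)$ --- and then transport $\PRU$ from $\PRUens$ to the Haar ensemble. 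Combining with move (I) bounds the pseudorandom bad fraction by $e^{-\epsilon^2\unidim/(48(1+K)^2)} + e^{-\epsilon^2\unidim/192} + \negl'(\HN)$, which is Part~1. For Part~2, for each sampled $\widehat\PRU$ let $\operator=\operator_{\widehat\PRU}$ be an observable with $\|\operator\|\le 1$ that optimally distinguishes $\widehat\PRU\PRS\widehat\PRU^\dagger$ and $\PRU\PRS\PRU^\dagger$, so that $|\tr(\operator\widehat\PRU\PRS\widehat\PRU^\dagger)-\tr(\operator\PRU\PRS\PRU^\dagger)| = \|\widehat\PRU\PRS\widehat\PRU^\dagger-\PRU\PRS\PRU^\dagger\|_1 \ge 2\big(1-|\braket{\PRS|\PRU^\dagger\widehat\PRU|\PRS}|^2\big)$; averaging over $\widehat\PRU$ and using Part~1 gives $\avgover{\widehat\PRU\gets\Alg^\PRU}|\tr(\operator\widehat\PRU\PRS\widehat\PRU^\dagger)-\tr(\operator\PRU\PRS\PRU^\dagger)|\ge 2\const-4\epsilon-\negl(\HN)$.

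The step I expect to be the main obstacle is closing the hybrid in move (II) at the point where $\PRU$ is transported to the Haar ensemble: a Haar unitary is not efficiently representable, so a distinguisher that must also hold copies of the (now Haar) state $\ket\PRS$ cannot generate both objects internally. I would resolve this by replacing the Haar state ensemble by an efficiently preparable approximate state $t$-design with $t=\poly(\secp)$, which reproduces the Levy concentration of move (I) up to negligible error, so that within the relevant distinguisher the design-state \emph{and} the pseudorandom $\PRU_k$ can both be produced internally; alternatively one invokes the (standard, and satisfied by natural constructions) strengthening of pseudorandomness security against distinguishers holding auxiliary quantum input. A secondary point requiring care is confirming that the randomized-$K$-Lipschitz hypothesis genuinely controls the Lipschitz constant of $f$ as a function of $\PRU$ and feeds the concentration inequality with the advertised linear-in-$\unidim$ exponent; this is best done by phrasing the concentration in terms of the Haar-random output state $\PRU\ket\PRS$ and treating the $\widehat\PRU$-dependence through the Lipschitz bound. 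The remaining ingredients --- SWAP-test accuracy and Chernoff bounds for the average over $\widehat\PRU$, the union bounds, and the later passage to the infinite-dimensional corollary --- are routine.
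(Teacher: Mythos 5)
Your proposal is correct and follows essentially the same route as the paper's Appendix~\ref{app:proof-hol-learning}: Haar concentration via a $(2+2K)$-Lipschitz bound on the averaged fidelity as a function of $U$ (and a $4$-Lipschitz bound in the unitary generating the Haar state), yielding exactly the fractions $e^{-\epsilon^2 2^\kappa/(48(1+K)^2)}$ and $e^{-\epsilon^2 2^\kappa/192}$; a reduction to pseudorandomness via an efficient fidelity-estimating distinguisher; and the (strengthened) Fuchs--van de Graaf inequality for the distinguishing operator. The only substantive divergence is at the step you flag as the main obstacle --- closing the hybrid when $U$ becomes Haar while the distinguisher still holds copies of the state. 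The paper does not need a $t$-design there: it first notes that if the bad event has excess probability jointly over $U\gets\PRUens$ and $\ket{\psi}\gets\PRSens$, then by averaging there is a single fixed state $\ket{\psi_0}$ witnessing the excess; the distinguisher is handed $\poly(N)$ copies of $\ket{\psi_0}$ (and the Haar average $\beta$) as non-uniform advice, and accesses $U$ only through its oracle, which is precisely the access model in Definition~\ref{defn:PRU}, so a Haar unitary never needs to be represented internally. The state hybrid is then run separately (via Lemma~8 of~\cite{YanEng23}) with the unitary distribution arbitrary, so the two transports are never closed simultaneously. Your ``auxiliary quantum input'' alternative is in effect what the paper does; the $t$-design detour is unnecessary.
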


\begin{proof}
See App.~\ref{app:proof-hol-learning}.
\end{proof}

We now extend this theorem to infinite-dimensional Hilbert spaces. For this purpose, in what follows, we reinstate the sequences of PRSs and PRUs to make the $N$-dependence explicit. We now assume that for every $N$ we have a code subspace $\mathcal{H}_N$ on which the time evolution operator $e^{-iH_Nt}$ of the
CFT, for some $t$ (possibly scaling with $N$), has the same matrix elements on the code
subspace as another unitary operator $U_N: \mathcal{H}_N
\rightarrow\mathcal{H}_N$, up to exponential corrections. As in Assumption \ref{ass:isometry}, we also assume that there exists an asymptotically isometric code with map $V_N$ from the bulk Hilbert space, $\mathcal{H}$, to $\mathcal{H}_N$. Furthermore, we
assume there exists a
(gravitationally) PRU ensemble $\{\mathcal{U}_N\}_{N}$ and some $\HN_0$ such that for all $N\geq N_0$, the ensemble $\mathcal{U}_N$ contains 
$U_N$ as a typical member.

Next, we assume there exists a sequence of pseudorandom states $\{\ket{\psi_N} \in \Hil_N\}_\HN$ that are represented on the code. Namely, there exists a $\ket{\xi} \in \mathcal{H}$ such that for every $\ket{\psi_N}$ in the sequence with sufficiently large $N$, $\ket{\psi_N} = V_N \ket{\xi}$ up to exponentially small corrections in the one-norm. Finally, letting the learning algorithm's best guess for
$U_N$ be $\widehat{U}_N$, we assume that $\widehat{U}_N\ket{\psi_N}$ is also represented by the asymptotically isometric code, i.e., there exists a $\hat{\xi}$ such that $\widehat{U}_N\ket{\psi_N} = V_N \ket{\hat{\xi}}$. We also assume that the subset of states at finite $N$ that are the fraction of states that are typical converge to a subset of states in the infinite-$N$ Hilbert space that is of measure $1$ (as the fraction of typical states limits to $1$ as $N\to\infty$ in Theorem~\ref{thm:learning-holography}).

\begin{restatable}[Distinguishing Operator at $N\to\infty$]{corollary}{largeNlimitredo}\label{cor:large-N-limit}  
Let $\{\PRU_N\ket{\PRS_N}\}_\HN$ and $\{\widehat{\PRU}_N\ket{\PRS_N}\}_\HN$ be two sequences of states that are represented on the code, namely there exist $\ket{\xi}$ and $\ket{\hat{\xi}}$ such that $\PRU_N\ket{\PRS_N} = V_N \ket{\xi}$ and $\widehat{\PRU}_N\ket{\PRS_N} = V_N \ket{\hat{\xi}}$. Furthermore, assume that $\{\PRU_N\ket{\PRS_N}\}_\HN$ and $\{\widehat{\PRU}_N\ket{\PRS_N}\}_\HN$ satisfy the assumptions stated above, then both 1. and 2. in Theorem~\ref{thm:learning-holography} also hold for the states $\ket{\xi}$ and $\ket{\hat{\xi}}$. In particular, by taking the constant $\epsilon$ to be $< \const/2$, we have for any $\ket{\hat{\xi}}$ the existence of an operator $\operator\in \mathcal{B}(\mathcal{H})$ such that $$\avgover{\widehat{\xi}} \left|\tr(\operator\hat{\xi})-\tr(\operator\xi)\right| \geq O(1)~.$$
\end{restatable}

\begin{proof}  

See App.~\ref{app:proof-large-N}.

\end{proof}

\subsection{Pseudorandomness implies horizons}\label{sec:proof-pseudorand-horizons}

We are now ready to argue that when time evolution in the boundary theory is sufficiently (i.e., gravitationally) pseudorandom, the bulk has an event horizon. Recall that the logic in this argument is that for a typical pseudorandom time evolution operator $U$, (a) efficient quantum learning algorithms cannot accurately compute the time evolution of $\langle Q \rangle$ for some operator $Q$ that may depend on $U$, and (b) there exists an efficient quantum algorithm that can accurately compute the time evolution of all local operators --- i.e., operators in the causal wedge --- under $U$. Point (a) follows directly from Corollary~\ref{cor:large-N-limit}. 

Point (b) is in a sense obvious holographically given the developments on complexity of reconstruction in AdS/CFT. The general expectation following from~\cite{EngPen21a} is that a simple quantum algorithm should be able to reconstruct the causal wedge. We might worry that the existence of a naked singularity complicates this story, but there is a simple physical argument suggesting otherwise. To measure an operator in the causal wedge, we can send rocket ships from the conformal boundary that travel into the bulk, carry out measurements anywhere in the causal wedge, and then travel back out to the boundary with the result. Since we are only interested in measuring operators that survive at strictly infinite $N$, the complexity to implement these rocket ships from the CFT should not scale exponentially with $N$.\footnote{One might worry that the bulk measurements take too long for the rocket ships to be able to get back out to the boundary. Again, this should not be the case for bulk operators that survive the infinite $N$ limit.} Under the assumption that there exists an efficient process that can produce a guess $V$ for the time evolution operator $U$ using these measurements, we can then combine this with point (a) above to formulate our theorem of Cryptographic Censorship, Theorem~\ref{thm:pseudorandom-horizons} below. 
\begin{figure}
    \centering
    \includegraphics[scale=0.7]{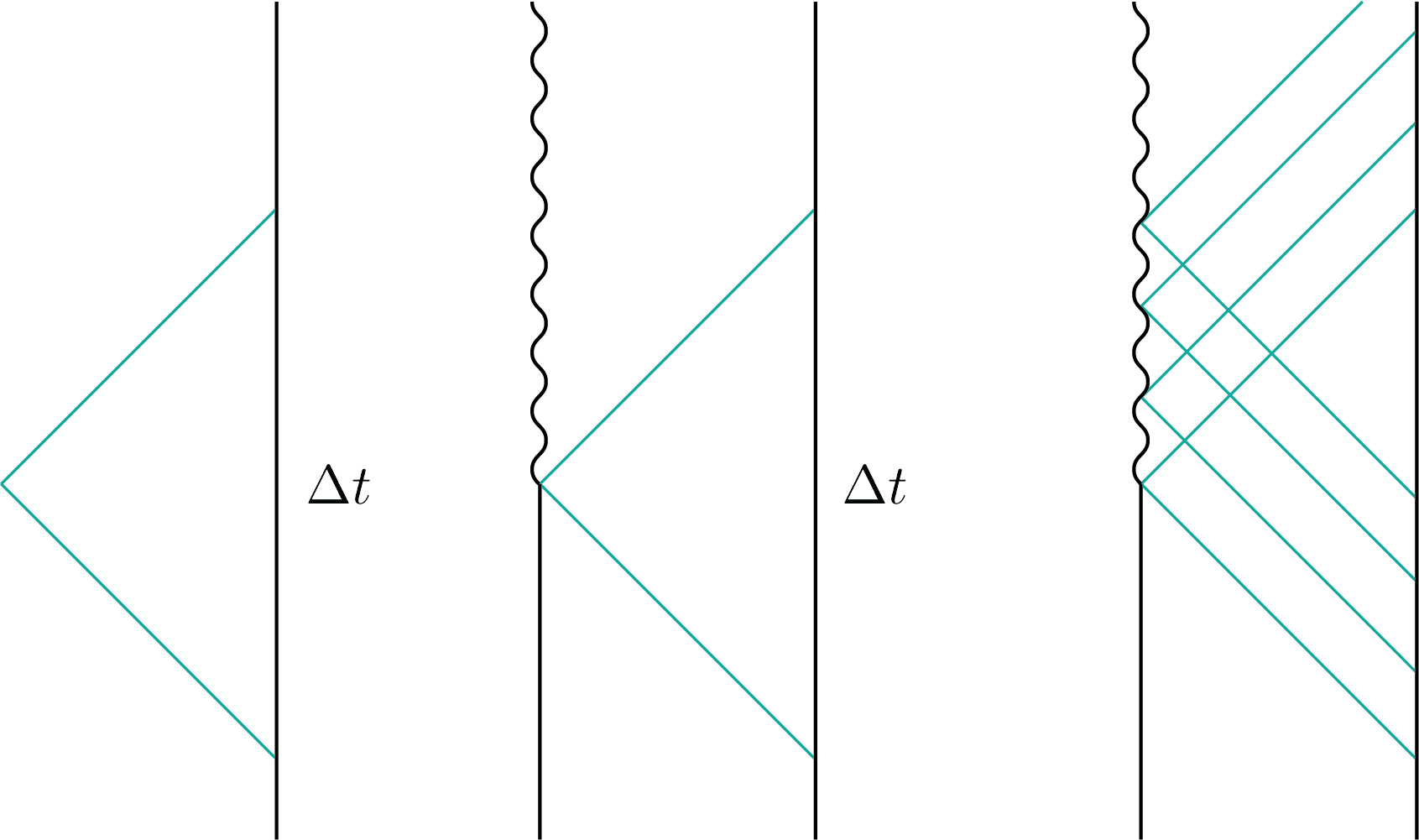}
    \caption{In the leftmost and middle panels, the reconstruction of operators inside the green wedge (which is globally hyperbolic) can be done using the HKLL protocol, which should be simple, insofar as we can compute the expectation values of local operators in the boundary theory on the time band $\Delta t$. We can keep doing this for subsequent time bands until we have reconstructed the full causal wedge, as in the rightmost panel. We use this as an ingredient to motivate the existence of an algorithm that can output an operator $V$ which reproduces the time evolution of operators in the causal wedge.}
    \label{fig:HKLL-argument}
\end{figure}

In many cases, we can motivate this assumption --- the existence of an efficient learning algorithm that can produce such a guess $V$ --- from a computer science point of view. For instance, consider first a domain of dependence corresponding to a boundary time band. If we have access to the time-evolved expectation values of single-trace operators in that time band, we can use the HKLL protocol \cite{HamKab05, HamKab06, HamKab06b} to reconstruct this domain of dependence. See Fig.~\ref{fig:HKLL-argument}. 
In the case where the causal wedge can be generated from a union of such wedges, and if there exists a simple algorithm that can aggregate all these reconstructed wedges into a guess for the time evolution operator, then we can efficiently reconstruct the full causal wedge. There is in fact a learning algorithm that comes close to providing such a guess $V$, as we will explain in more detail below; this thus provides strong motivation that learning the causal wedge is simple, even in the presence of such naked singularities. 

However, it is possible that there exist naked singularities with unusual topologies where this reasoning falls short and we cannot use this `progressive HKLL' protocol combined with the aforementioned existing learning algorithm. An illustration of such a hypothetical singularity is given in Fig.~\ref{fig:cylindrical-sing}, where the interior of a cylindrical singularity does not lie inside any HKLL wedge. In such cases, however, we revert to our initial holographic reasoning to motivate our assumption: we can still use rocket ships to measure operators in the entire causal wedge.

For the case where HKLL in progressive wedges does work, we will now illustrate more precisely that under a mild technical assumption, there exists a quantum learning algorithm satisfying the assumptions of Theorem~\ref{thm:learning} that, given oracle access to $U$, can output an operator $V$ where $V\ket{\psi}$ can accurately reproduce the expectation values of all simple operators in the state $U\ket{\psi}$. Applicability to our strengthened version, Theorem~\ref{thm:learning-holography}, does not immediately follow and would require techniques from measure concentration. This falls outside the scope of the current work, but we expect that this may go through if the algorithm in~\cite{HuaChe22} satisfies certain assumptions. Readers willing to accept the existence of the relevant algorithm --- i.e., an efficient algorithm for reconstructing the causal wedge --- can skip directly to the statement of Cryptographic Censorship, Theorem \ref{thm:pseudorandom-horizons} below.
\begin{figure}
    \centering
    \includegraphics{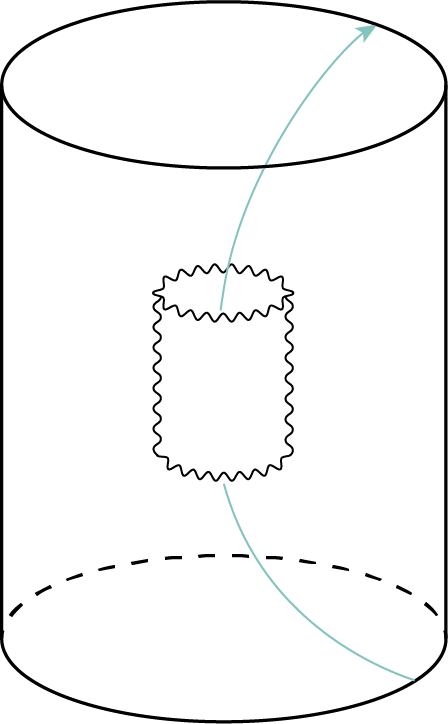}
    \caption{The region of spacetime inside this cylindrical naked singularity does not lie inside of any HKLL wedge. It is possible, however, for a rocket ship to enter the region behind the singularity and come back out to the boundary.}
    \label{fig:cylindrical-sing}
\end{figure}

\paragraph{Efficiently learning the causal wedge.} 
The starting point for our argument for (b) is the learning algorithm of~\cite{HuaChe22}, which we review  below, together with the assumption referenced above that allows us to bridge the gap between~\cite{HuaChe22}'s work and our Theorem~\ref{thm:learning}. 

Given an $n$-qubit system with $n$ large, the authors of \cite{HuaChe22} showed that there exists an efficient learning algorithm that can learn a \emph{bounded degree} representation of any unknown operator ${\cal O}^{(\mathrm{unk})}$. The bounded degree representation ${\cal O}^{(k)}$ takes the form of
\begin{align}\label{eqn:bounded-degree-operator}
    {\cal O}^{(k)} = \sum_{P: |P| \leq k } \alpha_P P~,
\end{align}
where $P$ is a Pauli operator given by a string of Pauli operators $X$, $Y$, $Z$, or $I$ at each site, and $k$ is a constant (independent of $n$). By $|P|$ we mean the number of qubits that are acted on non-trivially by $P$. A bounded-degree operator is then just a sum of individual Pauli operators with $|P|$ not scaling with $n$ in any way. 

The algorithm of \cite{HuaChe22} works by forming a guess for the coefficients of the unknown operator ${\cal O}^{(\mathrm{unk})}$ in the Pauli basis, using a polynomial or subexponential number of measurements involving the operator. The authors show that this guess ${\cal O}^{(k)}$ for the operator reproduces the correct expectation values in certain distributions of quantum states. Quantitatively:

\begin{restatable}[Theorem 13 of~\cite{HuaChe22}]{theorem}{hcp}\label{thm:HCP13} 
There exists an algorithm that, for any $n,\varepsilon, \delta >0$, any unknown $n$-qubit observable ${\cal O}^{(\mathrm{unk})}$ and any $n$-qubit state distribution $\mathcal{D}$ that is invariant under single qubit gates, given training data $\lbrace \rho_{\ell}, \mathrm{tr} \left({\cal O}^{(\mathrm{unk})}\rho_{\ell}\right) \rbrace_{\ell=1}^M$ of size 
\begin{align}
    M = \log \left( \frac{n}{\delta}\right) 2^{O\left(\log \left(\frac{1}{\varepsilon}\right) \log (n) \right)}~,
\end{align}
the algorithm is of $\poly(M)$ complexity and can learn a function $h(\rho)$ that achieves a prediction error
\begin{align}\label{eqn:HCP-prediction-error}
    \avgover{\rho\,\gets \mathcal{D}}  \left|h(\rho) - \mathrm{tr} \left({\cal O}^{(\mathrm{unk})} \rho \right)\right|^2 \leq \varepsilon \left |\left| {\cal O}^{(\mathrm{unk})} \right|\right|^2,
\end{align}
with probability at least $1-\delta$. Here, $\left |\left| {\cal O}^{(\mathrm{unk})} \right|\right|$ is the operator norm of ${\cal O}^{(\mathrm{unk})}$. 
In particular, the algorithm produces a bounded-degree operator ${\cal O}^{(k)}=\sum_{P:|P|\leq k} \alpha_P P$ and thereby learns $h(\rho):=\tr({\cal O}^{(k)}\rho)$. 
\end{restatable}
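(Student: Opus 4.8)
The plan is to reproduce the classical-shadows-style argument of~\cite{HuaChe22}: first reduce the unknown observable to a low-degree Pauli polynomial up to $L^2(\mathcal{D})$ error, then learn that polynomial by regression over a polynomially-sized feature set, and finally bound the generalization error by a standard uniform-convergence estimate.

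First I would expand ${\cal O}^{(\mathrm{unk})} = \sum_P \alpha_P P$ over all $4^n$ Pauli strings and introduce the truncation ${\cal O}^{(k)} = \sum_{|P|\le k}\alpha_P P$, with $k = O(\log(1/\varepsilon))$ to be fixed. The central lemma is a \emph{truncation bound}: since $\mathcal{D}$ is invariant under single-qubit gates, sampling $\rho\gets\mathcal{D}$ is the same as sampling $\rho\gets\mathcal{D}$ and then conjugating by a Haar-random product $V = V_1\otimes\cdots\otimes V_n$ of single-qubit unitaries. Averaging over $V$ makes the features $\phi_P(\rho) := \tr(P\rho)$ orthogonal, $\mathbb{E}_{\rho\gets\mathcal{D}}[\phi_P(\rho)\phi_Q(\rho)] = 0$ for $P\ne Q$ (any qubit on which $P$ and $Q$ differ kills the expectation after averaging that single-qubit gate), and makes $\mathbb{E}_{\rho\gets\mathcal{D}}[\phi_P(\rho)^2]$ exponentially small in the Pauli weight $|P|$. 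Consequently ${\cal O}^{(k)}$ is exactly the $L^2(\mathcal{D})$-orthogonal projection of $\rho\mapsto\tr({\cal O}^{(\mathrm{unk})}\rho)$ onto the degree-$\le k$ features, and $\mathbb{E}_{\rho\gets\mathcal{D}}\big|\tr\big(({\cal O}^{(\mathrm{unk})}-{\cal O}^{(k)})\rho\big)\big|^2 \le c^{k}\sum_P|\alpha_P|^2 \le c^{k}\,\|{\cal O}^{(\mathrm{unk})}\|^2$ for some $c<1$, which is $\le \tfrac{\varepsilon}{2}\|{\cal O}^{(\mathrm{unk})}\|^2$ once $k$ is a suitable $n$-independent constant of order $\log(1/\varepsilon)$.

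Next I would cast learning ${\cal O}^{(k)}$ as linear regression: the hypothesis class is $h_{\hat\alpha}(\rho)=\sum_{|P|\le k}\hat\alpha_P\,\phi_P(\rho)$, a linear model in the $N_k := \binom{n}{\le k}3^{k} = n^{O(\log(1/\varepsilon))}$ features $\phi_P$, each of magnitude $\le 1$. Run norm-constrained (LASSO-type) empirical risk minimization on $\{(\rho_\ell,\tr({\cal O}^{(\mathrm{unk})}\rho_\ell))\}_{\ell=1}^M$, estimating the feature values $\phi_P(\rho_\ell)$ to sufficient precision from a handful of copies of each $\rho_\ell$ via randomized measurements (the shadow norm of weight-$\le k$ Paulis is $O(1)$). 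A Rademacher-complexity / uniform-convergence bound then controls the gap between empirical and population loss by $O\!\big(\|{\cal O}^{(\mathrm{unk})}\|^2\sqrt{\log(N_k/\delta)/M}\big)$; choosing $M = \log(n/\delta)\,2^{O(\log(1/\varepsilon)\log n)}$ makes this $\le \tfrac{\varepsilon}{2}\|{\cal O}^{(\mathrm{unk})}\|^2$ with probability $\ge 1-\delta$. Combining with the truncation bound through the triangle inequality in $L^2(\mathcal{D})$ yields the claimed prediction error~\eqref{eqn:HCP-prediction-error}. Efficiency is then immediate: ERM over $N_k = \poly(M)$ variables is a convex program solvable in $\poly(M)$ time, and the output is the bounded-degree operator ${\cal O}^{(k)}=\sum_{|P|\le k}\hat\alpha_P P$ with $h(\rho)=\tr({\cal O}^{(k)}\rho)$.

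I expect the main obstacle to be the truncation lemma — specifically, establishing the exponential-in-weight decay of $\mathbb{E}_{\rho\gets\mathcal{D}}[\phi_P(\rho)^2]$ and the exact vanishing of the cross-terms \emph{uniformly over all states} $\rho$, using only single-qubit-gate invariance of $\mathcal{D}$ rather than a full two-design property, and verifying that the truncation degree $k$ can indeed be kept independent of $n$ while the feature count $N_k$ stays polynomial. The remaining steps are routine statistical learning theory, and since the statement is verbatim Theorem~13 of~\cite{HuaChe22}, it can alternatively simply be invoked.
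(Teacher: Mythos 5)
The paper does not prove this statement: it is imported verbatim as Theorem~13 of~\cite{HuaChe22}, so there is no internal proof to compare against. Your sketch correctly reconstructs the argument of that reference --- single-qubit-gate invariance gives orthogonality of the Pauli features and the $3^{-|P|}$ decay of $\mathbb{E}_{\rho\gets\mathcal{D}}[\tr(P\rho)^2]$, hence a constant truncation degree $k=O(\log(1/\varepsilon))$ and a quasi-polynomial feature set learnable by regression with the stated $M$ --- and, as you note, for the purposes of this paper the theorem is simply invoked.
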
 
\noindent One example of such a distribution $\mathcal{D}$ is the distribution of Haar random states, which is of particular relevance to our setting. 

In general, a bounded-degree operator is any operator of the form in \eqref{eqn:bounded-degree-operator}.
In the context of AdS/CFT, we expect that bounded-degree operators correspond to local operators in the CFT. In particular, expectation values of these operators fully determine the causal wedge of the entire boundary in the large-$N$ limit. Recall that our aim is the illustration of an algorithm that falls under the purview of Theorem~\ref{thm:learning} and can learn a $V$ that approximates the time-evolved expectation values of bounded-degree operators under $U$. Taking ${\cal O}^{(\mathrm{unk})}=U$, the algorithm in the above theorem (given the specified training data) learns an operator $U^{(k)}$ such that
\be\label{eqn:crypto-censor-HCP-thm13} 
\tr(U^{(k)}\rho)\approx \tr(U\rho)
\ee 
(on average over $\rho$). Note that this algorithm requires access to training data comprised of expectation values of the unknown operator ${\cal O}^{(\text{unk})}=U$ in a collection of states $\lbrace \rho_{\ell} \rbrace$. We expect that there exists an efficient algorithm to estimate the expectation values of $U$ in this collection of efficiently preparable states given oracle access to $U$ so that the learning algorithm in Theorem~\ref{thm:HCP13} then falls under the purview of Theorem~\ref{thm:learning}. 

In order to obtain a guarantee that this $U^{(k)}$ is such a $V$, we will make a mild assumption that \eqref{eqn:crypto-censor-HCP-thm13} holds when multiple operators are concatenated. In general, we may expect that for any operators ${\cal O}_1$, ${\cal O}_2$, ${\cal O}_3$, using the learning algorithm in the theorem above to produce ${\cal O}_1^{(k)}$, ${\cal O}_2^{(k)}$, ${\cal O}_3^{(k)}$, the following holds:
$$\tr({\cal O}_1^{(k)}{\cal O}_2^{(k)}{\cal O}_3^{(k)}\rho)\approx \tr({\cal O}_1 {\cal O}_2 {\cal O}_3\rho)~.$$
However, to bridge the gap to Theorem~\ref{thm:learning}, it suffices that the $U^{(k)}$ produced by the algorithm satisfies the following for any bounded-degree operator ${\cal O}$:
\be \label{eq:HCP-assumption}
\tr((U^{(k)})^\dagger {\cal O} U^{(k)} \rho)\approx \tr(U^\dagger {\cal O} U \rho)~,\ee
where the error scales with $1/N$ if the algorithm complexity is subexponential in $N$, which is a complexity to which Theorem~\ref{thm:learning} (and its strengthening Theorem~\ref{thm:learning-holography}) can be applied.\footnote{Note that although in previous sections we have considered polynomial complexity algorithms, we expect our results there to also apply to algorithms of subexponential complexity.}  
The upshot is that under the assumption \eqref{eq:HCP-assumption} above, there is an efficient algorithm that obeys the assumptions of Theorem~\ref{thm:learning}\footnote{Note that strictly speaking, Theorem~\ref{thm:learning} only applies to this algorithm if $V$ is approximately norm-preserving. By taking e.g.\ $O=I$ in \eqref{eq:HCP-assumption} (which holds for most $\rho$ in the Haar distribution), we expect that the guess $V$ should satisfy this condition.} and outputs a guess $V=U^{(k)}$ for the time evolution unitary $U$, which correctly predicts expectation values of local operators in the CFT, up to polynomial corrections in $1/N$.

\medskip
To summarize: we use the above (and expectations from holography) to motivate that there exists an algorithm, which we will call $\Alg_{CW}$, that can always learn time evolution within the causal wedge but falls under the purview of Theorem~\ref{thm:learning}, which guarantees that for pseudorandom time evolution, this algorithm cannot learn the time evolution of \textit{all} operators. This forms the basis of our theorem of Cryptographic Censorship. Under the assumptions above, which simply boil down to the existence of $\Alg_{CW}$, together with the assumptions of Section 1.1, we have the following theorem: 

\begin{restatable}[Cryptographic
Censorship]{theorem}{PRhor}\label{thm:pseudorandom-horizons} 
As in Assumption \ref{ass:bulkbndytimeevol}, let $U_{\mathrm{fund},N} =V_N U_{\rm bulk} V_N^{\dagger}$ where $U_{\rm bulk}$ is the time evolution operator of the bulk quantum fields acting on $\mathcal{H}$. Furthermore, assume that $U_{\mathrm{fund},N}$ is well approximated on $V_N \mathcal{H} \subseteq \Hil_N$  by a unitary $U_N$ which is typical in a gravitationally pseudorandom unitary ensemble, $\mathcal{U}_N$. Assume also (as motivated above) that there exists an efficient algorithm for causal wedge reconstruction that is (randomized) $K$-Lipschitz for some $K>0$, and which outputs (approximately) norm-preserving operators. Let $\ket{\psi_N(t)}$ be a state at any time $t$ whose complete time evolution is dual to a strongly causal geometric
bulk $(M, g)$.\footnote{When we say that $\ket{\psi_N(t)}$ is dual to a geometric state, we have in mind the definition in Ass.~\ref{ass:isometry}.}
Under the assumption that $U_N \ket{\psi_N(t)}$ is itself the image of a bulk state $\ket{\psi(t')}$ under $V_N$, then if there exists a single time $t$ such that $U_N\ket{\psi_N(t)}$ is typical in the pseudorandom state ensemble $\Psi = \mathcal{U}_N\ket{\psi_N(t)}$,\footnote{Recall that the typical states constitute a $ \lim_{N\to\infty}1-e^{-\epsilon^2 2^\kappa /\left(48(1+K)^2\right)} - e^{-\epsilon^2 2^\kappa /192} - \negl'(N)=1$ (recall $\kappa = \poly (N)$) fraction of states in $\PRSens$, and similarly for typical $U \in \PRUens$.} then there is an event horizon in $(M,g)$.
\end{restatable}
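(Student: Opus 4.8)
The plan is to play the two ingredients assembled in this section against each other: (a) the no‑go result saying that no efficient $K$-Lipschitz learner can predict the time‑evolved expectation value of \emph{some} operator $Q$ after gravitationally pseudorandom evolution, and (b) the hypothesized existence of an algorithm $\Alg_{CW}$ that \emph{does} predict the time‑evolved expectation value of \emph{every} operator whose bulk dual lies in the causal wedge of $\mathscr{I}$. Their incompatibility forces some semiclassical bulk observable to lie outside the causal wedge, and since the boundary of the causal wedge of $\mathscr{I}$ is the event horizon, $(M,g)$ must contain one. Throughout, the fact that $U_{\mathrm{fund}}$ agrees with the pseudorandom $U$ on $\mathcal{H}_{\rm code}$ up to exponentially small corrections means expectation values of code‑subspace observables under the two evolutions differ negligibly, so any $\Omega(1)$ gap derived for $U$ transfers to the physical evolution.

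\emph{Step 1 (producing a non‑reconstructible bulk operator).} By hypothesis $U$ is typical in the GPRU ensemble $\mathcal{U}$, the state $U\ket{\psi(t)}$ is typical in the PRS ensemble $\Psi=\mathcal{U}\ket{\psi(t)}$, and $\Alg_{CW}$ is efficient, (randomized) $K$-Lipschitz, and meets the hypotheses of Theorem~\ref{thm:learning-holography}. Applying part~2 of that theorem with the constant $\epsilon<\const/2$, and passing to the $N\to\infty$ limit through Corollary~\ref{cor:large-N-limit} (using the stated Cauchy‑sequence and measure‑one convergence assumptions for $\{\psi^{(N)}\}$, $\{U^{(N)}\psi^{(N)}(U^{(N)})^\dagger\}$ and $\{\widehat{U}^{(N)}\psi^{(N)}(\widehat{U}^{(N)})^\dagger\}$), we obtain a bounded operator $Q^{(\infty)}$ on $\Hil^{(\infty)}$ — possibly depending on $U$, on the state, and on the output of $\Alg_{CW}$ — with
\begin{equation*}
\avgover{\widehat{\xi}^{(\infty)}}\left|\tr\!\left(Q^{(\infty)}\,\widehat{\xi}^{(\infty)}\right)-\tr\!\left(Q^{(\infty)}\,\xi^{(\infty)}\right)\right|\ \geq\ 2\const-4\epsilon\ =\ \Omega(1),
\end{equation*}
where $\xi^{(\infty)}$ is the true evolved state and $\widehat{\xi}^{(\infty)}$ the one built from $\Alg_{CW}$'s guess $V$. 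Since $\Hil^{(\infty)}$ is identified with the algebra of QFT observables on the emergent bulk, $Q^{(\infty)}$ is a genuine semiclassical bulk observable, and $\Alg_{CW}$ demonstrably fails to reconstruct $\langle Q^{(\infty)}\rangle$.

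\emph{Step 2 (localizing $Q^{(\infty)}$ and concluding).} By the assumed causal‑wedge reconstruction property of $\Alg_{CW}$ — motivated in this section via progressive HKLL, the learner of Theorem~\ref{thm:HCP13} under assumption~\eqref{eq:HCP-assumption}, and the rocket‑ship argument in topologically pathological cases — the guess $V$ reproduces the $U$‑evolved expectation value of every operator supported in the causal wedge $W_{\rm C}[\mathscr{I}]=J^-[\mathscr{I}]\cap J^+[\mathscr{I}]$, up to $\poly(1/N)$ corrections. If the bulk dual of $Q^{(\infty)}$ were supported in $W_{\rm C}[\mathscr{I}]$, then $\Alg_{CW}$ would reconstruct $\langle Q^{(\infty)}\rangle$ to within $o(1)$, contradicting Step~1. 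Hence the dual of $Q^{(\infty)}$ is supported at least partly outside $W_{\rm C}[\mathscr{I}]$, so $W_{\rm C}[\mathscr{I}]$ is a proper subregion of $(M,g)$ with nonempty boundary. In a strongly causal spacetime that boundary is contained in $\partial J^-[\mathscr{I}]\cup\partial J^+[\mathscr{I}]$, the future and past event horizons; therefore at least one is nonempty and $(M,g)$ contains an event horizon, as claimed.

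\emph{Main obstacle.} The deepest input — the existence of an efficient, $K$-Lipschitz algorithm that reconstructs the \emph{entire} causal wedge while falling under the purview of the strengthened no‑go Theorem~\ref{thm:learning-holography} — is imported as a hypothesis, and the excerpt itself flags upgrading the Theorem~\ref{thm:HCP13}-based learner to this setting (via measure concentration) as beyond its scope. Granting that, the remaining delicate points are (i) the $N\to\infty$ passage in Step~1, where one must ensure that $Q^{(\infty)}$ does not degenerate — for instance, is not pushed entirely into the $O(1/N)$‑suppressed directions of the code subspace — which is precisely what the Cauchy and measure‑one convergence assumptions are designed to prevent; and (ii) the holographic translation in Step~2, namely that failure of efficient reconstruction genuinely forces the bulk dual out of the causal wedge, which rests on the geometrization of reconstruction complexity~\cite{EngPen21a} underpinning this program.
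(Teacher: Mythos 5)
Your proposal is correct and follows essentially the same route as the paper's own proof: invoke Theorem~\ref{thm:learning-holography} together with Corollary~\ref{cor:large-N-limit} to produce a distinguishing operator $Q$ in the $N\to\infty$ algebra that the assumed causal-wedge learner $\Alg_{CW}$ fails to predict, note that $\Alg_{CW}$ by hypothesis predicts every causal-wedge observable, and conclude that $Q$'s bulk dual lies outside the causal wedge so an event horizon exists. The only (cosmetic) difference is that the paper explicitly extracts a single output $\widehat{U}$ from the average over $\widehat{U}\gets\Alg_{CW}^U$ for which the distinguishing operator $Q_*$ works, whereas you keep the average; both are fine.
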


\begin{proof} 
Let $U_N$ be the pseudorandom unitary approximating $V_Ne^{-iH_{\rm bulk} \Delta t}V_N^{\dagger}$ over a time $\Delta t$. Let the fundamental quantum state of the system be $\ket{\psi_N(t)}$ at some initial time $t$, and $\ket{\psi_N(t')} = U_N\ket{\psi_N(t)}$ at some time $t' = t+ \Delta t$. Let the bulk duals be $\ket{\psi(t)}$ and $\ket{\psi(t')}$ such that $\ket{\psi_N(t)} = V_N\ket{\psi(t)}$ and $\ket{\psi_N(t')} = V_N \ket{\psi(t')}$. See Fig.~\ref{fig:thm5}. By assumption, we can find a $t$ such that $\ket{\psi_N(t')}$ is typical in the PRS ensemble $\Psi = \mathcal{U}_N \ket{\psi_N(t)}$.
\begin{figure}
    \centering
    \includegraphics[scale=0.8]{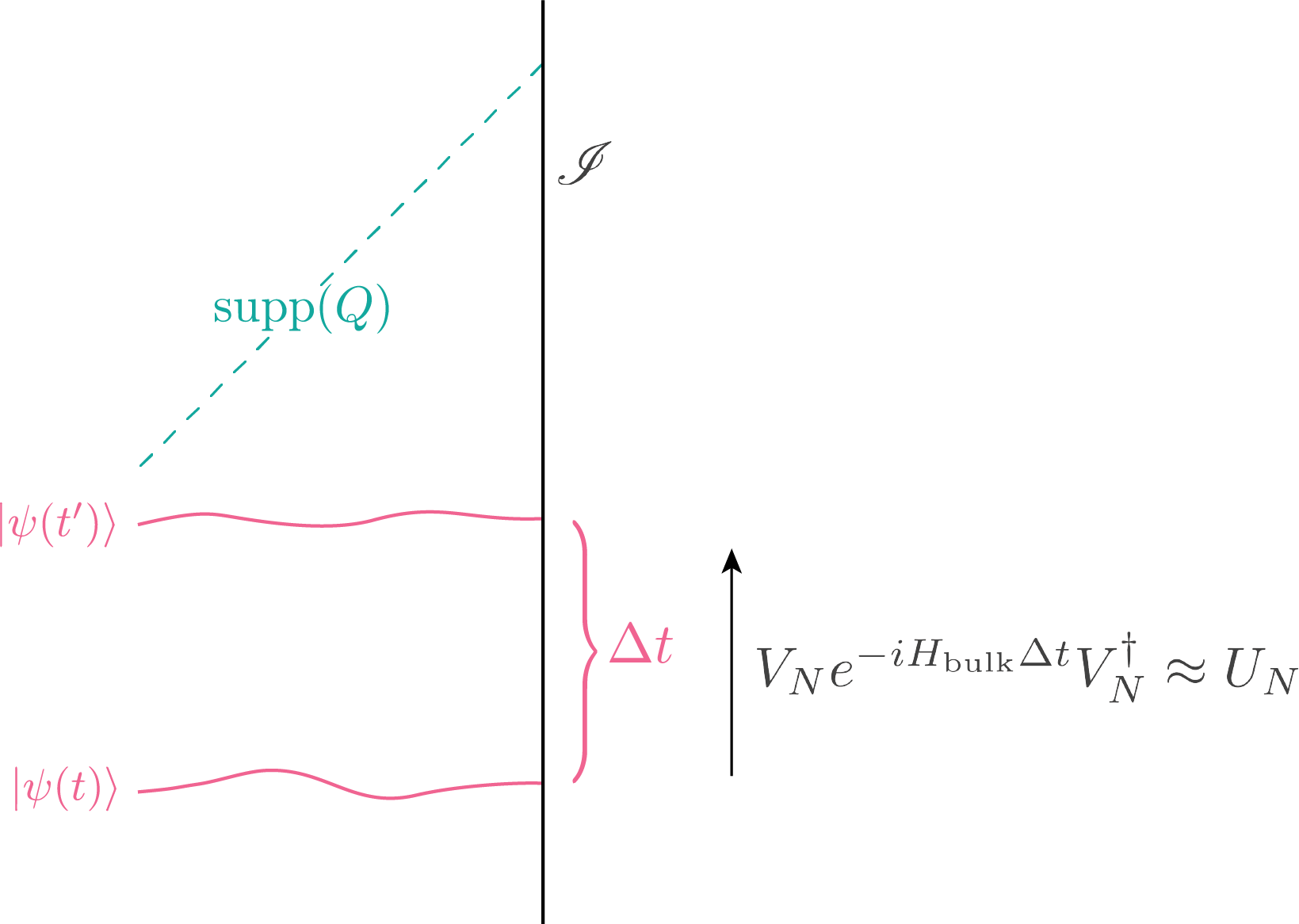}
    \caption{By assumption, the time evolution operator $U_{\rm fund, N} = e^{-i H_N \Delta t}$ on $\Hil_{\rm code}$ over a time $\Delta t$ is well approximated by some $U_N \gets \PRUens_N$, such that $\ket{\psi_N (t')}$ is typical in the PRS ensemble $\Psi = \PRUens_N \ket{\psi_N (t)}$. In this case, Theorem~\ref{thm:pseudorandom-horizons} guarantees the existence of a distinguishing operator $Q_*$ that has support outside the causal wedge: there must be a horizon (indicated by the green dotted line).}
    \label{fig:thm5}
\end{figure}

Consider a particular (possibly randomized) algorithm, $\mathcal{A}_{CW}^{U_N}$, which by the reasoning and assumptions above based on Theorem \ref{thm:HCP13} we can choose to correctly predict the values of all operators $\mathcal{O}$ that are in the causal wedge algebra $\mathcal{N}_{\rm cw}$ for any output $\widehat{U}_N \gets \Alg_{CW}^{U_N}$. By Corollary~\ref{cor:large-N-limit}, on average over the $\widehat{U}_N$ there exists a distinguishing bulk operator, $Q_* \in \mathcal{B}(\mathcal{H})$, which distinguishes between the bulk states $\ket{\psi(t')}$ and $\ket{\widehat{\psi}(t')}$ such that $U_N \ket{\psi_N(t)} = V_N \ket{\psi(t')}$ and $\widehat{U}_N\ket{\psi_N(t)} = V_N \ket{\widehat{\psi}(t')}$. Thus there must exist at least one $\widehat{U}_N$ for which the corresponding $Q_{*}$ distinguishes between $\ket{\psi(t')}$ and $\ket{\widehat{\psi}(t')}$. Note that since $\mathcal{B}(\mathcal{H})$ is the set of all bulk quantum field theory operators, we are guaranteed that $Q_*$ is itself a good semiclassical operator. Since, by construction, $\mathcal{A}_{CW}^{U}$ can correctly predict all causal wedge operators, the operator $Q_*$ must have support in a bulk spacetime region that is outside the causal wedge. Thus a horizon exists.
\end{proof}

We end this section by pointing out that we in fact get a more general result: the proof of Theorem~\ref{thm:pseudorandom-horizons} goes through for any time evolution operator $U$ that cannot be predicted (in the sense of Theorem~\ref{thm:learning-holography}) by efficient quantum algorithms. In particular, the diagnostic does not have to be pseudorandomness: our theorem holds for any $U$ that is exponentially complex to learn.

\subsection{Possible subspaces with pseudorandom time evolution}\label{sec:possible-subspaces}

We now discuss possible code subspaces which accommodate gravitationally pseudorandom dynamics. We will outline two cases of interest. First, we consider a microcanonical window about some energy above the black hole threshold. Second, we will argue that there can exist a code subspace where all the states have the same classical geometry, with low energy modes on top of it, and for which it is possible to obtain gravitational pseudorandomness within an $O(1)$ time on the boundary.

\subsubsection*{The microcanonical window} \label{sec:microcanonical}

An obvious candidate for a subspace in a holographic quantum system which admits pseudorandom time evolution is a microcanonical window about some energy $E_0 \sim N^2$ with energy width $\Delta E\sim O(N)$. The microcanonical ensemble defined by this window is known to have a stationary two-sided black hole dual~\cite{Mar18}. Here we are interested in applying our result to any state in this energy window with a semiclassical geometric bulk description. 

For holographic theories that exhibit maximal chaos, it is an interesting question how large $T$ needs to be for $U(T) = e^{-iHT}$ to be well approximated by a pseudorandom unitary ensemble. A related (but different) question is how large $T$ needs to be in a maximally chaotic system for the unitary to be well approximated by a unitary $k$-design. For $k=2$, this question was answered in \cite{RobYos16} by directly relating out-of-time-order (OTO) correlators to the question of how close a given ensemble is to a unitary $2$-design. The authors found that if the (square of the) $2$-point OTO correlator is small when averaged over all possible operators, then the time evolution is well modeled by a $2$-design.\footnote{The authors proved this statement for an $n$-qubit system and the average is over all Pauli operators on this $n$-qubit Hilbert space, which provide a basis for all operators.}

As discussed above, $k$-designs are a different but related notion from pseudorandom unitary ensembles. That said, this intuition suggests that the time $T$ at which $U(T)$ becomes well modeled by a pseudorandom unitary is at least the scrambling time for chaotic quantum systems, since this is the time when the OTO is expected to match those of a 2-design. Note that if $U(T)$ is ever going to be modeled by a pseudorandom unitary, it better be at times $T$ sub-exponential in the entropy, otherwise $U$ will not be efficiently generated. Thus, the expectation is that $U(T)$ should become pseudorandom on the microcanonical subspace for sufficiently large $T$ larger than the scrambling time. It would of course be nice to know what exactly this timescale is in certain systems. 

Applying Theorem \ref{thm:pseudorandom-horizons} to this subspace, we then see that for any state in the microcanonical window, $\ket{\psi_0}$, that is both dual to a geometric state and is also typical in the pseudorandom state ensemble, then \textit{with a probability that is negligibly (in $N$) close to one there must exist a horizon in the bulk dual to $\ket{\psi_0}$.} It is natural to ask what \textit{fraction} of the states in the microcanonical window this corresponds to. We cannot give a definitive answer to this question in the case of pseudorandom ensembles; let us just note that for Haar random unitaries and states, this would include all but a doubly-exponentially small fraction of the states. However, since it is not clear whether typical Haar random states can have semiclassical geometric duals,\footnote{ Note that if the fraction of geometric states which violate our theorem is much larger than doubly exponentially small, then there would need to be a conspiracy between having a geometry and the ability to reconstruct all operators. Since we expect spacetimes with horizons to be much more ``numerous'' than those without, this would be rather surprising.} we prefer to use pseudorandom ensembles. It is not implausible that a similar fraction of the states in some pseudorandom ensemble would obey this property, but we shall remain agnostic on this point in the absence of a proof of this statement.

\subsubsection*{A code subspace for $O(1)$ time bands on the boundary }\label{sec:O1timeband}
We will now consider a different subspace, where we restrict to $O(1)$ time bands on the boundary in order to preserve the geometry of states in our code subspace within exponentially suppressed errors.
We will start by making some assumptions that will apply to what follows, as well as in Sec.~\ref{sec:singularities-pseudorandom}. These assumptions will describe how we construct a code subspace to describe time evolution of probe fields for $O(1)$ times around geometries with naked singularities. 

Consider a CFT labeled by a parameter $N$, such that the theory in the $N\rightarrow
\infty$ limit is dual to Einstein gravity coupled to matter.\footnote{We always assume
that any effective 't Hooft coupling in the theory has been taken to infinity.} 
By this we mean that the theory participates in an asymptotically isometric code given by $V_N: \Hil \to \widehat{\Hil}_N$, where $\Hil$ is the bulk Hilbert space of Einstein gravity perturbatively coupled to matter and $\widehat{\mathcal{H}}_N$ is the finite-$N$ Hilbert space of the CFT.  
We are now going to consider a finite-dimensional subspace
$\mathcal{H}_N$ of $\widehat{\mathcal{H}}_N$ that we will assume to have certain properties. This 
will formalize assumptions often implicitly made in the study of AdS/CFT, but
since we will (1) study somewhat exotic setups, and (2) apply certain rigorous theorems, we are careful to state our assumptions.

Take $\ket{\psi^{i}_{N}}$ to be a basis of $\mathcal{H}_N$ labeled by $i$.
We will assume that there exists a one-parameter family of unitary operators
\begin{equation}
\begin{aligned}
    U_N(t): \mathcal{H}_N \rightarrow \mathcal{H}_N
\end{aligned}
\end{equation}
such that for all $t \sim O(N^0)$,
\begin{equation}
\begin{aligned}
    V_N\ket{\psi^i(t)} = U_N(t) \ket{\psi^{i}_{N}}
\end{aligned}
\end{equation}
with $\ket{\psi^i(t)} \in \Hil$ a bulk state, time-evolved using the bulk Hamiltonian. In other words, $\ket{\psi^i(t)}$ is a good bulk semiclassical state. Next, we will assume that if
$e^{-iH_Nt}$ is the true time evolution operator in the full CFT at finite $N$, then for
any fixed $t\sim O(N^{0})$ and any state $\ket{\varphi}\in \widehat{\mathcal{H}}_N$,  
\begin{equation}
\begin{aligned}
    \bra{\varphi} e^{-iH_N t}\ket{\psi^{i}_N} = \bra{\varphi} U_N(t)\ket{\psi^{i}_N} +
    O(e^{-c G^{a}})\label{eq:Uapprox}
\end{aligned}
\end{equation}
for some $c>0$, $a<0$. For $\ket{\varphi}$ in the orthogonal complement of $\mathcal{H}_N$, the first term is identically zero, so our statement implies that at large $N$, CFT time evolution for an $O(1)$ amount of time preserves our code
subspace $\mathcal{H}_N$, up to exponentially small corrections. 
This is just the usual intuition that different geometries have exponentially
small overlaps, so unless we work to exponential precision, we can consider
time evolution to preserve our code subspace. Instead of working with $e^{-iHt}$, we replace this operator by a unitary that genuinely acts on $\mathcal{H}_N$. 

Then, under the above assumptions, Theorem \ref{thm:pseudorandom-horizons} implies the following. Consider an asymptotically AdS spacetime $(M,g)$ where the boundary dynamics in an $O(1)$ time band evolve according to a unitary $U_{\rm fund}=e^{-iH_N t}$ and as in the above, can be approximated by $U_N(t)$ which is a typical gravitationally pseudorandom unitary (as defined in Def.~\ref{defn:gravpsr}). Let the state dual to $(M,g)$ be typical in a pseudorandom state ensemble. Then there must be an event horizon in $(M,g)$.

This is the version of Cryptographic Censorship that we will use in the next section.

\section{From Cryptographic to Cosmic Censorship} \label{sec:singularities-pseudorandom}

In Section \ref{sec:O1timeband}, we adapted our main result --- Cryptographic Censorship --- to code subspaces of states that have the same classical geometry and undergo gravitationally pseudorandom boundary dynamics in an $O(1)$ time band. In this section we will argue that a potential example of such a code subspace is a spacetime with a `large' naked singularity.  
Our goal is to connect Cryptographic Censorship to the more familiar notion of cosmic censorship, which is roughly the expectation that naked singularities formed from collapse must be hidden behind horizons. 

Cryptographic Censorship guarantees that singularities that are well-described by gravitationally pseudorandom dynamics must incur event horizons. The central task in connecting Cryptographic Censorship to cosmic censorship is thus understanding whether --- or when --- singularities are associated with (sufficiently) pseudorandom dynamics. If generic singularities formed from matter collapse do indeed require pseudorandom time evolution, then they cannot exist in horizonless spacetimes. This version of censorship of naked singularities is distinct from the original version of cosmic censorship: even when a singularity is associated with pseudorandom dynamics, our result does not imply that the singularity is completely hidden.  This is not a bug but a feature, since certain naked singularities do exist: e.g., the Gregory--Laflamme (GL) instability, which occurs when a black string in $D>4$ pinches off and fragments into multiple black holes~\cite{GreLaf93, GreLaf94, LehPre10}.

In the following, we take some first steps towards connecting a version of cosmic censorship to Cryptographic Censorship, culminating in Sec.~\ref{sec:QCC}, where we state a cryptographically-inspired formulation of quantum cosmic censorship. We begin by defining a mathematically precise notion of the size of a singularity, and specifically identify how large the singularity must be in the $G_{N}\rightarrow 0$ limit to accommodate gravitationally pseudorandom fundamental dynamics. This definition usefully distinguishes between a Planckian and a
classical naked singularity (and similarly for non-naked singularities of large and small sizes). This is in itself useful, as it separates naked singularities into classes that are expected  to be excluded and those that we expect may be permitted in a quantum theory of gravity (e.g., evaporating black holes or Gregory--Laflamme instabilities).\footnote{There is some extant literature dedicated to investigating the size of a singularity; see, e.g., \cite{Emp20}. These approaches differ from ours, as we are motivated by quantum computational considerations in the fundamental quantum description of the system.}
Our second step uses this quantification to provide some evidence that large classical singularities can indeed be associated with gravitationally pseudorandom dynamics and thus incur horizons by Cryptographic Censorship. Finally, we propose a Quantum Weak Cosmic Censorship Conjecture, which formally requires large naked singularities to be atypical. 

\subsection{The size of singularities} \label{sec:size-of-sings}
We are interested in an unambiguous and diffeomorphism-invariant way of measuring the extent of a naked singularity. Since a naked singularity can only be distinguished from a cloaked one via access to $\mathscr{I}$, it is clear that this must be defined asymptotically. A definition on $\mathscr{I}$ would be automatically diffeomorphism-invariant and would more readily admit a straightforward dual CFT interpretation. To find this criterion, we will leverage the definition of a naked singularity --- i.e., causal visibility to $\mathscr{I}$ --- to define what will roughly correspond to the time evolution in the CFT that is necessary, in the Schr\"odinger picture, to reconstruct near-singularity physics. 

To make this precise, we first use an extant mathematical construction to single out the singularity as a set, a nontrivial task given that the singularity is not part of the spacetime manifold. Second, we identify the time intervals on $\mathscr{I}$ that are causally-separated from this set, and finally find the minimal boundary time bands that are necessary for a full reconstruction of the singularity. The spacetime volume of these time bands in a fixed conformal frame is essentially our definition of the size of the singularity in that frame. There is thus a conformal frame ambiguity in our definition of singularity size. This, however, is immaterial for our desired application: we shall only care about scaling of the singularity size with $G_N$, which is independent of the choice of conformal frame. Using this scaling, we separate the singularities into three categories: classical, semi-Planckian, and Planckian. The first type survives the $G_N\rightarrow 0$ limit, while the other two do not. These latter two sizes are distinguished via the power of $G_N$ by which they vanish. Let us now turn to the details, starting with an identification of the set comprising the singularity. 

\paragraph{The Singularity Set.} The primary difficulty in this task is that singularities are not locations in some geodesically-incomplete
manifold $M$ but are rather associated to the endpoints of incomplete inextendible geodesics in $M$. However, as shown by~\cite{Ger68}, it is possible to treat singularities as if they
are made up of a set of points $\Gamma$, although this set does not lie in $M$.\footnote{$\Gamma$ also comes with a natural choice of
topology \cite{Ger68}.}
We will briefly review this construction, as it is instrumental to our classification of singularities. The idea is to build an equivalence class of geodesics associated to an endpoint. Consider a geodesic $\gamma_{p,v}$ in $M$ fired from a starting point $p\in M$ along a non-vanishing tangent vector $v^{a}$ at $p$. We will continue $\gamma_{p,v}$ so it is inextendible in the direction of $v^a$ (with starting point $p$).
We then identify two distinct geodesics $ \gamma_{p,v} \sim \gamma_{\tilde{p},\tilde{v}}$ if $\gamma_{p, v}$
enters and remains inside every tubular neighborhood of
$\gamma_{\tilde{p},\tilde{v}}$. Explicitly, if $\lambda \in [0, \lambda_0)$ is an affine parameter for $\gamma_{p, v}$, then for every given tubular neighborhood $U$ of $\gamma_{\tilde{p},\tilde{v}}$, there exist a $0\leq \hat{\lambda}<\lambda_0$ such that $\gamma_{p, v}(\lambda) \in U$ for all $\lambda \in (\hat{\lambda}, \lambda_0)$. See Fig.~\ref{fig:Tubular-region}. 
\begin{figure}
    \centering
    \includegraphics[scale=0.9]{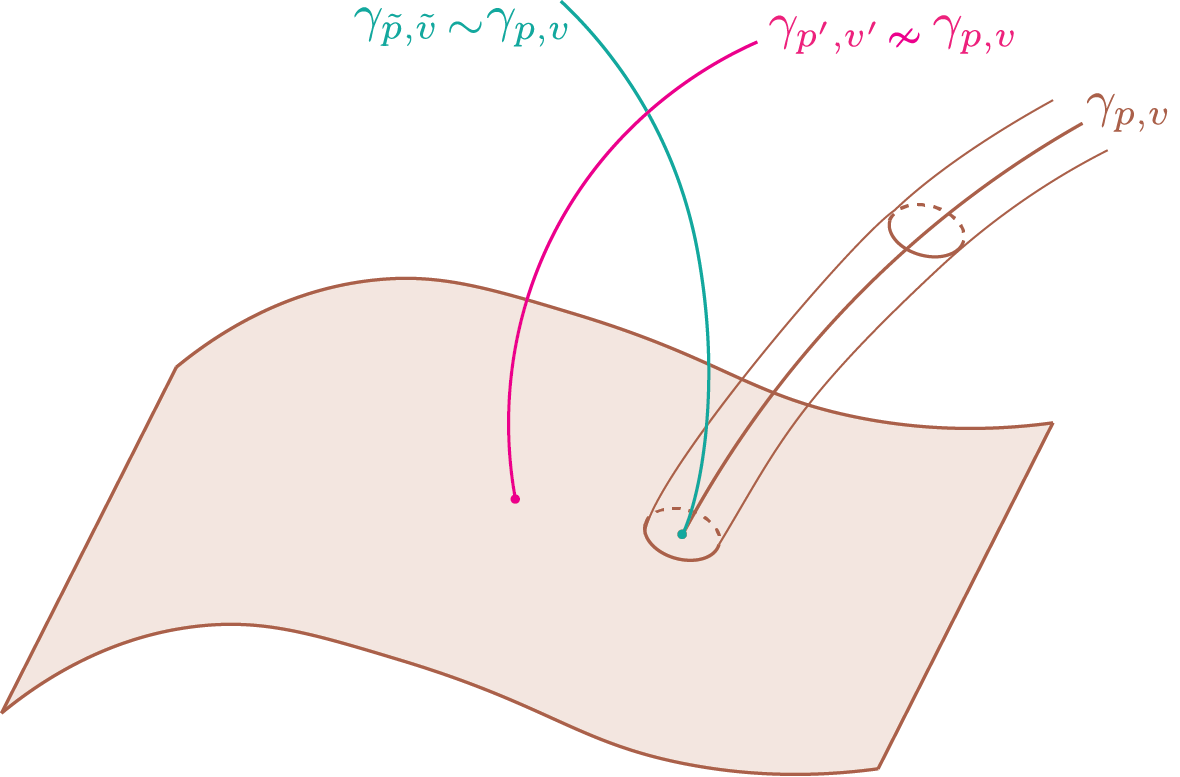}
    \caption{Example spacetime where the colored sheet represents a singularity.
    The two geodesics $\gamma_{p, v}$ and $\gamma_{ \tilde{p}, \tilde{v} }$ by definition end on the same singular point, since $\gamma_{\tilde{p}, \tilde{v}}$ must enter any tubular neighborhood of $\gamma_{p, v}$, and vice versa.}
    \label{fig:Tubular-region}
\end{figure}
The equivalence classes
under $\sim$
then form the points in $\Gamma$, and we say that $\gamma_{p, v},
\gamma_{\tilde{p}, \tilde{v}}$ are geodesics that end on the singular point $[\gamma_{p, v}] \in
\Gamma$. Importantly, if $p_s \in
\Gamma$,  the future $I^+(p_s)$ and past $I^-(p_s)$ of $p_s$ in
$M$ is well-defined even though $p_s$ is not in $M$. Namely, we say that $p \in I^+(p_s)$ if
there is a future-directed timelike curve
$\gamma$ through $p$ and with past ``endpoint'' on $p_s$, where
the latter means that $\gamma$, sufficiently far in its past direction, enters
and remains within any tubular neighborhood of any representative geodesic in $p_s$.
Armed with a set identified as the singularity, we now extract the part of the singularity that is naked with respect to a
given conformal boundary $\mathscr{I}$ as follows:
\begin{equation}
\begin{aligned}
    \Gamma_{n(\mathscr{I})} = \{p_s \in \Gamma | I^+(p_s) \cap \mathscr{I}\neq
    \varnothing \text{ and } I^-(p_s) \cap \mathscr{I}\neq
    \varnothing\}.
\end{aligned}
\end{equation}
If this is nonempty, we say that $\Gamma$ is naked with respect to
$\mathscr{I}$.

\paragraph{The Singularity Size.} We now proceed to classify the size of any (possibly improper) subset of a singularity. 
Let $\hat{\Gamma}$ be a
subset of $\Gamma$; below we will mostly be
interested in taking $\hat{\Gamma}=\Gamma_{n(\mathscr{I})}$, but it will be useful
to take our definition to be more general. Consider the sets
\begin{equation}\label{eq:Tdef}
\begin{aligned}
    T_{+} &= \{p \in \mathscr{I} | ~\exists \text{ an achronal past-directed    null geodesic from } p \text{ to } \hat{\Gamma} \}~, \\
    T_{-} &= \{p \in \mathscr{I} | ~\exists\text{ an achronal future-directed
    null geodesic from } p \text{ to } \hat{\Gamma} \}~. 
\end{aligned}
\end{equation}
Fig.~\ref{fig:sings-examples} illustrates some examples of these sets for both naked singularities and ordinary singularities behind horizons. 
\begin{figure}
   \centering
   \subcaptionbox{ \label{fig:timesing}}[.25\textwidth]{\includegraphics[scale=0.65]{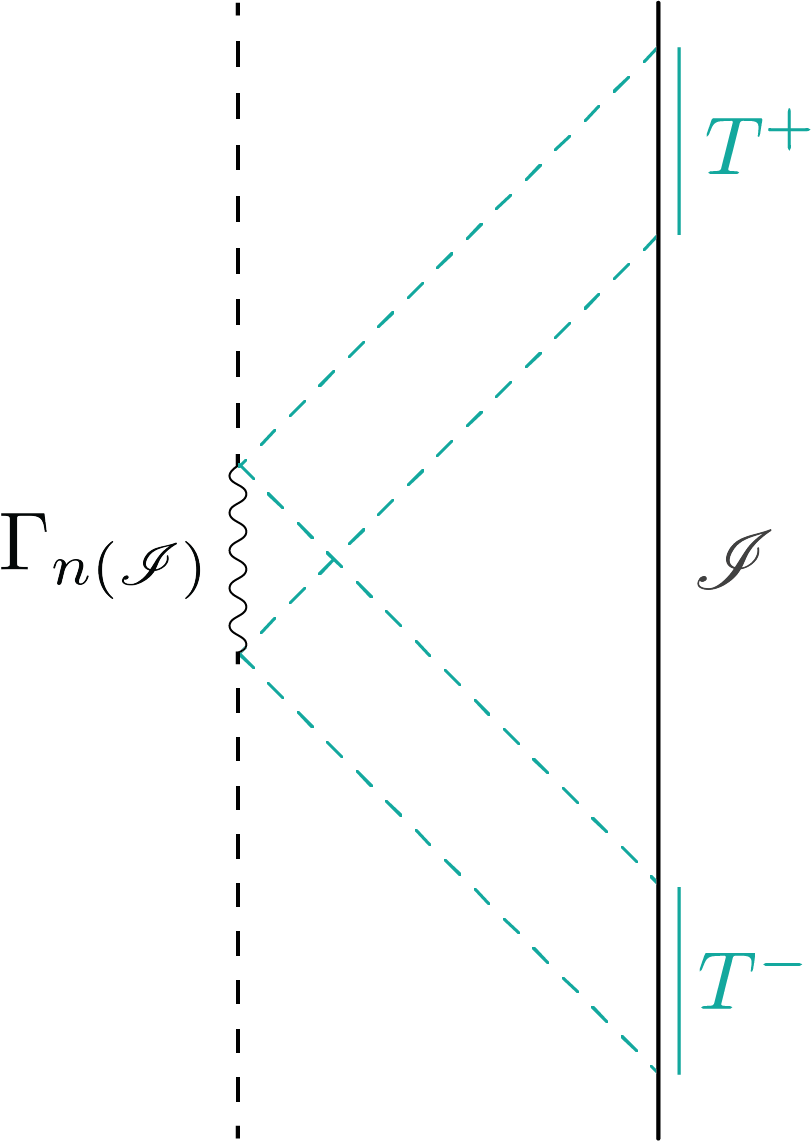}}
   \hfill
   \subcaptionbox{ \label{fig:spacesing}}[.25\textwidth]{\includegraphics[scale=0.65]{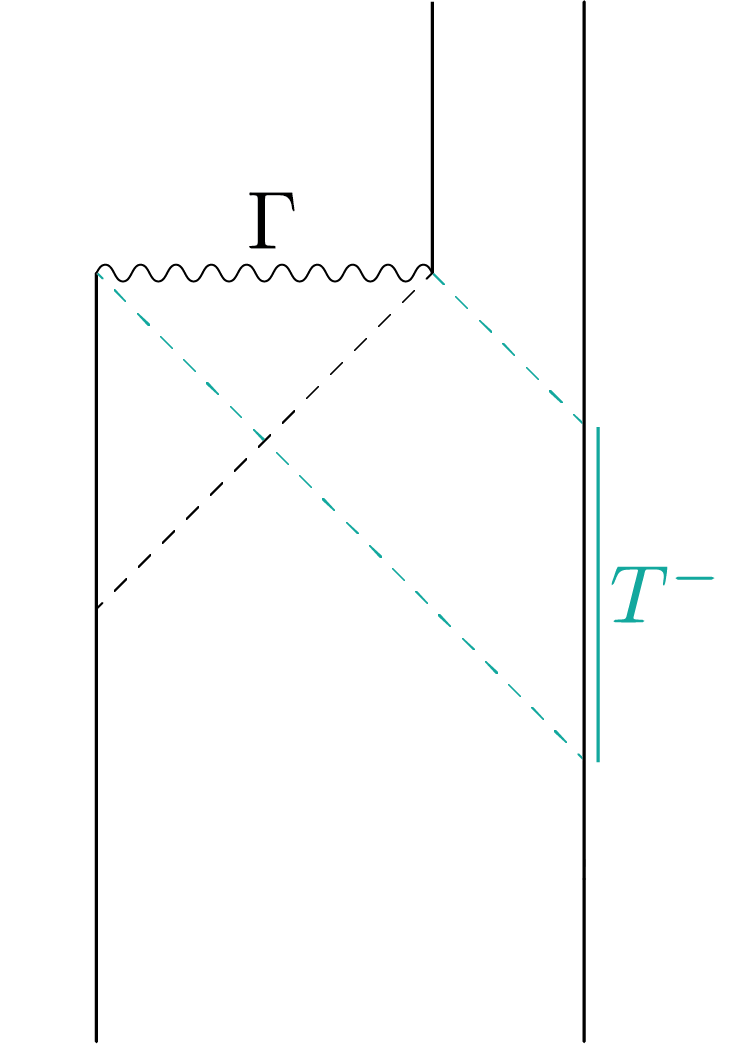}}
   \hfill
  \subcaptionbox{ \label{RN-AdS}}[.25\textwidth]{\includegraphics[scale=0.65]{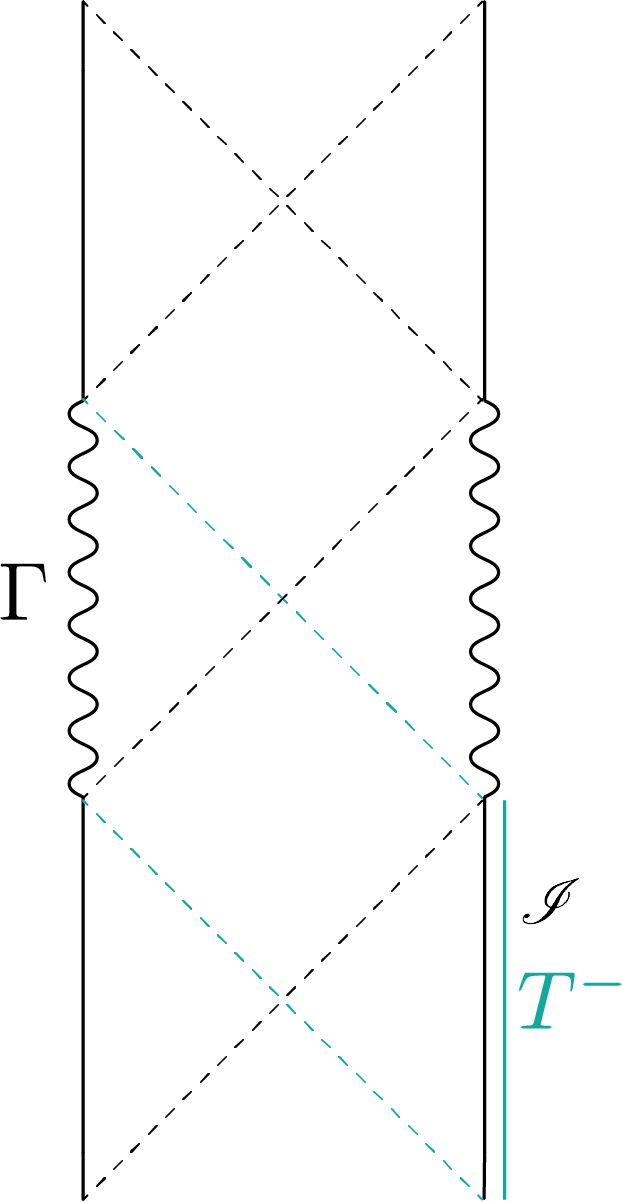}}
   \caption{Some examples of spacetimes with (naked) singularities. In (a) the singularity is timelike and classically naked. Panel (b) depicts the evaporating black hole. The non-naked part of the singularity is classical and behind a horizon. For the naked part, see Fig.~\ref{fig:endpoint-evap-non-example}. Panel (c) is Reissner--Nordstr\"om-AdS; here $|\Gamma|_{\mathscr{I}} = |T_-|$; note, however, that $\Gamma_{n(\mathscr{I})} = \varnothing$, as $I^{\pm} \cap \mathscr{I} = \varnothing$. 
   \label{fig:sings-examples}}
\end{figure}

We will now define the size of $\hat{\Gamma}$ according to properties of $T_+$, $T_-$. 
We assume that $(M, g)$ is conformal to the Einstein static
universe on each conformal boundary, and we choose the unique conformal frame where the
boundary metric is identical to the Einstein static universe in which the spatial
spheres have radius equal to the AdS radius.\footnote{For asymptotically locally AdS spacetimes where $\mathscr{I}$ has another conformal class, we can also define $|\hat{\Gamma}|$, once we pick some preferred conformal frame.} This fixes a unique Lorentzian
metric on $\mathscr{I}$, and when $T_+$ is nonempty, we define $|T_+|$ as
the absolute value of the spacetime volume of $T_+$ in this metric; similarly for
$T_-$. As examples, if $T_+$ is a timeslice or a union of a finite number of
points, then $|T_+|=0$, since these sets are measure zero.
We can now define the size $|\hat{\Gamma}|_{\mathscr{I}}$ of $\hat{\Gamma}$ with
respect to $\mathscr{I}$ as follows:
first, if $T_+, T_-$ are both empty, then $|\hat{\Gamma}|_{\mathscr{I}}=0$. Second, if only
one of $T_{+}, T_-$ is nonempty, say $T_+$, then we define
$|\hat{\Gamma}|_{\mathscr{I}}=|T_+|$. Third, if both $T_{\pm}$ are nonempty, we define
$|\hat{\Gamma}|_{\mathscr{I}} = \min(|T_+|, |T_-|)$. This finally leads us to
our definition of classical and Planckian singularities, as well as an
in-between regime, which we term semi-Planckian:

\begin{defn} \label{defn:singsize}
    Let $(M, g)$ be a geodesically incomplete spacetime of dimension $D=d+1$ with a singularity
    $\hat{\Gamma}$, and let \begin{equation}
    \begin{aligned}        |\hat{\Gamma}|_{\mathscr{I}} \sim O( \ell_{\rm Pl}^{a}).
    \end{aligned}
    \end{equation}
    If $a\leq0$, we say that $\hat{\Gamma}$ is a classical singularity, while if
    $ 0<a<d-1$, we say that $\hat{\Gamma}$ is a semi-Planckian singularity. Otherwise, it is a Planckian singularity. 
\end{defn}

\paragraph{Singularity Size from CFT Time Evolution.} If $\ket{\psi(t)}$ is the state whose complete past and future history is dual to $(M,g)$, then without access to the time evolution operator, we can only reconstruct operators at points in spacetime that are not causally related to the boundary time slice on which $\ket{\psi(t)}$ lives --- i.e., the Wheeler--de Witt patch $W_{t}$.\footnote{This assumes for simplicity that there is only one asymptotic boundary, and that the state of the CFT on that boundary is pure --- so the entanglement wedge of $\ket{\psi}$ is the entire bulk. If there are multiple boundaries or the state on one boundary is mixed for a different reason, we simply take the intersection of $W_{t}$ with the entanglement wedge of $\ket{\psi}$.} Let us now specialize to a spacetime with a naked singularity $\Gamma$. For some fixed $t$, generally only a (possibly empty) subset of $\Gamma$ will be contained in $W_{t}$: to learn everything about $\Gamma$ (for example by reconstruction of all near-singularity correlators), a certain amount of boundary time evolution is necessary. More precisely, if $T_{-}$ is nonempty, and we are evolving forwards from the past, then the past-most boundary of $T_{-}$ corresponds to the earliest time that some part of $\Gamma$ is encoded in the Wheeler--de Witt patch $W_{t}$.  As we evolve forward through $T_-$, new pieces of
$\Gamma$ enter $W_t$ and are encoded in the time-evolved state. Finally,
once we evolve past the futuremost boundary of $T_-$, no new parts of
$\Gamma$ enter $W_t$ that we have not already seen, and so
further forward time evolution reveals nothing new about the singularity.
\begin{figure}
    \centering
    \includegraphics[scale=1.1]{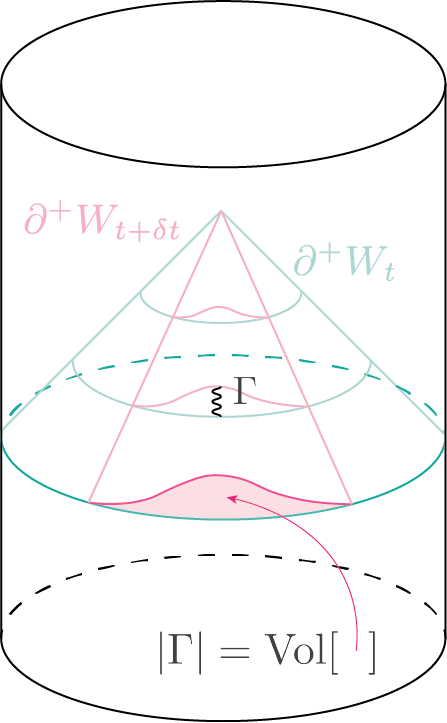}
    \caption{To get all information about the small singularity $\Gamma$, we only need to evolve the state forwards in a small spatial subregion of the boundary Cauchy slice, such that the Wheeler--de Witt patch now includes the singularity. Note that, as our artistic capabilities are limited, we here only indicate the future boundaries of the Wheeler--de Witt patches for times $t$ and $t+\delta t$.}
    \label{fig:small-sing}
\end{figure}
Thus, everything about the naked singularity can be learned by carrying out CFT
time evolution through the band $T_-$ (plus bulk reconstruction not using the
time-evolution operator). For a singularity
where $T_+$ is also nonempty, everything about the
singularity could alternatively have been learned by evolving through $T_+$. So
as advertised, $|\Gamma|_{\mathscr{I}}=\min\{|T_-|, |T_+|\}$ measures the minimal amount of CFT time evolution
required to learn everything about the naked singularity. Note that $|\Gamma|_{\mathscr{I}}$ has units of spacetime
volume rather than time. This is a feature: if
all information about the singularity is available upon evolution of the state forwards in a small spatial subregion of a
boundary Cauchy slice, then this should be
accounted for in the smallness of a singularity. See Fig.~\ref{fig:small-sing}. Note also that if $|\Gamma|_{\mathscr{I}}$ is large, this always is caused by a large amount of time evolution, since timeslices have bounded volume.

Let us now make a few remarks on our definition, starting with a question. Under what circumstances do Planckian or semi-Planckian singularities exist?   The most natural case is that we have $\lim_{\ell_{\rm Pl} \rightarrow
0}|\hat{\Gamma}|_{\mathscr{I}}=0$ because the CFT state acquires all information
about the singularity in a CFT timestep that shrinks to zero as $\ell_{\rm Pl}\rightarrow
0$. An example would be the singularity in the evaporating black hole,
illustrated in Fig.~\ref{fig:endpoint-evap-non-example}, or a naked timelike singularity $\hat{\Gamma}$ lying behind another naked
timelike singularity $\tilde{\Gamma}$, like in Fig.~\ref{fig:hidden-sing}. 
    \begin{figure}
        \centering
        \subcaptionbox{ \label{fig:endpoint-evap-non-example}}[.4\textwidth]{\includegraphics[scale=0.75]{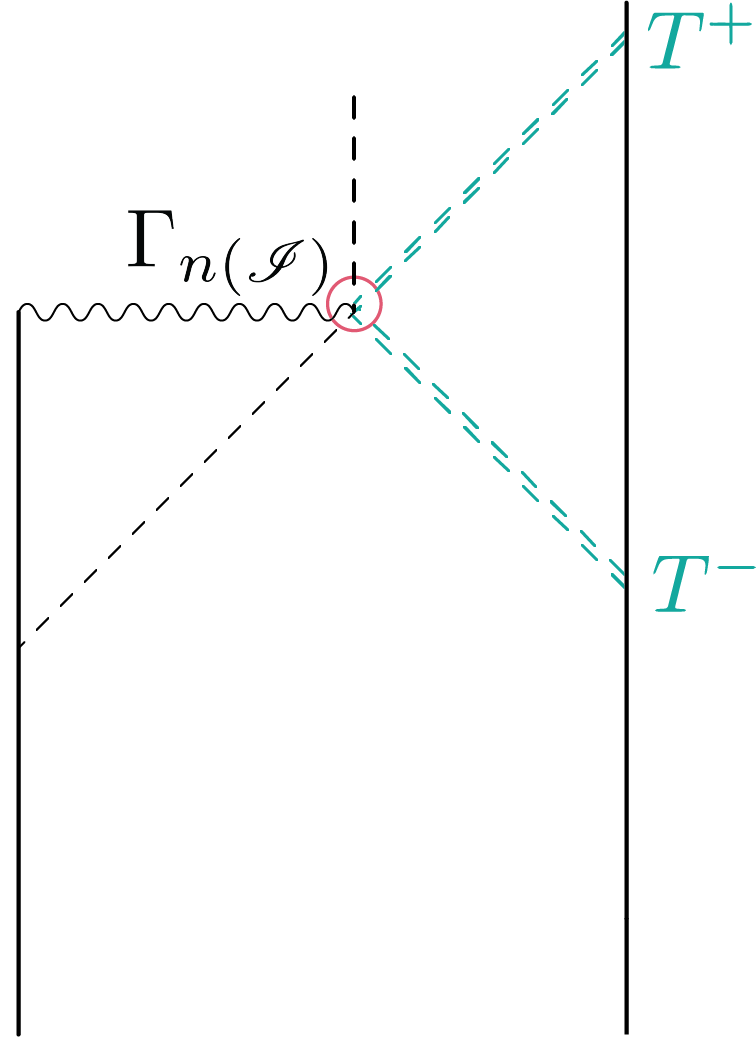}}
        \hspace{1cm}
        \subcaptionbox{ \label{fig:hidden-sing}}[.4\textwidth]{\includegraphics[scale=0.8]{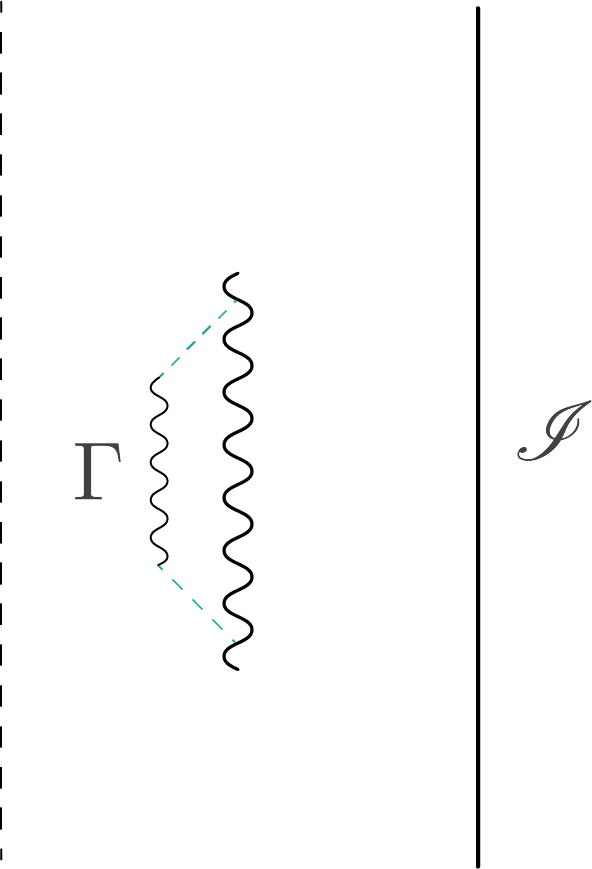}}
        \caption{Two examples of $\lim_{\ell_{\rm Pl} \rightarrow
0}|\hat{\Gamma}|_{\mathscr{I}}=0$: (a) For the naked part of the singularity at the endpoint of black hole evaporation, we expect that  $T_{\pm}$ are just timeslices in the $G_N \to 0$ limit. (b) A `hidden' naked timelike singularity; here, $T_{\pm} = \varnothing$.}
    \end{figure}
In this latter case,
$\hat{\Gamma}$ is such that $T_+ =
T_-= \varnothing$, since all causal geodesics from the
boundary to $\hat{\Gamma}$ are necessarily chronal. This latter example shows that
we probably should not assign much meaning to the size of distinct connected
components,  and that we instead just should look at the size of the full naked
singularity $\Gamma_{n(\mathscr{I})}$, which by definition contains all connected components.

\subsection{Accommodating gravitational pseudorandomness} \label{sec:counting}

Our motivation for classifying a classical singularity via the scaling $a\leq 0$ is simple: such singularities are visible for a boundary time that diverges or remains $O(1)$ in the $G_N \rightarrow 0$ limit. For the remaining range of $a$, we have chosen to separate semi-Planckian singularities (with $0<a<d-1$) from Planckian singularities (with $a\geq d-1$). In both cases, $|\hat{\Gamma}|_{\mathscr{I}}\rightarrow 0$ as $G_N \rightarrow 0$, but as we will show below, singularities with $a<d-1$ are compatible with an EFT code subspace which is sufficiently large that it can accommodate gravitational pseudorandomness, while singularities with $a\geq d-1$ are not. 

We will now show that singularities with $a<d-1$ are compatible with gravitational pseudorandomness. Consider an asymptotically AdS spacetime with a timelike naked
singularity $\Gamma=\Gamma_{n(\mathscr{I})}$. We will work with large (but possibly finite) $N$ and demand only that effective field theory as defined by a bulk UV cutoff be valid in the asymptotic region. At finite $N$ we will consider the failure of EFT due to large UV corrections as the diagnostic of a singularity.

We will decompose our bulk state into modes with energy below the cutoff, and we will focus on modes localized near the boundary, some of which fall into the naked singularity. By unitarity, there must also be outgoing occupied modes as well, which can be traced back to the singularity. However, the relation between the modes falling into the
singularity and the modes coming out is dependent on UV physics: it is inaccessible
within low energy EFT. So while the low energy modes in the asymptotic region could be semiclassical as $G_N\rightarrow 0$, the time evolution operator that relates them is not known in the bulk EFT --- it cannot be expressed in
terms of the low-energy operators. See Fig.~\ref{fig:toy-model}.
\begin{figure}
    \centering
    \includegraphics[scale=0.75]{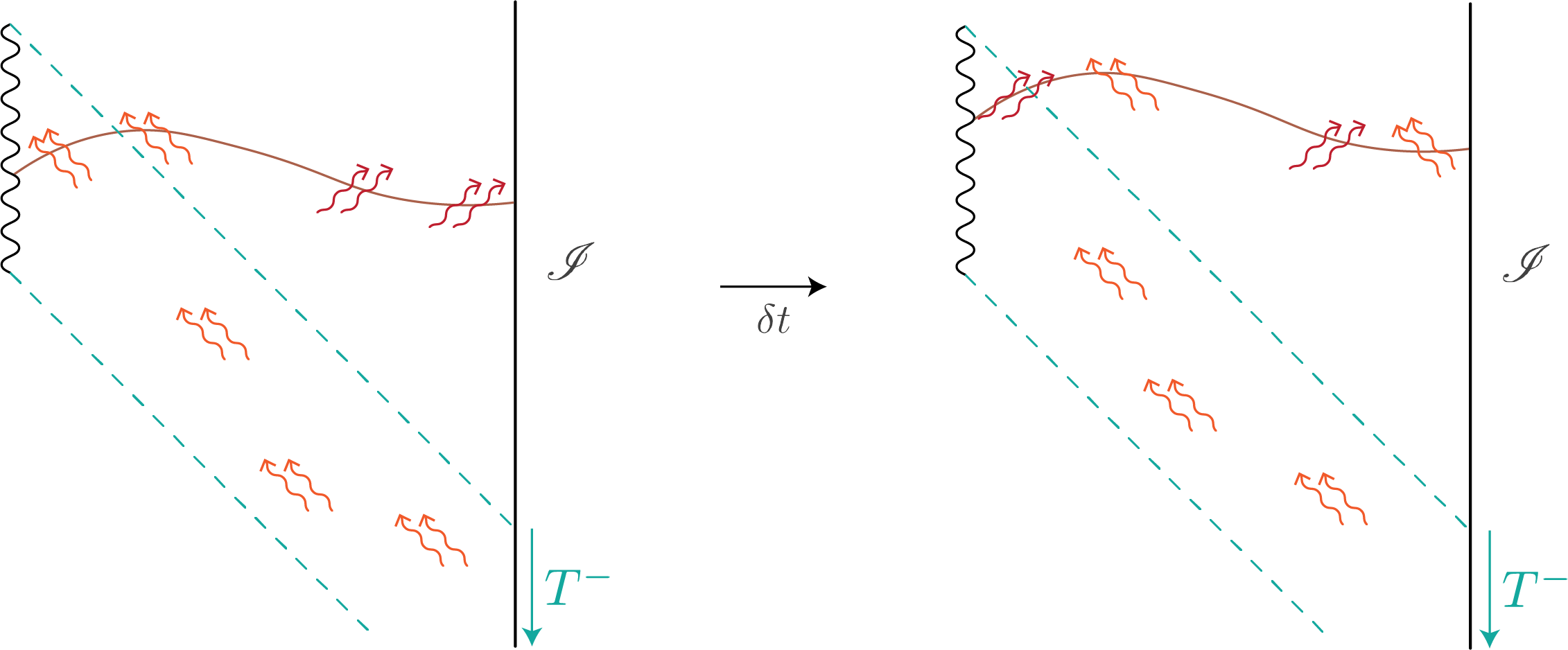}
    \caption{From the low-energy perspective, there is an unknown, UV-sensitive boundary condition assigned at the singularity.}
    \label{fig:toy-model}
\end{figure}
From the low-energy perspective, there is a  boundary condition assigned at the
singularity whose precise nature is unknown.\footnote{Since quantum gravity is expected to lack global
symmetries \cite{Zel76, AbbWis89, KalLin95b}, any approximate symmetries emergent in the semiclassical EFT should be badly violated by this boundary condition.} 

In what follows, we will be agnostic about the UV-sensitive boundary condition that implements pseudorandomness in the fundamental theory. Our approach instead will be to show that once the singularity is of classical or semi-Planckian size, then there are enough EFT modes in a code subspace in which all of the states have the same leading order geometry to accommodate gravitational pseudorandomness. On the other hand, if the singularity is Planckian, we show that we do not expect this. As our analysis is asymptotic, we do not require a sharply defined geometry near the singularity.

We now give a parametric estimate of the number of modes within semiclassical EFT that
can potentially get scrambled by the naked timelike singularity. That is, we want to know the number of EFT modes that are
affected by the highly UV-sensitive boundary condition. Assume that
$T_+, T_-$ are nonempty, and assume without loss of generality $|\Gamma_{n(\mathscr{I})}|_{\mathscr{I}} = |T_-|$. 
Near
$\mathscr{I}$, the leading part of the metric can be written
\begin{equation}\label{eq:nbmetric}
\begin{aligned}
d s^2 = \frac{ -d t^2 + d \rho^2 + \sin^2 \rho d \Omega^2 }{ \cos^2 \rho
    } + \ldots ~,
\end{aligned}
\end{equation}
where we work in units where the AdS radius is unity, and where $\mathscr{I}$
lies at $\rho = \frac{ \pi }{ 2 }$. For pedagogical simplicity we take the matter in our theory to be a conformal massless
scalar and assume that the system is spherically symmetric, so that $T_{-}$ can be taken to be a time band $[-\Delta t, 0] \times
S^{d-1}$ without loss of generality.  We can now count roughly how many ingoing single-particle modes with energy below the cutoff can be sent from $T_{-}$
into the singularity. As is clear from Fig.~\ref{fig:mode-counting}, this simply corresponds to the number of orthogonal modes
in the strip given by $\rho \in \left[\frac{ \pi }{ 2 }-\Delta t,
\frac{ \pi }{ 2 }\right]$ at $t=0$.\footnote{Greybody factors cause some modes to scatter off background curvature before they get close to the singularity, but this should not change the parametric counting.} 
\begin{figure}
    \centering
    \includegraphics[scale=0.8]{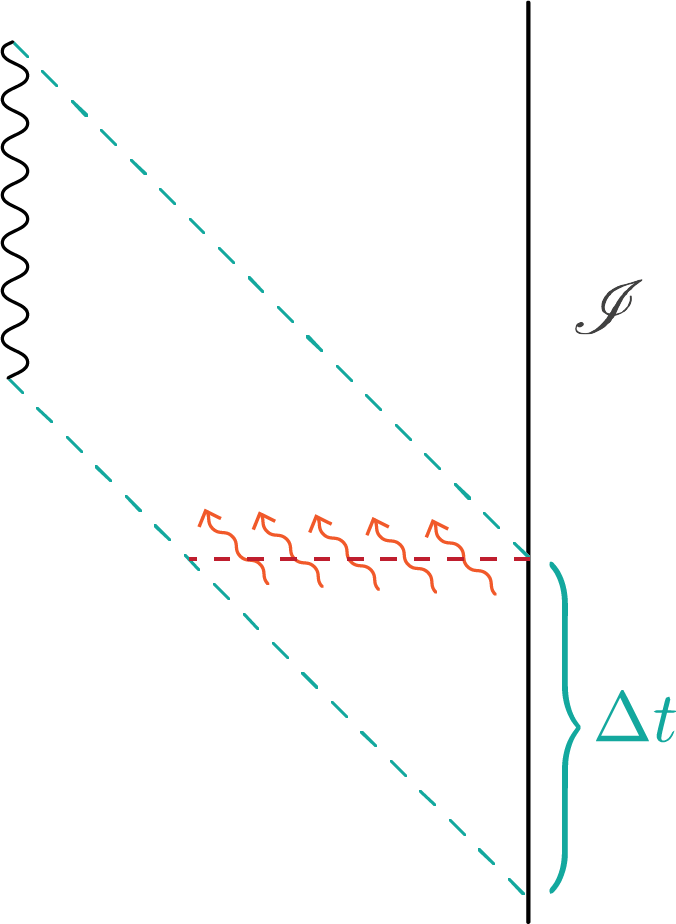}
    \caption{To count the ingoing modes sent from $T_-$, which here is the time band $[-\Delta t, 0]\times S^{d-1}$, we can simple count the number of modes in the strip indicated by the red dotted line.}
    \label{fig:mode-counting}
\end{figure}
As long as $\Delta t$ is less than roughly $G_N M \sim O(1)$ where $M$ is the spacetime mass, we are always restricted to an asymptotic range of $\rho$ where the metric is approximately pure AdS, so the number of modes can be counted using the pure AdS metric. We always take $G_{N}$ small with $G_N M = \text{fixed}$, and we are interested in a parametric counting that can only change if $\Delta t \rightarrow 0$ sufficiently fast in $G_{N}$ 
so we can without loss of generality restrict to this regime. Assume now that $\tilde{u}_{i}$ is a collection of orthonormal flat-space positive
frequency ingoing wavepackets with mean frequency $\omega_i$ in the metric $-d t^2 + d \rho^2
+ \sin^2 \rho d \Omega^2$; furthermore, we will assume that they are well localized
within $\rho \in (\frac{ \pi }{ 2 }-\Delta t, \frac{ \pi }{ 2 })$ at some central
$\rho=\rho_0$. These have total energy approximately equal to $\omega_i$, as long as 
\begin{equation}
\begin{aligned}
\omega_i \gg \Delta \omega \geq \frac{ 1 }{ \Delta t }~,
\end{aligned}
\end{equation}
where $\Delta \omega$ is the frequency spread of each wavepacket.
Now, since we have a conformally coupled scalar,
$u_{i} = (\cos\rho)^{d-1}\tilde{u}_{i}$ is also an
orthonormal set of positive frequency mode functions in AdS.\footnote{The scaling dimension of the scalar is $\frac{d-1}{2}$, and the conformal factor is $\cos^2 \rho$.} Each
wavepacket has a total energy approximately equal to $\omega_i$, also in AdS.\footnote{There might be an overall conformal anomaly, so
that $H_{\rm AdS} = H_{\rm Mink}+H_{\rm anomaly}$, but this is irrelevant for our purposes.}
Imposing the cutoff $\omega_i \leq \epsilon^{-1}$,\footnote{This corresponds to
a proper length cutoff in AdS of $\mathcal{\epsilon}_{\rm proper} \sim
\mathcal{\epsilon}/\cos \rho_0$. We do not scale $\rho_0$ with $\ell_{\rm Pl}$, so $\epsilon, \epsilon_{\rm proper}$ scale the same way.} we want to count the
orthogonal wavepackets. This is equivalent to counting  orthogonal flat-space modes; this is of the order
\begin{equation}
\begin{aligned}
    \frac{ \Delta t \cdot \text{Vol}(S^{d-1}) }{ \epsilon^{d} }~.
\end{aligned}
\end{equation}
The numerator is $|\Gamma_{n(\mathscr{I})}|_{\mathscr{I}}$, and in the absence of spherical symmetry it is clear that $|\Gamma_{n(\mathscr{I})}|_{\mathscr{I}}$ is the correct replacement for $\Delta t \cdot
\text{Vol}(S^{d-1})$. So for a general naked timelike singularity, we find that an order
\begin{equation}
\begin{aligned}
    \frac{ |\Gamma_{n(\mathscr{I})}|_{\mathscr{I}} }{ \epsilon^{d} }
\end{aligned}
\end{equation}
number of distinct EFT modes can propagate into the singularity from $T_{-}$. Next, we want to take the UV cutoff to scale as $\epsilon=\ell_{\rm
Pl}^{1-\delta}$ for some $\delta \in (0, 1)$ so that our modes are
parametrically below the Planck scale as $\ell_{\rm Pl} \rightarrow 0$. To be
cautious, we demand that $\delta > 1/d$, which ensures that the na\"ive QFT entropy does not exceed the Bekenstein--Hawking entropy. To see this, note that if
$V$ is the spatial volume occupied by our wavepackets in AdS, then
we need
\begin{equation}
\begin{aligned}
    \frac{ V }{ \epsilon_{\rm proper}^{d} } = \cos^{d}\rho_0 \frac{
        V }{
    \epsilon^{d}} \lesssim \frac{ A }{
    \ell_{\rm Pl}^{d-1} }~.
\end{aligned}
\end{equation}
Remembering that volume and surface area of a large shell near the boundary
of AdS scale the same way as we approach the conformal boundary, we thus need
\begin{equation}
\begin{aligned}
    \cos^{d}\rho_0 < O(1)\frac{ \epsilon^{d} }{ \ell_{\rm Pl}^{d-1} } =
    O(1) \ell_{\rm Pl}^{1-d\delta}.
\end{aligned}
\end{equation}
If we take $\delta < \frac{ 1 }{ d }$, then wavepackets with mean frequency
near the cutoff would give a na\"ive QFT entropy exceeding the Bekenstein--Hawking entropy for any $\rho_0 < \pi/2$. For
$\delta = \frac{ 1 }{ d }$ we get an $O(1)$ bound on $\rho_0$ that we want to avoid, and
so we require $\delta > \frac{ 1 }{ d }$. Thus, assuming finally that
$|\Gamma_{n(\mathscr{I})}|_{\mathscr{I}} \sim O(\ell_{\rm Pl}^a)$, we
get that the number of distinct semiclassical modes that can probe the naked
singularity during time evolution through $T_{-}$ is 
\begin{equation}
\begin{aligned}
\frac{ |\Gamma_{n(\mathscr{I})}|_{\mathscr{I}} }{ \epsilon^{d} } \sim \ell_{\rm
    Pl}^{a - d- d\delta}.
\end{aligned}
\end{equation}
If $a<d-1$, there is always some $\delta > 1/d$ so that this diverges in the
$\ell_{\rm Pl}\rightarrow 0$ limit. Hence, in the regime $0<a<d-1$, even when the singularity shrinks to
zero size as $\ell_{\rm Pl} \rightarrow 0$, the singularity is in principle compatible with influencing an infinite number of EFT modes in the $G_N \rightarrow 0$ limit: it is compatible with a gravitational amount of pseudorandomness in the time evolution. 

Let us now comment on a few
well known cases of naked singularities, some of which were briefly mentioned in Sec.~\ref{sec:intro}: the singularity at the endpoint of
black hole evaporation, the Gregory--Laflamme (GL) instability \cite{GreLaf93, GreLaf94}, which occurs as the pinch-off endpoint of the evolution of higher dimensional extended black strings and other black objects, and Choptuik's
critical collapse \cite{Cho92,Chr94} resulting from a collapsing massless scalar
 field which is tuned to lie exactly at the threshold of black hole formation. In the case of the endpoint of black hole evaporation, the naked part of the singularity
 is not extended in the $G_N \rightarrow 0$ limit. See Fig.~\ref{fig:endpoint-evap-non-example}. 
 Thus the naked part of the singularity is measure zero in this limit, so this naked singularity is either Planckian or semi-Planckian. In the
 cases of the Gregory--Laflamme instability and critical collapse, it is impossible to evolve past the pinch-off using classical gravity, so in the absence of a quantum gravitational description of the system, we cannot ascertain the nature of the naked singularity.
 However, it is quite plausible that these singularities end up being Planckian or semi-Planckian.
 The critical collapse singularity is in a sense a
 zero mass black hole, which within the resolution of semiclassical gravity is
 indistinguishable from a Planck mass black hole. So if we could carry out critical collapse
 at finite $G_N$ in full nonperturbative quantum gravity, and then afterwards take the
 $G_N \rightarrow 0$ limit, it is reasonable to expect that the singularity becomes a
 Planckian or semi-Planckian singularity. The Gregory--Laflamme singularity could plausibly behave similarly: it is highly analogous to the fluid pinch-off phenomenon, where only a minimal amount of nonperturbative evolution is necessary to re-enter the EFT regime.   It
 is tempting to speculate that naked singularities in a holographic theory of quantum gravity can never be classical, and possibly not even semi-Planckian. We will formalize a conjecture to this effect in the next section.
 
Finally, let us briefly tie this discussion back to our primary theorem of Cryptographic Censorship. The fact that classical singularities are always compatible (by our definition) with gravitational pseudorandomness provides additional motivation for the intuition that the dynamics of strongly gravitating systems in the $G_{N}\rightarrow 0$ limit may admit a description in terms of pseudorandom unitaries. Under this supposition, Cryptographic Censorship guarantees that any such systems --- including those with naked singularities --- must also have event horizons. We leave the quantification of the subset of the singularity that is necessarily behind a horizon to future work. 

We wish to make clear that we do not claim that all classical singularities must act as sources of pseudorandomness; a clear counterexample is given by orbifold singularities in compact dimensions. It is the ones that weak cosmic censorship is supposed to rule out --- naked classical singularities resulting from the dynamical evolution of generic, smooth initial data and matter collapse --- that we expect to have, to a large extent, gravitationally pseudorandom time evolution.
If the singularity is static, the mode counting argument in the previous section becomes misleading. In this case, measuring later modes scattered off the singularity does not provide information we could not have already gained from
scattering earlier modes, unlike in the case where the singularity is evolving.
Thus, there might be a significantly smaller amount of data that is required to
describe scattering off the singularity than what the na\"ive mode counting
argument above would suggest, making static singularities less likely candidate sources of pseudorandom dynamics.

\subsection{Quantum cosmic censorship}\label{sec:QCC}

The combination of Cryptographic Censorship and our results above regarding compatibility of naked singularities with gravitational pseudorandomness suggest a particular reformulation of the Weak Cosmic Censorship Conjecture as a \textit{quantum gravity} conjecture rather than a classical gravity conjecture. This point bears emphasizing: there is no \emph{ab initio} reason that classical gravity must obey cosmic censorship. Even if there had been no known counterexamples, one of the best motivations for the Weak Cosmic Censorship Conjecture is empirical: the apparent absence of naked singularities in the night sky. Cosmic censorship is not required by consistency of known theoretical structures within General Relativity. By contrast, if naked singularities were typical in quantum gravity, then foundational results in AdS/CFT --- such as the behavior of the CFT thermal state OTOC being precisely attributed to bulk event horizon dynamics --- would be some remarkable conspiracy: either naked singularities engineer themselves to mimic black holes precisely in the OTOC signature, or black holes engineer themselves to mimic naked singularities. A similar coincidence would be required for various results on CFT thermalization and its bulk interpretation.\footnote{We thank D. Harlow for extensive discussions on this point.} We therefore find that a quantum gravity formulation of cosmic censorship is \textit{both} theoretically and empirically well-motivated; our results relating horizons to pseudorandomness provide a clear first step towards formulating weak cosmic censorship as an emergent phenomenon in terms of fundamental quantities in quantum gravity. (For example, this would require replacing the classical gravity notion of genericity --- an open set in the space of solutions to the Einstein equation --- with typicality.) 

For completeness, we first state the classical gravity proposal for weak cosmic censorship in AdS, which is now known to be violated in classical gravity. (It is also trivially violated in quantum gravity due to, e.g., evaporating black holes, although this requires violations of classical energy conditions.)  

\begin{conjecture}
    [Classical Weak Cosmic Censorship -- false] \cite{Santos2018, GerHor79} Consider a geodesically complete, asymptotically AdS initial data set satisfying the dominant\footnote{This is the weakest reasonable statement: we could assume only the null energy condition, but counterexamples exist even when we demand the dominant energy condition.}  energy condition. The maximal causal development of this initial data under the Einstein equation, together with boundary conditions at the AdS boundary,   is generically a strongly asymptotically predictable asymptotically AdS spacetime.
\end{conjecture}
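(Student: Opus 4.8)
\subsection*{Proof proposal (a disproof, in fact)}

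Since this conjecture is \emph{false}, the ``proof'' one actually wants is a refutation: exhibit geodesically complete, asymptotically AdS Cauchy data obeying the dominant energy condition whose maximal causal development, with physically admissible boundary conditions at $\mathscr{I}$, fails to be strongly asymptotically predictable — and, because the statement only claims \emph{genericity}, argue that such counterexamples form an open (positive-measure) set rather than a knife-edge pathology. A single fine-tuned solution is not enough to contradict ``generically $P$''; robustness under perturbations of the initial data is part of the required input.

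The cleanest route in $D=4$ follows the Horowitz--Santos / Crisford--Santos / Crisford--Horowitz constructions in Einstein--Maxwell (optionally with a neutral or charged scalar) in asymptotically AdS$_4$. One prescribes a spatially localized, slowly growing-in-time boundary profile for the Maxwell field — equivalently, regular Cauchy data for the gauge field on $\Sigma_0$ evolving into such a profile — with all energy conditions satisfied at every instant. Evolving (in practice numerically) and monitoring a curvature invariant such as the Kretschmann scalar $K$ at the symmetry axis, the two key inputs are: (i) $K\to\infty$ in finite boundary time, signalling an incomplete inextendible geodesic and hence a singularity $\Gamma$ in the sense of \cite{Ger68}; and (ii) no apparent horizon forms along the evolution, so the trapped region stays empty and $\Gamma$ remains in causal contact with $\mathscr{I}$, i.e.\ $I^\pm(p_s)\cap\mathscr{I}\neq\varnothing$ for $p_s\in\Gamma$. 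Together these violate strong asymptotic predictability. A more ``initial-data intrinsic'' variant follows Folkestad \cite{Fol22}: engineer Cauchy data with a concentrated matter pulse whose focusing forces the maximal development to contain a trapped (or quantum extremal) surface, while a causal or topological obstruction keeps that surface out of any event horizon; then the Penrose/Hawking singularity theorems give geodesic incompleteness and a direct causal argument gives nakedness with respect to $\mathscr{I}$. In $D>4$ one may instead take the Gregory--Laflamme endpoint — black-string (or AdS black-brane) data whose nonlinear evolution pinches off \cite{GreLaf93, GreLaf94, LehPre10}, producing a curvature singularity reached in finite affine time on null generators connected to $\mathscr{I}$.

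The main obstacle is precisely the \emph{genericity} and \emph{nakedness} clauses, rather than producing one solution. Genericity requires a stability/robustness analysis showing the growing-curvature behaviour persists on an open neighbourhood of the boundary profile and of the bulk Cauchy data — establishing that the runaway is an attractor, not a fine-tuned limit; this is the content of the detailed perturbative studies in the cited works. Nakedness requires excluding an \emph{event} horizon (not merely an apparent one) that could cloak $\Gamma$, which means controlling the global causal structure — exactly the teleological quantity that is hard to pin down; in practice it is handled by the apparent-horizon/trapped-surface bookkeeping above together with the standard expectation that the absence of trapped surfaces throughout the evolution precludes a hidden event horizon. Once both are in hand the conjecture is falsified — which is precisely the motivation, in Sec.~\ref{sec:QCC}, for replacing it with a quantum-gravitational, typicality-based statement.
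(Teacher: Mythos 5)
The statement you were given is a conjecture that the paper explicitly labels as false and does not prove; it is stated only as a foil for the authors' Quantum Cosmic Censorship proposal, with the refutation deferred to the cited counterexample literature (Horowitz--Santos, Crisford--Santos/Horowitz, Folkestad, Gregory--Laflamme), which the paper's introduction also asserts forms a non-measure-zero set. Your disproof sketch invokes exactly those constructions and correctly flags genericity and nakedness (excluding a teleological event horizon, not just an apparent one) as the nontrivial clauses, so it is consistent with, and essentially identical in route to, the paper's own justification for calling the conjecture false.
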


Reformulating this proposal in terms of a fundamental quantum description of the theory requires a complete revamping of the statement and the scope of its applicability. Motivated by our results in this paper, the existing set of counterexamples to cosmic censorship, and the top down motivation from thermalization and chaos in AdS/CFT, we propose the following formulation of the Weak Cosmic Censorship Conjecture (below we work in the Heisenberg picture, so a given state is its entire history): 

\begin{conjecture}[Quantum Cosmic Censorship]
Typical states with an emergent geometry in quantum gravity do not have classical or semi-Planckian naked singularities.
\end{conjecture}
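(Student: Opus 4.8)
The final statement is posed as a conjecture, so a proof would mean welding together the two halves of the paper --- the mode-counting of Sec.~\ref{sec:counting}, which shows a naked classical or semi-Planckian singularity is \emph{compatible with} a gravitational amount of pseudorandomness, and Cryptographic Censorship (Theorem~\ref{thm:pseudorandom-horizons}), which turns pseudorandomness into an event horizon --- into a single implication. The plan is to argue by contradiction: suppose $\ket{\psi(t)}$ is typical among states with an emergent bulk $(M,g)$ and that $(M,g)$ contains a naked singularity $\Gamma_{n(\mathscr{I})}$ with $|\Gamma_{n(\mathscr{I})}|_{\mathscr{I}}\sim O(\ell_{\rm Pl}^a)$, $a<d-1$, so that $\Gamma_{n(\mathscr{I})}$ is classical or semi-Planckian in the sense of Definition~\ref{defn:singsize}. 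Without loss of generality let $T_-$ realize the minimum in the definition of $|\Gamma_{n(\mathscr{I})}|_{\mathscr{I}}$, and fix an $O(1)$ boundary time band covering a portion of $T_-$.

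First I would construct the code subspace $\mathcal{H}_{\rm code}$ of Sec.~\ref{sec:O1timeband}: states sharing the leading-order geometry $(M,g)$ with low-energy EFT excitations on top, on which $O(1)$ boundary time evolution closes up to exponentially small errors. By the counting of Sec.~\ref{sec:counting}, the number of distinct sub-cutoff modes that can probe $\Gamma_{n(\mathscr{I})}$ from $T_-$, which scales as $|\Gamma_{n(\mathscr{I})}|_{\mathscr{I}}/\epsilon^d$, diverges as $\ell_{\rm Pl}\to 0$ --- with the cutoff taken parametrically sub-Planckian, $\epsilon=\ell_{\rm Pl}^{1-\delta}$ with $\delta>1/d$ --- precisely when $a<d-1$. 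This guarantees that $\mathcal{H}_{\rm code}$ can be chosen of dimension $e^{O(G_N^{b})}$ with $-1\le b<0$, the regime to which Cryptographic Censorship applies. The central and hardest step --- the real content of the conjecture --- is then to show that the UV-sensitive boundary condition at $\Gamma_{n(\mathscr{I})}$ genuinely \emph{realizes} gravitationally pseudorandom dynamics on $\mathcal{H}_{\rm code}$ within the band, rather than being merely consistent with it. I do not expect a top-down derivation here; the realistic goal is to reduce this to a sharply stated hypothesis on the fundamental dynamics --- scrambling of the infalling modes by near-singularity Planckian physics, or BKL-type chaos on approach --- and to check, as the paper already does for static and orbifold singularities, that this hypothesis correctly fails for the cases weak cosmic censorship is not meant to exclude.

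Granting that hypothesis and using typicality, Theorem~\ref{thm:pseudorandom-horizons} in its $O(1)$-band form yields both an event horizon in $(M,g)$ and a semiclassical distinguishing operator $Q$ that the causal-wedge learning algorithm $\mathcal{A}_{CW}$ fails to reconstruct, hence whose bulk support lies outside the causal wedge of $\mathscr{I}$. The step the paper explicitly leaves open, and which this argument needs, is that $Q$ can be taken supported near $\Gamma_{n(\mathscr{I})}$ --- the data $\mathcal{A}_{CW}$ cannot recover is precisely the singularity-sourced, UV-sensitive information --- so that a portion of $\Gamma_{n(\mathscr{I})}$ lies outside the causal wedge, i.e.\ behind the horizon. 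But a naked singularity by definition has part of itself in causal contact with $\mathscr{I}$, hence inside the causal wedge; the only escape is that the pseudorandomness-sourcing portion is not the naked portion, which for $a<d-1$ it must be by the mode count. This contradiction would establish that no typical state has a naked classical or semi-Planckian singularity. The principal obstacle is the pseudorandomness input of the previous paragraph; the secondary one is making the localization of $Q$ near $\Gamma_{n(\mathscr{I})}$ quantitative enough to also recover the expected escape clause --- that Planckian singularities, affecting only finitely many EFT modes, may remain visible.
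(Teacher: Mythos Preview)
The statement is a conjecture, and the paper offers no proof of it --- Sec.~\ref{sec:QCC} only motivates it as a quantum-gravity reformulation of weak cosmic censorship, suggested by Cryptographic Censorship together with the compatibility analysis of Sec.~\ref{sec:counting}. Your proposal is therefore not competing with a paper proof but sketching the roadmap the paper itself gestures at, and on that level it is faithful: use the mode count to build a code subspace of dimension $e^{O(G_N^b)}$, hypothesize that the singularity genuinely realizes gravitational pseudorandomness on it, and invoke Theorem~\ref{thm:pseudorandom-horizons} in its $O(1)$-band form. You also correctly identify, and honestly flag as open, the two inputs the paper does not supply: that the UV boundary condition at $\Gamma_{n(\mathscr{I})}$ actually \emph{realizes} (rather than merely accommodates) pseudorandomness, and that the distinguishing operator $Q$ can be localized near the singularity.

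There is, however, a logical gap in your closing contradiction that is not merely the second hypothesis restated. Even granting both inputs, Theorem~\ref{thm:pseudorandom-horizons} yields only that $(M,g)$ contains an event horizon --- it says nothing about where that horizon sits relative to $\Gamma_{n(\mathscr{I})}$. The paper is explicit on this point (footnote in Sec.~\ref{sec:intro}: ``our argument does not guarantee that the singularity lies behind the event horizon, nor should it''). Your contradiction --- $Q$ supported near $\Gamma_{n(\mathscr{I})}$, $Q$ outside the causal wedge, yet $\Gamma_{n(\mathscr{I})}$ naked hence inside the causal wedge --- does not pin this down, because the first two clauses are already mutually inconsistent: by the definition of $\Gamma_{n(\mathscr{I})}$, any tubular neighborhood of it lies in $J^+(\mathscr{I})\cap J^-(\mathscr{I})$, so an operator supported there is \emph{in} the causal wedge. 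The honest output of your argument, with both hypotheses granted, is that a spacetime hosting a classical or semi-Planckian naked singularity must also contain a horizon \emph{somewhere} --- which is entirely compatible with the singularity remaining naked, exactly as in the Gregory--Laflamme and evaporation endpoints the paper discusses. Getting from ``a horizon exists'' to ``no classical naked singularity in typical states'' requires the quantitative control on how much of the singularity can protrude outside the horizon that the paper explicitly leaves to future work; your localization hypothesis on $Q$ is not a substitute for that bound.
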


\section{Discussion}\label{sec:discussion} 
    
Cryptographic Censorship is the first precise and general result identifying concrete CFT behavior with bulk horizon formation. (While the \textit{vox populi} associates CFT thermalization to black hole equilibration,\footnote{Note that black hole equilibration was only recently established in four-dimensional vacuum asymptotically flat space~\cite{GioKla22} for a certain range of spins and remains an open problem in AdS.} this hypothetical link is at best established in certain highly symmetric and nearly static black holes.) Our first application of the theorem guarantees that in the microcanonical window, under assumption of pseudorandomness, states without  horizons are exponentially suppressed in probability (within their pseudorandom ensemble). 

As an immediate consequence of Cryptographic Censorship, we obtain a partial resolution to the black hole teleology problem for AdS spacetimes with a CFT dual. Event horizons can be diagnosed, using Cryptographic Censorship, in finite time. The implications are significant: first, black hole mechanics in AdS/CFT may admit a legitimate thermodynamic interpretation in highly dynamical states. 
Second, as our primary focus in this paper, we establish a typicality result for event horizons, which is highly suggestive of a form of Quantum Cosmic Censorship. The fact that na\"ive models of code subspaces containing classical (or semi-Planckian) timelike singularities are compatible with a sufficient amount of gravitational pseudorandomness serves as strong motivation for our conjecture that typical states do not have naked singularities that survive the classical limit. We emphasize that our proof of Cryptographic Censorship in fact provides a guarantee for event horizon formation whenever the CFT time evolution is exponentially hard to learn; our focus in this article was on complexity of learning as a result of pseudorandomness, but any such sufficient source of complexity would result in an equivalent theorem under the same proof. Below we will elaborate on potential applications of our results and discuss some aspirational expansions thereof. 

\subsubsection*{Towards a Converse}  Since information escapes from the black hole interior under inclusion of quantum corrections, there is no obvious definition of a quantum event horizon (since all information eventually reaches $\mathscr{I}$ and causal structures are only emergent in the classical limit). It is tempting to use our results to \textit{define} quantum event horizons --- and by extension quantum black holes (i.e., at finite $N$): states undergoing gravitationally pseudorandom time evolution with a semiclassical bulk dual in the large-$N$ limit.  
A definition, however, would require a proof of a converse to Cryptographic Censorship: a statement that CFT states with a semiclassical emergent bulk featuring a horizon typically exhibit gravitational pseudorandomness. A proof of such a converse would additionally facilitate a complete resolution of the teleology problem in AdS. The tentative link between chaos, thermalization, and pseudorandomness~\cite{Haa91, Pag93a, HayPre07, RobYos16,Cho21,WenCho22} could then potentially be used to understand the horizon thermodynamics\footnote{As opposed to the thermodynamics of the apparent horizon, which are by now well-understood~\cite{EngWal17b, EngWal18, BouCha19}.} of highly dynamical black holes and the statistical interpretation of the Hawking area theorem, which remains elusive~\cite{EngWal17a, EngFis18}. 

\subsubsection*{Gravitational Pseudorandomness from a Bulk Quantum Computer?}
The reader might wonder: what if we put a quantum computer in the bulk,
generating a large amount of pseudorandomness directly from dynamics of the bulk
fields? Must this incur a horizon? If the quantum computer does not incur violations of EFT in the causal wedge, then our theorem applies and indeed, a gravitationally pseudorandom computer must incur a horizon. If the quantum computer generates gravitational pseudorandomness in such a way that the causal wedge is no longer described by EFT --- e.g., if geometric invariants such as the volume scale inversely with $G_{N}$ as in the ``quantum volatile'' spacetimes of~\cite{EngLiu23} --- then the assumptions of our theorem are violated, namely that the $N\rightarrow \infty$ limit is described by QFT in curved space, so a horizon is not required to form. 

It is interesting that bulk processes that source a large amount of pseudorandomness (e.g., a bulk quantum computer with the power to do gravitationally pseudorandom computations) and do not violate EFT in the causal wedge must typically collapse into black holes. Prior descriptions of black hole formation typically involve energetics; to our knowledge, this is the first proof establishing a causal relation between bulk quantum computations and black hole formation (though see, e.g., \cite{KubMay23} for a discussion of constraints on bulk quantum computations behind horizons). Interestingly, this appears to bear some relation to the proposal of~\cite{AkeEng22} that effective field theory requires both an energy and a complexity cutoff: a bulk process sourcing large amounts of pseudorandomness must result in high reconstruction complexity. While pseudorandomness is itself not particularly complex to implement, reconstruction of pseudorandom quantities is by definition of high complexity class. In this case, Cryptographic Censorship may act to protect a computationally bounded asymptotic observer from detecting large violations of effective field theory due to computational complexity by forcing formation of an event horizon, depending on where the event horizon forms.

\subsubsection*{Stringy and Cosmological Applications}

It would be interesting to consider stringy effects and ask if (and how) our theorem might apply to, e.g., the D1-D5 system, or orbifold singularities, which are horizonless (see, e.g., \cite{BenGiu16}). Exact BPS states have trivial dynamics: we expect that to produce chaos and pseudorandom dynamics we would need to sufficiently excite these states. If it is indeed the case that chaotic, pseudorandom time evolution is produced as a result of heavily exciting such states, then our theorem suggests that a horizon will appear. More generally, it is not clear that horizonless geometries have any relation to chaos, and furthermore such states are probably not typical in the measure concentration sense that we are using in this work. 

Let us also briefly comment on non-AdS cosmological spacetimes. In particular, it is interesting to consider why our results fail for flat cosmologies, which do not have horizons. An obvious complaint is that our AdS/CFT arguments should not necessarily apply. They do not apply, however, in an interesting manner: the relevant feature of AdS/CFT that fails here is the lack of a notion of unitary time evolution on the conformal boundary (which in this case is null). Our results crucially rely on a fundamental description of the system which is a \textit{unitary} quantum field theory. This in all likelihood does not apply to flat cosmologies. It does, however,  apply to AdS big-crunch cosmologies, and in those, we do indeed have a horizon. 
\begin{figure}
    \centering
     \subcaptionbox{}[.4\textwidth]{\includegraphics[scale=0.9]{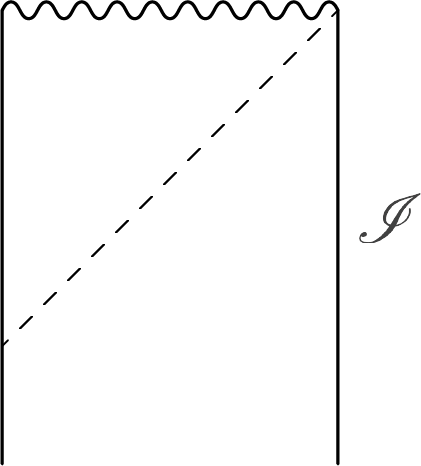}}
    \hfill
    \subcaptionbox{}[.4\textwidth]{\includegraphics[scale=0.9]{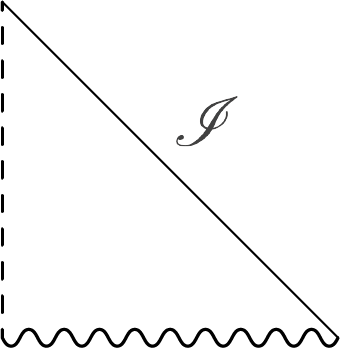}}
    \caption{The AdS-big-crunch cosmology (a) versus the flat big-bang cosmology (b). In the asymptotically flat case, there is no notion of unitary time evolution and no horizon. In the AdS case, we have both.     \label{fig:cosmologies}}
\end{figure}
See Fig.~\ref{fig:cosmologies} for a comparison. 

What about de Sitter? de Sitter cosmologies do have cosmological horizons, although those are not unambiguous in the same sense as AdS ones. There have been some suggestions (most recently~\cite{CotStr22, CotKri23}) that de Sitter time evolution --- insofar as such a notion can be defined --- is well-modeled by an isometry rather than a unitary. A generalization from pseudorandom unitaries to pseudorandom isometries was recently proposed by~\cite{AnaGul23}. It would be interesting to see if our results can be generalized to accommodate this notion and potentially have some bearing on de Sitter black holes and cosmologies; in particular, if there is any sense in which the relaxation to isometries could be mapped to the observer-dependence of de Sitter horizons. 

\subsubsection*{Some Open Questions} Cryptographic Censorship paves the way for several interesting new investigations. We have expanded on a few of those above. We briefly enumerate a few others here:
\begin{enumerate}
    \item The fact that gravitational pseudorandomness associated to a quantum computer in the bulk spacetime must result in a horizon suggests a relationship between backreaction and computational power. Does such a relation between energy and computation exist, perhaps analogous to a gravitational version of Lloyd's bound~\cite{Llo00}?

    \item Are singularities in general in fact associated with large amounts of pseudorandomness?\footnote{See \cite{KatRen23} for a unrelated attempt to connect singularities to holographic CA/CV-complexity conjectures. } This idea features prominently in our models of timelike naked singularities, but has traditionally appeared in the context of postselection models of black hole singularities and dynamics. 
   
    \item Our definition of the size of a singularity admits a middle class of singularities that disappear in the strict large-$N$ limit but which could accommodate gravitational pseudorandomness in code subspaces constructed from $O(1)$ time bands. It would be interesting to understand the significance, if any, and frequency of semi-Planckian naked singularity portions. 

    \item Explicit constructions of pseudorandom unitary ensembles to date have not been found. Still, as assumed in this work, we expect explicit examples of AdS/CFT to be well modeled by pseudorandom dynamics. Can we use complexity theoretic conjectures and aspects of AdS/CFT to demonstrate that holographic quantum systems like the Sachdev--Ye--Kitaev (SYK) model actually generate a pseudorandom unitary ensemble at long enough times? 

    \item Is there anything we can infer about code subspace-dependent cryptographic conditions/constructions necessary for pseudorandomness from the holographic manifestation of our results? That is, does the existence of subspaces without horizons (e.g., the microcanonical window around the vacuum) have any meaningful implication for cryptography? 

\end{enumerate}

\section*{Acknowledgements}
It is a pleasure to thank Chris Akers, Jordan Cotler, Jordan Docter, Roberto Emparan, Daniel Harlow, Gary Horowitz, Hsin-Yuan (Robert) Huang, Andrea Puhm, Juan Maldacena, Arvin Shahbazi-Moghaddam, Natalie Paquette, Jon Sorce, Edward Witten, and Michelle Xu for valuable discussions. This work is supported in part by the Department of Energy under Early Career Award DE-SC0021886 (NE, \AA{}F, and LY) and QuantISED DE-SC0020360 contract no. 578218 (NE), by the Heising-Simons Foundation under grant no. 2023-4430 (NE and AL), the Sloan Foundation (NE), the Templeton Foundation via the Black Hole Initiative (NE and EV), the Gordon and Betty Moore Foundation (EV), the Packard Foundation (AL), and the MIT department of physics (NE). The opinions expressed in this publication are those of the authors and do not necessarily reflect the views of the John Templeton or Moore Foundations.

\appendix 

\section{Proof of Theorem \ref{thm:learning-holography}} \label{app:proof-hol-learning} 
In this appendix, we prove Theorem \ref{thm:learning-holography}, introduced in Sec.~\ref{sec:complearnholog}. We begin with a definition of randomized $K$-Lipschitz, which we will require our algorithms to satisfy. 
\begin{defn}[Randomized $K$-Lipschitz]\label{defn:randomLipschitz}
An algorithm $\Alg$ is said to be randomized $K$-Lipschitz if it satisfies the following condition: there exists a constant $K>0$ such that 
    \begin{equation}
        \avgover{\substack{\widehat{U}_1\gets \Alg^{U_1}\\\widehat{U}_2\gets \Alg^{U_2}}} || \widehat{U}_1 - \widehat{U}_2||_\infty \leq K ||U_1 - U_2||_\infty~.
    \end{equation}
When the algorithm is deterministic, i.e., it maps a given unitary $U$ to a single output $\widehat{U}$, we will just refer to this property as $K$-Lipschitz.
\end{defn}
\noindent In addition, here we will work with operators that are exactly norm-preserving, i.e., $||\widehat{U}||_\infty = 1$. We leave it to the mathematician to confirm that this result applies equally well to approximately norm-preserving operators. 

We now repeat Theorem~\ref{thm:learning-holography} for the reader's convenience.

\learningholography*

\begin{proof}  

We prove 1. and 2. consecutively for Hilbert spaces of any finite (but sufficiently large) dimension $d$, where we always have in mind that $d$ scales with $\kappa$. Note that in the theorem above, and in the main body of the paper, we specialized to a code subspace of dimension $d = 2^\kappa$ with $\kappa = \poly (N)$; here we will leave $d$ general.

\paragraph{1. Fidelity for typical pseudorandom unitaries.}
We start by proving that the bound on the averaged fidelity in Theorem~\ref{thm:learning} also holds for \emph{typical} Haar random unitaries and states, and subsequently prove this for pseudorandom unitaries and states. To do so, we employ techniques from measure concentration. An introduction to the theory of measure concentration, including numerous references, is given in Appendix B of \cite{AkeEng22}. Here we only need the following lemma for measure concentration on the unitary group. 

\begin{lem}[Meckes (Corollary 17 in \cite{MecMec13})]\label{thm:Meckes}
Let $f: U(\unidim) \to \mathbb{R}$ be $K$-Lipschitz in the sense that $|f(U_1) - f(U_2)| \leq K ||U_1 - U_2||_2$, with $||X||_2 \equiv \sqrt{\tr X^\dagger X}$. Then in the Haar measure on $U(\unidim)$ we have the following concentration inequalities for any $\epsilon > 0$: 
    \begin{equation} \label{eq:measure-concentration}
    \begin{aligned} 
        \Pr[f \geq \braket{f}  +\epsilon] \leq \exp{\left(-\frac{\epsilon^2 \unidim}{12 K^2}\right)}~, \\
        \Pr[f \leq \braket{f}  - \epsilon] \leq \exp{\left(-\frac{\epsilon^2 \unidim}{12 K^2}\right)}~, \\
        \Pr[|f - \braket{f} | \geq \epsilon] \leq 2 \exp{\left(-\frac{\epsilon^2 \unidim}{12 K^2}\right)}~.
    \end{aligned}
    \end{equation}
\end{lem}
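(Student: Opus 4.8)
The statement is Meckes' concentration inequality (Corollary 17 of \cite{MecMec13}), so the plan is to indicate the route by which it is established rather than to reprove it from scratch. The strategy is to pass from the $\|\cdot\|_2$-Lipschitz hypothesis to a purely Riemannian-geometric concentration statement. Equip $U(\unidim)$ with the bi-invariant Riemannian metric induced by the inner product $\langle X, Y\rangle = \operatorname{Re}\tr(X^\dagger Y)$ on the Lie algebra $\mathfrak{u}(\unidim)$, and let $d_g$ be the associated geodesic distance. One checks (a short computation with one-parameter subgroups) that the chordal distance is dominated by the geodesic distance, $\|U_1 - U_2\|_2 \leq d_g(U_1, U_2)$, so any $f$ that is $K$-Lipschitz with respect to $\|\cdot\|_2$ is automatically $K$-Lipschitz with respect to $d_g$. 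It therefore suffices to prove sub-Gaussian concentration of the normalized Haar measure against $d_g$-Lipschitz functions.

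The curvature input is as follows. Write $\mathfrak{u}(\unidim) = \mathfrak{su}(\unidim) \oplus \mathbb{R}\cdot iI$, so that near the identity $U(\unidim)$ looks like $SU(\unidim)$ times the central circle $U(1)$. On $SU(\unidim)$ the bi-invariant metric is Einstein with Ricci curvature bounded below by a positive multiple of $\unidim$; by the Bakry--\'Emery criterion this yields a logarithmic Sobolev inequality with constant of order $\unidim$, and Herbst's argument then gives, for any $K$-Lipschitz $g$ on $SU(\unidim)$, a bound of the form $\Pr[\,g \geq \langle g\rangle + \epsilon\,] \leq \exp(-c\,\epsilon^2 \unidim / K^2)$. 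The central $U(1)$ factor is flat and so contributes nothing to the Ricci bound; this is exactly the delicate point, handled in \cite{MecMec13} by separately controlling the fluctuation of the coset-averages of $f$ along the $U(1)$ direction (which are constrained by the $SU(\unidim)$-averaging and cannot saturate the crude Lipschitz estimate). Assembling the $SU(\unidim)$ concentration with the $U(1)$ estimate, and bookkeeping the constants carefully, produces the one-sided bound $\Pr[\,f \geq \langle f\rangle + \epsilon\,] \leq \exp\!\left(-\epsilon^2 \unidim/(12K^2)\right)$ claimed for $U(\unidim)$.

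The remaining two inequalities are immediate corollaries: the lower-tail bound is the upper-tail bound applied to $-f$, which is also $K$-Lipschitz, and the two-sided bound $\Pr[\,|f - \langle f\rangle| \geq \epsilon\,] \leq 2\exp\!\left(-\epsilon^2 \unidim/(12K^2)\right)$ follows by a union bound over the two one-sided events.

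The main obstacle, as flagged above, is the flat central $U(1)$ summand of $\mathfrak{u}(\unidim)$: a naive curvature argument only gives $\operatorname{Ric} \geq 0$ on all of $U(\unidim)$, which is useless, so one genuinely needs the additional observation that $K$-Lipschitz functions vary slowly along this direction relative to the scale at which the bound becomes non-trivial, together with a careful accounting that does not lose the factor of $\unidim$ in the exponent or the precise constant $12$. Reconstructing this carefully is precisely the content of Meckes--Meckes, so for our purposes we invoke their Corollary 17 as a black box, having noted the route by which it is obtained.
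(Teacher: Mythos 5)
The paper offers no proof of this lemma at all: it is imported verbatim as Corollary 17 of \cite{MecMec13}, so your decision to invoke that corollary as a black box is exactly what the paper does. Your sketch of the underlying argument (Bakry--\'Emery log-Sobolev plus Herbst on the positively curved $SU(\unidim)$ part, with the flat central circle handled separately via the slow variation of coset averages --- in Meckes--Meckes this rests on the $\unidim$-fold covering $S^1\times SU(\unidim)\to U(\unidim)$, which confines the circle parameter to an arc of length $2\pi/\unidim$) is a faithful outline of the cited proof and goes beyond what the paper records.
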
 
For the purposes of Theorem~\ref{thm:learning-holography}, we are interested in applying this to the function $F=F(U\ket{\psi},\widehat{U}\ket{\psi})= |\braket{\psi|U^{\dagger} \widehat{U} |\psi}|^2$, where we first consider Haar random $U$ and $\ket{\psi}$, i.e., $U \gets \mu$ and $\ket{\psi} \gets \mu$, and furthermore $\widehat{U}$ is the guess that the algorithm $\Alg^U$ outputs. Thus, our task is to show that the function $F$ obeys the requirements for Lemma~\ref{thm:Meckes} to hold, such that fluctuations around the average are exponentially suppressed. That is, we need to check that $F$ is $K$-Lipschitz for some fixed $K$. 

Note that the algorithm's output $\widehat{U}$ depends on $U$. This dependence is complicated: the quantum algorithm $\Alg^U$, as a result of, e.g., intrinsic randomness in the outcome of measurements during the learning process, generally outputs a $\widehat{U}$ that is not a deterministic function of $U$. In the following, we will start by considering instead a deterministic algorithm, i.e., the algorithm is some function $\Alg(U)$ that maps every unitary $U$ to some norm-preserving operator $\widehat{U}$. 

We then prove the following lemma: 

\begin{lem}\label{lem:measure-concentration-unitaries}
Let $\Alg (U)$ denote any deterministic $K$-Lipschitz function that maps every unitary $U$ to some norm-preserving operator $\widehat{U}$. Let $\ket{\psi}$ and $\ket{\varphi}$ be any two fixed states in the Hilbert space $\Hil$. Then the function $F(U) \equiv F(U\ket{\psi}, \Alg (U)\ket{\varphi})$ is $(2 + 2K)$-Lipschitz in $U$.
\end{lem}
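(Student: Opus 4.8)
The plan is to obtain the bound by a short chain of triangle inequalities, using only submultiplicativity of the operator norm, the unit norms $\|U\|_\infty = \|\Alg(U)\|_\infty = 1$, the $K$-Lipschitz hypothesis on $\Alg$, and the elementary estimate $|a^2 - b^2| = |a-b|\,|a+b|$. Write $\widehat U_j \equiv \Alg(U_j)$ and set $h(U) \equiv \braket{\psi| U^\dagger \Alg(U) |\varphi}$, so that $F(U) = |h(U)|^2$. Since $U$ is unitary, $\Alg(U)$ is (exactly) norm-preserving, and $\ket{\psi},\ket{\varphi}$ are unit vectors, the Cauchy--Schwarz inequality gives $|h(U)| \leq \|U^\dagger\|_\infty \|\Alg(U)\|_\infty \leq 1$ for every $U$.

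The first step is to control $|h(U_1) - h(U_2)|$. Inserting and subtracting $\braket{\psi| U_2^\dagger \widehat U_1 |\varphi}$ and applying the triangle inequality yields
\begin{equation*}
|h(U_1) - h(U_2)| \;\leq\; \big|\braket{\psi| (U_1^\dagger - U_2^\dagger)\widehat U_1 |\varphi}\big| + \big|\braket{\psi| U_2^\dagger(\widehat U_1 - \widehat U_2) |\varphi}\big|.
\end{equation*}
For unit vectors, $|\braket{\psi|A|\varphi}| \leq \|A\|_\infty$, so the first term is at most $\|U_1^\dagger - U_2^\dagger\|_\infty \|\widehat U_1\|_\infty = \|U_1 - U_2\|_\infty$, and the second is at most $\|U_2\|_\infty \|\widehat U_1 - \widehat U_2\|_\infty \leq K\|U_1 - U_2\|_\infty$ by the $K$-Lipschitz property of $\Alg$. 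Hence $|h(U_1) - h(U_2)| \leq (1+K)\|U_1 - U_2\|_\infty$.

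To conclude, since $|h(U_1)|, |h(U_2)| \in [0,1]$, the reverse triangle inequality gives
\begin{equation*}
|F(U_1) - F(U_2)| = \big|\,|h(U_1)|^2 - |h(U_2)|^2\,\big| = \big|\,|h(U_1)| - |h(U_2)|\,\big|\,\big(|h(U_1)| + |h(U_2)|\big) \leq 2\,|h(U_1) - h(U_2)| \leq (2+2K)\|U_1 - U_2\|_\infty,
\end{equation*}
which is exactly $(2+2K)$-Lipschitz continuity of $F$ in the operator norm. Since $\|X\|_\infty \leq \|X\|_2$, the same bound holds a fortiori with $\|\cdot\|_2$ on the right-hand side, which is the form in which Meckes' Lemma~\ref{thm:Meckes} is applied to $f = F$ downstream.

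I do not expect a genuine obstacle here, since the argument is a string of triangle inequalities; the only points needing care are (i) invoking the $K$-Lipschitz hypothesis at the right place — this is precisely what tames the otherwise uncontrolled $U$-dependence of $\widehat U = \Alg(U)$ — and (ii) bookkeeping on which matrix norm is in play, noting that the operator-norm Lipschitz bound we prove is the stronger statement and implies the Hilbert--Schmidt version used in Lemma~\ref{thm:Meckes}. The restriction to deterministic $\Alg$ in this lemma is harmless: the randomized case follows by averaging over $\widehat U_1 \gets \Alg^{U_1}$ and $\widehat U_2 \gets \Alg^{U_2}$ and using convexity together with Definition~\ref{defn:randomLipschitz}.
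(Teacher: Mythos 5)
Your proof is correct and reaches the paper's exact constant, but by a noticeably more elementary route. The paper first rewrites $F(U)$ as a trace on the doubled Hilbert space $\Hil\otimes\Hil$ involving the swap operator $\swap$, applies H\"older's inequality against $\varphi\otimes\psi$, and then runs the triangle-inequality decomposition at the level of tensor products of operators, reducing to $2\,\|\Alg(U_1)^\dagger U_1 - \Alg(U_2)^\dagger U_2\|_\infty$ before invoking the $K$-Lipschitz hypothesis. You instead factor $F(U)=|h(U)|^2$ with the scalar amplitude $h(U)=\braket{\psi|U^\dagger\Alg(U)|\varphi}$, prove that $h$ is $(1+K)$-Lipschitz by the same cross-term insertion, and then use $\bigl||a|^2-|b|^2\bigr|\leq 2|a-b|$ for $|a|,|b|\leq 1$. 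The two factors of $2$ arise in morally equivalent ways (your $|h(U_1)|+|h(U_2)|\leq 2$ versus the paper's splitting of $A\otimes A - B\otimes B$ into two cross terms), and both arguments establish the stronger $\|\cdot\|_\infty$-Lipschitz bound before relaxing to $\|\cdot\|_2$ for Meckes' lemma. What your version buys is brevity and the avoidance of the doubled-Hilbert-space and H\"older machinery; what the paper's version buys is a template that transfers more directly to the variant in its Lemma~\ref{lem:measure-conc-states-unitaries}, where the Haar-random object is the state rather than the unitary and the quantity being bounded is again naturally an operator difference conjugated on both sides. Your closing remarks --- that the operator-norm bound implies the Hilbert--Schmidt one, and that the randomized case follows by averaging via Definition~\ref{defn:randomLipschitz} and convexity --- match exactly how the paper handles those points downstream.
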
 

\begin{proof}
Our aim is to show that 
    \begin{equation} \label{eq:lipschitz}
        |F(U_1) - F(U_2)| \leq (2+2K) ||U_1 - U_2||_2~.
    \end{equation}    
To show \eqref{eq:lipschitz}, first, notice that we can rewrite
    \begin{equation}
    \begin{aligned}
        F(U) &= F(U\ket{\psi},\Alg (U) \ket{\varphi}) = \tr\left[\Alg (U)\ket{\varphi}\bra{\varphi}\Alg (U)^\dagger U\ket{\psi}\bra{\psi}U^\dagger\right] \\
        &= \tr\left[ (\Alg (U)^\dagger U)^\dagger \, \varphi \, \Alg (U)^\dagger U \, \psi \right] \\
        &= \tr \left[\((\Alg (U)^\dagger U)^\dagger  \otimes \Alg (U)^\dagger U \)(\varphi \otimes \psi)\swap \right]~.
    \end{aligned}       
    \end{equation}
Here we used $\varphi = \ket{\varphi}\bra{\varphi}$ and $\psi = \ket{\psi}\bra{\psi}$. In the third line we introduced a second copy of the Hilbert space; $\swap$ is the swap operator. For readability, in what follows, we will abbreviate $\Alg (U_i) = \Alg_i$. Then, using H\"older's inequality, 
    \begin{equation}
    \begin{aligned}
        |F(U_1) - F(U_2)| &= \left|\tr\left[ \((\Alg_1^\dagger U_1)^\dagger  \otimes \Alg_1^\dagger U_1 - (\Alg_2^\dagger U_2)^\dagger \otimes \Alg_2^\dagger U_2\)(\varphi \otimes \psi)\swap  \right] \right| \\
        &\leq ||\varphi \otimes \psi ||_1 || \swap \((\Alg_1^\dagger U_1)^\dagger \otimes \Alg_1^\dagger U_1 - (\Alg_2^\dagger U_2)^\dagger \otimes \Alg_2^\dagger U_2\) ||_\infty~.
    \end{aligned}
    \end{equation}
Since $\varphi \otimes \psi$ is a state, its trace norm is 1. Now we can combine submultiplicativity of the operator norm with the fact that $\swap$ is unitary, such that 
    \begin{equation}
    \begin{aligned}
        || \swap \big((\Alg_1^\dagger U_1)^\dagger  \otimes \Alg_1^\dagger U_1 -& (\Alg_2^\dagger U_2)^\dagger \otimes \Alg_2^\dagger U_2\big) ||_\infty \\
        &\leq || \swap||_\infty || \big((\Alg_1^\dagger U_1)^\dagger  \otimes \Alg_1^\dagger U_1 - (\Alg_2^\dagger U_2)^\dagger \otimes \Alg_2^\dagger U_2\big) ||_\infty \\
        &= || (\Alg_1^\dagger U_1)^\dagger  \otimes \Alg_1^\dagger U_1 - (\Alg_2^\dagger U_2)^\dagger \otimes \Alg_2^\dagger U_2||_\infty~.
    \end{aligned}
    \end{equation}
Recall that we want to show that this is upper bounded by some $K' ||U_1 - U_2||_2$. Rewrite the last line as
    \begin{align} \label{eq:intermediate-point-for-PRS} \nonumber
        &|| (\Alg_1^\dagger U_1)^\dagger  \otimes \Alg_1^\dagger U_1 - (\Alg_1^\dagger U_1)^\dagger  \otimes \Alg_2^\dagger U_2 + (\Alg_1^\dagger U_1)^\dagger  \otimes \Alg_2^\dagger U_2 - (\Alg_2^\dagger U_2)^\dagger \otimes \Alg_2^\dagger U_2||_\infty \\ \nonumber
        &\quad \leq  || (\Alg_1^\dagger U_1)^\dagger  \otimes \Alg_1^\dagger U_1 - (\Alg_1^\dagger U_1)^\dagger  \otimes \Alg_2^\dagger U_2 ||_\infty + ||(\Alg_1^\dagger U_1)^\dagger  \otimes \Alg_2^\dagger U_2 - (\Alg_2^\dagger U_2)^\dagger \otimes \Alg_2^\dagger U_2||_\infty \\
        &\quad = 2 ||\Alg_1^\dagger U_1 - \Alg_2^\dagger U_2||_\infty~. 
    \end{align}
In the second line we used the triangle inequality; in the third we used {$||A\otimes B|| = ||A||\, ||B||$}, and for unitaries $||U||_\infty = 1$, and we are we are working with exactly norm-preserving operators $\widehat{U}$ such that also $||\widehat{U}||_\infty = 1$. We now again use the triangle inequality and submultiplicativity of the operator norm to rewrite the final line as
    \begin{equation}
    \begin{aligned}
        2 ||\Alg_1^\dagger U_1 - \Alg_2^\dagger U_2||_\infty &= 2 ||\Alg_1^\dagger U_1 - \Alg_2^\dagger U_1 + \Alg_2^\dagger U_1 - \Alg_2^\dagger U_2||_\infty \\ 
        &\leq 2|| \Alg_1^\dagger - \Alg_2^\dagger ||_\infty ||U_1||_\infty + 2 || \Alg_2^\dagger||_\infty || U_1 - U_2||_\infty \\ 
        &\leq (2+2K) ||U_1 - U_2||_\infty \\
        &\leq (2+2K) ||U_1 - U_2||_2~.
    \end{aligned}
    \end{equation}
In the third line we used that $\Alg (U)$ is $K$-Lipschitz for some $K$, and finally we used $||\cdot ||_\infty \leq ||\cdot ||_2$. This completes the proof that the function $F(U) \equiv F(U\ket{\psi}, \Alg (U)\ket{\varphi})$ is $(2+2K)$-Lipschitz:
    \begin{equation}
        |F(U_1) - F(U_2)| \leq (2+2K) ||U_1-U_2||_2~.
    \end{equation}
\end{proof}
Then, by Lemma \ref{thm:Meckes} and Lemma~\ref{lem:measure-concentration-unitaries}, we have the following measure concentration result: that for fraction $\geq 1 - e^{- \epsilon^2 \unidim/(48(1+K)^2)}$ of Haar random unitaries,
\begin{equation} \label{eq:measure-conc-for-Haar-unitaries}
     F(U\ket{\psi},\Alg (U)\ket{\varphi}) - \avgover{V\gets\mu} \[F(V\ket{\psi},\Alg (V
     ) \ket{\varphi}) \] < \epsilon ~.
\end{equation}
We now wish to extend this to include measure concentration over Haar random states. To this end, we prove the following lemma. 
\begin{lem}\label{lem:measure-conc-states-unitaries}
    Let $\Alg (U)$ denote any deterministic $K$-Lipschitz function that maps every unitary $U$ to some norm-preserving operator $\widehat{U}$. For fraction $\geq 1 - e^{-\epsilon^2 d/(48(1+K)^2)}$ of Haar random unitaries $U\gets\mu$ and fraction $\geq 1 - e^{-\epsilon^2 d/192}$ of Haar random states $\ket{\psi} \gets\mu$, we have
        \begin{equation}
            F(U\ket{\psi},\Alg (U) \ket{\psi}) - \avgover{\substack{V\gets\mu \\ \ket{\varphi}\gets\mu}} \[F(V\ket{\varphi},\Alg (V) \ket{\varphi}) \] < 2\epsilon ~.
        \end{equation}
\end{lem}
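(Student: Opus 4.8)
The plan is a two-step concentration argument that piggybacks on the unitary-concentration estimate \eqref{eq:measure-conc-for-Haar-unitaries} already proved in Lemma~\ref{lem:measure-concentration-unitaries}: it remains only to concentrate the \emph{unitary-average} of $F$ over the choice of state. Define
\begin{equation*}
    g(\ket{\psi}) \;\equiv\; \avgover{V\gets\mu}\left[ F\big(V\ket{\psi},\Alg(V)\ket{\psi}\big) \right],
\end{equation*}
and decompose, for any fixed $U$ and $\ket{\psi}$,
\begin{equation*}
    F\big(U\ket{\psi},\Alg(U)\ket{\psi}\big) - \avgover{\substack{V\gets\mu\\ \ket{\varphi}\gets\mu}}\left[ F\big(V\ket{\varphi},\Alg(V)\ket{\varphi}\big) \right]
    = \Big( F\big(U\ket{\psi},\Alg(U)\ket{\psi}\big) - g(\ket{\psi}) \Big) + \Big( g(\ket{\psi}) - \avgover{\ket{\varphi}\gets\mu} g(\ket{\varphi}) \Big),
\end{equation*}
where, by Fubini, $\avgover{\ket{\varphi}\gets\mu} g(\ket{\varphi})$ is exactly the doubly-averaged quantity in the statement. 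The first parenthesis is $<\epsilon$ for a fraction $\geq 1 - e^{-\epsilon^2 d/(48(1+K)^2)}$ of Haar random $U$, by \eqref{eq:measure-conc-for-Haar-unitaries} with the fixed state there taken to be $\ket{\varphi}=\ket{\psi}$.

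For the second parenthesis I would establish that $g$ is $4$-Lipschitz on the unit sphere of $\Hil$ with respect to $\|\cdot\|_2$, and then apply measure concentration on the sphere. The Lipschitz bound follows from a uniform pointwise estimate: for fixed $V$, write $M_V \equiv V^\dagger \Alg(V)$, so $\|M_V\|_\infty \le 1$ (using $\|V\|_\infty = \|\Alg(V)\|_\infty = 1$) and $F(V\ket{\psi},\Alg(V)\ket{\psi}) = |\bra{\psi}M_V\ket{\psi}|^2$. A telescoping step gives $|\bra{\psi_1}M_V\ket{\psi_1} - \bra{\psi_2}M_V\ket{\psi_2}| \le 2\|\ket{\psi_1}-\ket{\psi_2}\|_2$, and since both amplitudes have modulus $\le 1$ this squares to $|F(V\ket{\psi_1},\Alg(V)\ket{\psi_1}) - F(V\ket{\psi_2},\Alg(V)\ket{\psi_2})| \le 4\|\ket{\psi_1}-\ket{\psi_2}\|_2$, uniformly in $V$. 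Averaging over $V$ preserves this constant, so $g$ is $4$-Lipschitz. Invoking the concentration inequality of Lemma~\ref{thm:Meckes} in its analogue form for a Lipschitz function on the unit sphere, with Lipschitz constant $L = 4$ (so that $12L^2 = 192$), we get $|g(\ket{\psi}) - \avgover{\ket{\varphi}\gets\mu} g(\ket{\varphi})| < \epsilon$ for a fraction $\geq 1 - e^{-\epsilon^2 d/192}$ of Haar random $\ket{\psi}$.

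Combining the two steps: fix any $\ket{\psi}$ in the good fraction for the state concentration; then for the good fraction of unitaries $U$ relative to that $\ket{\psi}$, both parentheses are $<\epsilon$, so their sum is $<2\epsilon$, which is precisely the claim.

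I expect the main obstacle to be the Lipschitz estimate for $g$ --- specifically, confirming that the average over $V$ does not spoil the Lipschitz constant and, crucially, that the constant is \emph{independent of the Hilbert space dimension} $d$, which is what allows the sphere-concentration bound to be nontrivial; this relies on the uniform operator-norm bounds $\|V\|_\infty = \|\Alg(V)\|_\infty = 1$ and breaks without them. A secondary subtlety worth noting in the write-up is that the ``good'' set of unitaries furnished by \eqref{eq:measure-conc-for-Haar-unitaries} depends on the fixed state $\ket{\psi}$, so the statement must be parsed as: first a measure-$\geq 1 - e^{-\epsilon^2 d/192}$ set of states, and then, for each such state, a measure-$\geq 1 - e^{-\epsilon^2 d/(48(1+K)^2)}$ set of unitaries adapted to it --- no single state-independent set of ``good'' unitaries is needed.
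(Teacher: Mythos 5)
Your argument is correct and uses exactly the same two ingredients as the paper's proof --- the $(2+2K)$-Lipschitz dependence on $U$ from Lemma~\ref{lem:measure-concentration-unitaries} and a $4$-Lipschitz dependence on the state (which the paper obtains by writing $\ket{\psi}=V\ket{\psi_0}$ and applying Lemma~\ref{thm:Meckes} on $\mathbf{U}(d)$, a pullback your uniform estimate $|F(V\ket{\psi_1},\cdot)-F(V\ket{\psi_2},\cdot)|\leq 4\|\ket{\psi_1}-\ket{\psi_2}\|_2$ immediately supplies, so you need not invoke a separate sphere version of the concentration inequality) --- the only difference being that your telescoping conditions on the state first and concentrates the unitary-average $g(\ket{\psi})$ over states, whereas the paper conditions on $U$ first and concentrates the state-average over unitaries. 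Both orderings yield the same joint bound $\geq 1-e^{-\epsilon^2 d/(48(1+K)^2)}-e^{-\epsilon^2 d/192}$ that is actually used downstream, and your closing remark about the $\ket{\psi}$-dependence of the good set of unitaries is the correct parsing of the quantifiers.
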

\begin{proof}
 Consider first fixing some $U$ and $\Alg (U)$. Let $V$ be a Haar random unitary and $\ket{\psi_0}$ some basis state, such that $V \ket{\psi_0}$ is a Haar random state. We show that the function
    \begin{equation}
        F(V) \equiv F(U V\ket{\psi_0},\Alg (U) V \ket{\psi_0}) = F(V^\dagger \Alg( U)^\dagger U V \ket{\psi_0},\ket{\psi_0})
    \end{equation}
is $4$-Lipschitz. Following the same methods as in the proof of Lemma~\ref{lem:measure-concentration-unitaries}, we find the analogue of \eqref{eq:intermediate-point-for-PRS}:
    \begin{equation}
        |F(V_1) - F(V_2)| \leq 2||V_1^\dagger \Alg (U)^\dagger UV_1 - V_2^\dagger \Alg (U)^\dagger U V_2 ||_\infty ~. 
    \end{equation}
Then, by the triangle inequality, 
    \begin{align}\nonumber
        |F(V_1) - F(V_2)| &\leq 2 ||V_1^\dagger \Alg (U)^\dagger UV_1 - V_2^\dagger \Alg (U)^\dagger UV_1 + V_2^\dagger \Alg (U)^\dagger U V_1 - V_2^\dagger \Alg (U)^\dagger U V_2 ||_\infty\\ \nonumber
        &\leq 2 ||V_1^\dagger - V_2^\dagger||_\infty ||\Alg(U)^\dagger U V_1||_\infty + 2 ||V_2^\dagger \Alg (U)^\dagger U||_\infty ||V_1 - V_2||_\infty~ \\
        &\leq 4 ||V_1 - V_2||_2~.
    \end{align}
Note that in the last line we used again that $\Alg (U)$ is exactly norm-preserving, such that $||\Alg (U)^\dagger U V_i||_\infty = 1$, as well as $||\cdot||_\infty \leq ||\cdot||_2$. 

By Lemma~\ref{thm:Meckes}, for fraction $\geq 1 - e^{-\epsilon^2 d/192}$ of $\ket{\psi} \equiv V\ket{\psi_0} \gets \mu$, 
\begin{equation} \label{eq:measure-conc-Haar-states}
    F(U\ket{\psi},\Alg (U) \ket{\psi}) - \avgover{\substack{\ket{\varphi}\gets\mu}} \[F(U\ket{\varphi},\Alg (U) \ket{\varphi}) \] < \epsilon ~.
\end{equation}
We now combine this with measure concentration for Haar random unitaries. Starting from \eqref{eq:measure-conc-for-Haar-unitaries} for two (fixed) states $\ket{\varphi}$, and then averaging over $\varphi \gets \mu$, we have for fraction $\geq 1 - e^{- \epsilon^2 \unidim/48(1+K)^2}$ of $U \gets \mu$,
    \begin{equation}
        \avgover{\ket{\varphi}\gets\mu} \[F(U\ket{\varphi},\Alg (U)\ket{\varphi}\] - \avgover{\substack{V\gets\mu \\ \ket{\varphi}\gets\mu}} \[F(V\ket{\varphi},\Alg (V) \ket{\varphi}) \] < \epsilon~.
    \end{equation}
Now add and subtract $F(U\ket{\psi},\Alg (U)\ket{\psi})$. This gives
    \begin{equation}
    \begin{aligned}
        F(U\ket{\psi}, \Alg (U) \ket{\psi}) &- \avgover{\substack{V\gets\mu \\ \ket{\varphi}\gets\mu}} \[F(V\ket{\varphi},\Alg (V) \ket{\varphi}) \] \\
        & < \epsilon +  F(U\ket{\psi}, \Alg (U) \ket{\psi}) - \avgover{\ket{\varphi}\gets\mu} \[F(U\ket{\varphi},\Alg (U)\ket{\varphi})\] ~.
    \end{aligned}
    \end{equation}
Now we can use \eqref{eq:measure-conc-Haar-states} to bound the right hand side to $2\epsilon$ for fraction $\geq 1 - e^{-\epsilon^2 d/192}$ of $\ket{\psi}\gets\mu$. This concludes the proof of Lemma~\ref{lem:measure-conc-states-unitaries}.
    
\end{proof}

To prove Theorem~\ref{thm:learning-holography}, it remains to extend the above results to (1) non-deterministic functions $\Alg(U)$, e.g., quantum algorithms $\Alg^U$, and (2) to pseudorandom unitaries and states. We first briefly address point (1). We can run the proofs of Lemma~\ref{lem:measure-concentration-unitaries} and Lemma~\ref{lem:measure-conc-states-unitaries} analogously for $\widehat{U}_i\gets \Alg^{U_i}$ until we need a smoothness property of $\widehat{U}_i$. There, instead of $K$-Lipschitz, we use the randomized $K$-Lipschitz property in Definition \ref{defn:randomLipschitz}. Furthermore, using the inequality
    \begin{equation}
        |\avg\, F(U_1) - \avg\, F(U_2)| \leq \avg \ |F (U_1) - F(U_2)|~,
    \end{equation}
we have that the function $F(V)=\avgover{\widehat{V} \gets \Alg^V}\[F(V\ket{\psi},\widehat{V}\ket{\psi})\]$ is itself $(2+2K)$-Lipschitz. This allows us to extend our measure concentration results to this function $F(V)$ where previously we proved them for $F(U)=F(U\ket{\psi},\widehat{U}\ket{\psi})$. 

Then we arrive at the following result: that for fraction $\geq 1 - e^{-\epsilon^2 d/(48(1+K)^2)}$ of $U \gets \mu$, and for fraction $\geq 1 - e^{-\epsilon^2 d/192}$ of $\ket{\psi} \gets \mu$ (alternatively, we can think of this as $\geq 1 - e^{-\epsilon^2 d/192} - e^{-\epsilon^2 d/(48(1+K)^2)}$ fraction of $U\gets \mu$ and $\ket{\psi}\gets\mu$), we have the following concentration bound: 
    \begin{equation} \label{eq:measure-concentration-Haar}
        \avgover{\substack{\widehat{U} \gets \Alg^U}}  \[ F(U\ket{\psi} , \widehat{U} \ket{\psi}) \] - \avgover{\substack{V\gets\mu\\\ket{\varphi}\gets\mu}} \[ \avgover{\widehat{V} \gets \Alg^V} \[F(V\ket{\varphi},\widehat{V}\ket{\varphi})\]\] < 2\epsilon ~.
    \end{equation}

Finally, we address point (2) above, and show that to preserve indistinguishability, the above result must also hold for \emph{pseudorandom} unitaries and states. Intuitively, if the fluctuations around the Haar random averaged fidelity are exponentially suppressed, but the fluctuations around the pseudorandom averaged fidelity are not, then one could measure these fluctuations and thereby distinguish pseudorandom unitaries and states from Haar random ones, breaking the pseudorandomness. 

We proceed to formally show this below. 
\begin{lem}\label{lem:measure-conc-PR}
Let $\kappa = \poly(N)$ be the security parameter. Let $\PRUens$ be any pseudorandom unitary (PRU) ensemble and let $\PRSens$ be any pseudorandom state (PRS) ensemble. Let $\widehat{U}$ be the output of the learning algorithm $\Alg^U$, which we assume to be (randomized) K-Lipschitz, and where $U\gets\PRUens$. Then the following concentration inequality holds, in the measure induced on $\PRUens$ and $\PRSens$, for any $\epsilon > 0$, with probability $\geq 1- \deltaMCU - \deltaMCS  - \negl' (N)$ over $U\gets\PRUens$ and $\ket{\psi} \gets \PRSens$,
\begin{equation}
\begin{aligned}
    \avgover{\substack{\widehat{U} \gets \Alg^U}} F\big(U\ket{\psi},\widehat{U}\ket{\psi}\big) - \avgover{\substack{V\gets\PRUens \\ \ket{\varphi}\gets \Psi \\
    \widehat{V} \gets \Alg^V}} \big[F\big(V\ket{\varphi},\widehat{V} \ket{\varphi}\big)\big] < 2\epsilon + \negl(N)~.
\end{aligned}
\end{equation} 
\end{lem}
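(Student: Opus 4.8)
The plan is to transfer the Haar-random concentration bound \eqref{eq:measure-concentration-Haar} to the pseudorandom setting by a standard hybrid/contradiction argument: if the concentration statement failed with non-negligible probability for $\PRUens$ and $\PRSens$, then an efficient algorithm could detect this failure and thereby distinguish the pseudorandom ensembles from Haar random, contradicting Definitions~\ref{defn:PRU} and~\ref{defn:PRS}. First I would define, for a unitary $U$ and state $\ket{\psi}$, the ``deviation quantity''
\begin{equation*}
\Delta(U,\ket{\psi}) \equiv \avgover{\widehat{U}\gets\Alg^U} F\big(U\ket{\psi},\widehat{U}\ket{\psi}\big) - A~,
\end{equation*}
where $A$ is the appropriate average fidelity (over the relevant ensemble). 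The key point is that $\Delta$ is estimable to inverse-polynomial accuracy by an efficient procedure: $\Alg$ is efficient by hypothesis, $F(U\ket{\psi},\widehat{U}\ket{\psi}) = |\braket{\psi|\widehat{U}^\dagger U|\psi}|^2$ is an overlap that can be estimated via a SWAP test (or Hadamard test) on polynomially many copies of $\ket{\psi}$ together with oracle access to $U$, and the average over $\widehat{U}\gets\Alg^U$ is approximated by repeating the learning algorithm polynomially many times. This is exactly the kind of task permitted to the distinguisher in the PRS/PRU definitions (oracle access to $U$ plus $\poly(\kappa)$ copies of $\ket{\psi}$).

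Next I would build the distinguisher. Fix a threshold, say $2\epsilon$. Consider the algorithm that, given oracle access to $U$ and copies of $\ket{\psi}$, estimates $\Delta(U,\ket{\psi})$ to accuracy $\sim 1/\poly(N)$ and outputs $1$ iff the estimate exceeds $2\epsilon - \tfrac12\cdot(\text{tolerance})$. By \eqref{eq:measure-concentration-Haar}, when $U\gets\mu$ and $\ket{\psi}\gets\mu$, the true $\Delta$ is below $2\epsilon$ except on a fraction $\leq \deltaMCS + e^{-\epsilon^2 d/(48(1+K)^2)}$, which is super-polynomially (indeed doubly-exponentially, since $d=2^\kappa$ with $\kappa=\poly(N)$) small; hence the distinguisher outputs $1$ with only negligible probability on Haar input. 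If, for contradiction, the claimed concentration inequality failed for the pseudorandom ensembles on a fraction $\geq p(N)$ for some non-negligible $p$, then on pseudorandom input the distinguisher would output $1$ with probability $\geq p(N) - \negl(N)$, which is non-negligible. The difference in acceptance probabilities between the Haar and pseudorandom cases is then non-negligible, contradicting computational indistinguishability (applied to the pair $\PRUens$ vs.\ Haar unitaries, and $\PRSens$ vs.\ Haar states, chained through one hybrid step each). One also has to account for the gap between the Haar average $A_{\mathrm{Haar}}$ appearing in \eqref{eq:measure-concentration-Haar} and the pseudorandom average $A_{\mathrm{PR}}$ in the statement: these differ by at most $\negl(N)$, again because otherwise estimating the average fidelity would itself distinguish the ensembles. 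This accounts for the extra $\negl(N)$ in both the probability bound ($\negl'(N)$) and the inequality itself.

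The main obstacle — and the place to be careful rather than merely routine — is verifying that the deviation quantity $\Delta(U,\ket{\psi})$ is genuinely efficiently estimable with the access model of the PRS/PRU definitions, including the subtlety that the distinguisher must be handed the \emph{same} $U$ (via oracle) and many copies of the \emph{same} $\ket{\psi}$; this is compatible with Definition~\ref{defn:PRU} (oracle $\Alg^{U_k}$) combined with Definition~\ref{defn:PRS} ($\ket{\psi_k}^{\otimes\poly(\kappa)}$), but the chaining requires a two-step hybrid (first replace the Haar unitary by the PRU holding the state Haar random, then replace the Haar state by the PRS), and at each step one must ensure the reduction algorithm stays efficient and that the randomized-$K$-Lipschitz property of $\Alg$ is preserved under this use. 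A secondary technical point is controlling the statistical-estimation error in approximating both the $\widehat{U}\gets\Alg^U$ average and the overlap $F$ so that it is dominated by $1/\poly(N)$ and can be absorbed into the $\negl(N)$; this is a routine Chernoff/Hoeffding argument but should be stated. Everything else is bookkeeping with the exponentially small fractions $\deltaMCU$ and $\deltaMCS$ carried over verbatim from the Haar case.
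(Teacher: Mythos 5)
Your proposal is correct and follows essentially the same route as the paper's proof: a contradiction argument in which an efficient distinguisher estimates the fidelity $F(U\ket{\psi},\widehat{U}\ket{\psi})$ (given oracle access to $U$, polynomially many copies of the state, and the relevant average handed as advice), compares it to a threshold, and would break the pseudorandomness of $\PRUens$ or $\PRSens$ if concentration failed non-negligibly --- with the Haar/pseudorandom gap in the average fidelity controlled by a separate negligible bound (Lemmas 7 and 8 of~\cite{YanEng23} in the paper). The only cosmetic difference is that the paper fixes a single worst-case state $\ket{\psi_0}$ by averaging before building the unitary distinguisher, whereas you phrase this as a two-step hybrid; these are equivalent.
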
 

\begin{proof} To prove Lemma~\ref{lem:measure-conc-PR}, we first extend \eqref{eq:measure-concentration-Haar} to pseudorandom unitaries, and then to pseudorandom states. 

First, let $\PRSens$ be any ensemble of states. For ease of notation, we write
\begin{equation}
\beta := \avgover{\substack{U\gets\mu \\ \ket{\psi}\gets\Psi \\ \widehat{U} \gets \Alg^U}} \[F(U\ket{\psi},\widehat{U}\ket{\psi}) \]
\end{equation}
and
\begin{equation}
\beta' := \avgover{\substack{U\gets\PRUens \\ \ket{\psi}\gets\Psi \\ \widehat{U} \gets \Alg^U}} \[F(U\ket{\psi},\widehat{U}\ket{\psi})\]~.
\end{equation}
By Lemma~7 of~\cite{YanEng23}, $\left|\beta - \beta' \right| \leq \eta(\HN)$ where $\eta$ is a negligible function. 

Assume towards contradiction that there exists a pseudorandom unitary ensemble $\PRUens $, any distribution of states $\PRSens$, and a positive polynomial $p$ such that
    \begin{equation} \label{eq:PR-assump-to-contradict}
        \Pr_{\substack{U\gets\PRUens \\\ket{\psi}\gets\Psi}}\[\avgover{\substack{\widehat{U} \gets \Alg^U}}F\(U\ket{\psi},\widehat{U}\ket{\psi}\) - \beta'  \geq \epsilon + \eta (N) ~\] \geq  \deltaMCU + \frac{1}{p(N)}~.
    \end{equation} 
If the above equation holds, then there is a state $\ket{\psi_0}$ such that this equation holds with $\ket{\psi}:=\ket{\psi_0}$:
    \begin{equation}
        \Pr_{\substack{U\gets\PRUens}}\[\avgover{\substack{\widehat{U} \gets \Alg^U}}F\(U\ket{\psi_0},\widehat{U}\ket{\psi_0}\) - \beta'  \geq \epsilon + \eta (N) ~\] \geq  \deltaMCU  + \frac{1}{p(N)}~.
    \end{equation} 
We now construct an efficient algorithm $\Dist$ that uses oracle access to the pseudorandom unitary $U$ and also can run the learning algorithm $\Alg^U$.\footnote{Note that the distinguisher can depend on $\HN$, as ours does here, since e.g.\ $\ket{\psi_0}$ depends on $\HN$.} We show that if our assumption \eqref{eq:PR-assump-to-contradict} is true, then $\Dist$ can distinguish between $U\gets\mu$ and $U\gets\PRUens$, thus breaking the pseudorandomness of $\PRUens$.

The distinguisher $\Dist$ gets $\poly(N)$ copies of $\ket{\psi_0}$, and is also handed $\beta$.\footnote{The careful reader might worry that handing the distinguisher (multiple copies of) \emph{any} state in $\Hil$, or $\beta$, breaks the computational boundedness of the algorithm. Note, however, that although the algorithms we consider are limited to a polynomial number of
queries in $\log\dim\Hil$, each query can be for any state in $\Hil$ that the algorithm a) can prepare or b) has been given as ``advice". This advice can also include quantities like $\beta$. Note that the learning algorithm we consider in, e.g, Theorem~\ref{thm:learning-holography} gets only \emph{one} copy of the physical state $\ket{\psi}$; the distinguisher that we describe here is a different algorithm and can receive multiple copies as advice.} Then it proceeds as follows:
\begin{enumerate} 
\item $\Dist$ runs $\Alg$ and simulates the oracle for $\Alg$: it receives each query from $\Alg$, sends the query to its oracle $U$, and returns the response to $\Alg$.
\item $\Dist$ gives $\ket{\psi_0}$ to $\Alg$ and receives back $\widehat{U}\ket{\psi_0}$. It also gives $\ket{\psi_0}$ to its oracle $U$ and gets back $U\ket{\psi_0}$. 
\item $\Dist$ repeats this $\poly(N)$ times and then 
uses a quantum algorithm (e.g., \cite{WanZha21}) to efficiently estimate $\beta'':= F(U\ket{\psi_0},\widehat{U}\ket{\psi_0})$ up to some additive error $\hat{\epsilon}>0$, i.e., $|\beta''_{\rm est} - \beta''| \leq \hat{\epsilon}$.
\item If $\beta''_{\rm est} -\beta \geq \te$, where $\te > \hat{\epsilon}$, then $\Dist$ outputs $1$. 
\end{enumerate}  

Note that $\Dist$ runs in $\poly(\HN)$-time since it can efficiently query the oracle, and furthermore the quantum algorithm for fidelity estimation is efficient.\footnote{For example, the quantum algorithm of \cite{WanZha21} runs in $\poly(1/\hat{\epsilon})$-time.}

We now show that $\Dist$ distinguishes between $U\gets \mu$ and $U\gets \PRUens$. For Haar random $U \gets \mu$ as the oracle, 
    \begin{equation}
    \begin{aligned}
        \Pr_{U \gets \mu} \[ \Dist = 1\] &= \Pr \[\beta''_{\rm est} - \beta \geq \te \] = \Pr \[ \beta'' - \beta \geq \te - (\beta''_{\rm est} - \beta'') \] \\
        &\leq \Pr \[\beta'' - \beta \geq \te - \hat{\epsilon} \]\\
        &\leq  e^{-(\te - \hat{\epsilon})^2 d/48(1+K)^2} ~,
    \end{aligned}
    \end{equation}
where we used $- \hat{\epsilon} \leq \beta''_{\rm est} - \beta'' \leq \hat{\epsilon}$ in the second line, and \eqref{eq:measure-conc-for-Haar-unitaries} (concentration in the Haar measure) for $\epsilon = \te - \hat{\epsilon} >0$ in the last line. 

For pseudorandom $U \gets \PRUens$ as the oracle, 
    \begin{equation}
    \begin{aligned}
        \Pr_{U \gets \PRUens} \[ \Dist = 1\] &= \Pr \[\beta''_{\rm est} - \beta \geq \te \] = \Pr \[ \beta'' - \beta' \geq \te - (\beta''_{\rm est} - \beta'') -(\beta'-\beta) \]  \\
        & \geq  \Pr \[\beta'' - \beta' \geq \te + \hat{\epsilon} + \eta(N) \] \\
        &\geq  e^{-(\te + \hat{\epsilon})^2 d/48(1+K)^2} + \frac{1}{p(N)}~,
    \end{aligned}
    \end{equation}
where we used $-\eta (N) \leq \beta - \beta' \leq \eta (N)$ in the third line, and our assumption \eqref{eq:PR-assump-to-contradict} in the last line for $\epsilon = \te + \hat{\epsilon} $. 

Therefore, 
    \begin{equation}
    \begin{aligned}
        \Pr_{U \gets \PRUens}[\Dist = 1] - \Pr_{U\gets\mu}[\Dist = 1]  \geq \frac{1}{p(N)} +  e^{-\frac{(\te + \hat{\epsilon})^2 d}{48(1+K)^2}}  - e^{-\frac{(\te - \hat{\epsilon})^2 d}{48(1+K)^2}}  ~.
    \end{aligned}
    \end{equation}
This shows that $\Dist$ distinguishes between $U \gets \mu$ and $U \gets \PRUens$. Since $\Dist$ runs in $\poly (N)$ time, this contradicts the pseudorandomness of $\PRUens$ (Definition~\ref{defn:PRU}). Thus based on the pseudorandomness of $\PRUens$, we must have that for fraction $\leq \deltaMCU  + \eta'(N)$ of $U\gets\PRUens$
    \begin{equation} \label{eq:measure-conc-PRU}
      \avgover{\substack{\widehat{U} \gets \Alg^U}}F\(U\ket{\psi},\widehat{U}\ket{\psi}\) - \avgover{\substack{V\gets\PRUens \\ \ket{\varphi}\gets \Psi \\
    \widehat{V} \gets \Alg^V}} F\(V\ket{\varphi},\widehat{V}\ket{\varphi}\) \geq \epsilon  + \eta(N) ~,
    \end{equation}
where $\eta'(N)$ is negligible. In particular, this holds when $\ket{\psi}\gets\PRSens$ is Haar random. 

Thus, we have proven a measure concentration result for fluctuations around the average fidelity where we average over pseudorandom unitaries (and Haar random states). An analogous proof holds for fluctuations around the average fidelity where we average over pseudorandom states (and any distribution of unitaries $\mathcal{D}$; in particular, it holds for Haar random unitaries $U\gets \mu$). It requires Lemma 8 in \cite{YanEng23}. This would result in the following measure concentration result: that for fraction $\leq \deltaMCS  + \tilde{\eta}'(N)$ of $\ket{\psi} \gets \PRSens$,
    \begin{equation} \label{eq:measure-conc-PRS}
       \avgover{\substack{\widehat{U} \gets \Alg^U}}F\(U\ket{\psi},\widehat{U}\ket{\psi}\) - \avgover{\substack{V\gets\PRUens \\ \ket{\varphi}\gets \Psi \\
    \widehat{V} \gets \Alg^V}} \[F\(V\ket{\varphi},\widehat{V}\ket{\varphi}\)\]\geq \epsilon  + \tilde{\eta}(N) ~,
    \end{equation}
where we explicated that the negligible functions need not be the same as in \eqref{eq:measure-conc-PRU}. Combining \eqref{eq:measure-conc-PRU} and \eqref{eq:measure-conc-PRS} concludes the proof of Lemma~\ref{lem:measure-conc-PR}. 
\end{proof}

We will now use Lemma~\ref{lem:measure-conc-PR} to prove the first part of Theorem~\ref{thm:learning-holography}.
Consider any constant $\epsilon >0 $. Consider any sufficiently large value of $\HN\in\Nat$ so that Theorem~\ref{thm:learning} from~\cite{YanEng23} holds. Let $\PRU \gets\PRUens $ and $\ket{\PRS }\gets\PRSens $ denote the pseudorandom unitary and state. Let $\widehat{\PRU}$ denote the operator that $\Alg^{\PRU}$ produces. To connect to the learning result (Theorem~\ref{thm:learning}), we average over the output of the algorithm, $\widehat{U}\ket{\psi} \gets \Alg^{U}(\ket{\psi})$. We abbreviate this by $\widehat{U}\gets \Alg^{U}$. By Lemma~\ref{lem:measure-conc-PR}, we have that with probability $\geq 1-\deltaMCU - \deltaMCS - \negl'(N)$, 
    \begin{equation}\label{eqn:typical-finite-N} 
    \begin{aligned}
         \avgover{\widehat{U}\gets \Alg^{U} } \[ F\big(\PRU\ket{\PRS},\widehat{\PRU}\ket{\PRS}\big) \]  - \avgover{\substack{V\gets\PRUens\\ \ket{\varphi}\gets\PRSens\\ \widehat{V}\gets\Alg^{V}}}\[F\big(V\ket{\varphi},\widehat{V}\ket{\varphi}\big)\] < 2\epsilon + \negl(N) ~.
    \end{aligned}
    \end{equation} 
We call $(\PRU,\ket{\PRS})$ \emph{typical} if the above equation holds. The complexity of learning result (Theorem~\ref{thm:learning}) gives us an upper bound on the average fidelity. Thus we have that for any typical $(\PRU,\ket{\PRS})$,
    \begin{equation}\label{eqn:typical-fidelity-finite-N} 
        \avgover{\widehat{U}\gets \Alg^{U} } \[ F\(\PRU\ket{\PRS}, \widehat{\PRU}\ket{\PRS}\) \] \leq 1-\const+2\epsilon + \negl(N)~.
    \end{equation}
This shows the first part of the theorem (Equation~\eqref{eqn:theorem-typical-fidelity}).  \\

\noindent \emph{Haar random unitaries and states.} ~ Note that we obtain a stronger result for Haar random unitaries and states: namely, that for fraction $\geq 1-\deltaMCU - \deltaMCS$ of $U\gets\mu$ and $\ket{\psi}\gets\mu$, 
    \begin{equation}\label{eqn:typical-fidelity-finite-N-Haar} 
        \avgover{\widehat{U}\gets \Alg^{U} } \[ F\(\PRU\ket{\PRS}, \widehat{\PRU}\ket{\PRS}\) \] \leq 1-\const+2\epsilon~.
    \end{equation}
Note that for $d = 2^\kappa$ with $\kappa = \poly (N)$, this means that the above bound is violated for only a fraction that is doubly-exponentially suppressed in $N$.

\paragraph{2. Distinguishing operator.} We now show the second part of the
theorem. The relationship between fidelity and distinguishing measurements is a known result; for any finite dimensional Hilbert space, this is given in Appendix B in \cite{YanEng23}. Thus, for any typical $(\PRU,\ket{\PRS})$, there exists an operator $\operator$ (for a given  $\widehat{U}$) for which
    \begin{equation}
        \avgover{\widehat{U}\gets \Alg^{U} } \left|\tr(\operator \widehat{\PRU} \PRS \widehat{\PRU}^\dagger)-\tr(\operator\PRU \PRS \PRU^\dagger)\right|\geq 2\const-4\epsilon  - \negl(N)  ~.     
    \end{equation}
This concludes the proof of Theorem~\ref{thm:learning-holography}. 

\end{proof}

\section{Proof of Corollary~\ref{cor:large-N-limit} }\label{app:proof-large-N}
In this appendix, we prove Corollary~\ref{cor:large-N-limit}, the existence of a semiclassical distinguishing operator. We repeat Corollary~\ref{cor:large-N-limit} here for convenience:

\largeNlimitredo*

\begin{proof} 
By the assumptions on the sequences $\{\PRU_N\ket{\PRS_N}\}_\HN$ and $\{\widehat{\PRU}_N\ket{\PRS_N}\}_\HN$, for every sufficiently large $\HN$, $\PRU_N\ket{\PRS_N}$ and $\widehat{\PRU}_N\ket{\PRS_N}$ satisfy 1. and 2. in Theorem~\ref{thm:learning-holography}.

The result \eqref{eqn:typical-fidelity-finite-N} by definition guarantees that 
\begin{align}\label{eqn:learningbound}
  \avgover{\hat{\xi}}\ \left|\braket{\xi|V_N^{\dagger} V_N| \hat{\xi}}\right| \leq 1- \alpha - \negl(N).
\end{align}
Taking the limit of both sides and using the definition of an asymptotically isometric code in Assumption \ref{ass:isometry}, we see that the two bulk states $\ket{\xi}$ and $\ket{\hat{\xi}}$ obey the inequality as well. Note that to commute the average over $\hat{\xi}$ with the $N \to \infty$ limit we had to use the dominated convergence theorem.

By the Fuchs--van de Graaf inequality, the overlap of the two states $\braket{\xi|\hat{\xi}}$ upper and lower bounds their one-norm distance:
\begin{align}
    1-\sqrt{F(\xi, \hat{\xi})} \leq \frac{1}{2} || \xi - \hat{\xi}||_1 \leq \sqrt{1-F(\xi, \hat{\xi})}~.
\end{align}
When $\xi$ is a pure state, the lower bound can be strengthened:
\begin{align}
\frac{1}{2} ||\xi-\hat{\xi}||_1 \geq 1-F(\xi,\hat{\xi})~.
\end{align} 
Averaging this inequality over $\hat{\xi}$ and using \eqref{eqn:learningbound}, we see that 
\begin{align}\label{eqn:avgonenormbound}
   \avgover{\hat{\xi}} \ ||\xi - \hat{\xi}||_1 \geq 2\alpha.
\end{align}
The one-norm distance between two states on a general algebra $\mathcal{B}(\mathcal{H})$ may be written as 
\begin{equation}\label{eqn:onenorm}
\begin{aligned}
    ||\xi-\hat{\xi}||_{1} = \sup_{a \in \mathcal{B}(\mathcal{H}): ||a||_{\infty}\leq
    1}|\tr(a \xi) - \tr(a \hat{\xi})|.
\end{aligned}
\end{equation}
where the operators in the supremum are restricted to be self-adjoint.
Combining \eqref{eqn:onenorm} and \eqref{eqn:avgonenormbound}, we see that there exists some operator $\operator \in \mathcal{B}(\mathcal{H}),$ depending on $\widehat{U}$ and therefore $\hat{\xi}$, such that 
\begin{equation}
\begin{aligned}
    \avgover{\widehat{\xi}}\big|\tr(\operator\widehat{\xi}) - \tr(\operator\xi)\big| \geq 2\alpha~.
\end{aligned}
\end{equation} 
In particular, we get that there exists an operator whose expectation values for $\hat{\xi}$ and $\xi$ must differ by a constant. This concludes the proof of Corollary~\ref{cor:large-N-limit}.
\end{proof}

\bibliographystyle{jhep}
\bibliography{all}

\end{document}